\keywords{Weak distributive laws, semilattices, semimodules, convexity}
\tikzset{
	bend angle=45,
	auto,
	baseline=(current  bounding  box.center),
}
\let\temp\varphi
\let\varphi\phi
\let\phi\temp
\newcommand{\sect}{\mathcal S} 
\let\temp\S
\let\S\sect
\let\sect\temp
\newcommand{\bb}[1]{[\![ #1 ]\!]}
\newcommand{\zero}{\varepsilon}
\newcommand{\Cppo}{\mathbb{CPPO}}
\newcommand{\C}{\mathbb{C}}
\newcommand{\Set}{\mathbb S \mathrm{et}}
\newcommand{\Rel}{\mathbb R \mathrm{el}}
\newcommand{\Rp}{\mathbb R^+}
\newcommand{\EM}[1]{\mathbb {EM}(#1)}  \newcommand{\Kl}[1]{\mathbb K \mathrm l(#1)} \newcommand{\F}[1]{F^{#1}} \newcommand{\U}[1]{U^{#1}} \newcommand{\FK}[1]{F_{#1}} \newcommand{\UK}[1]{U_{#1}} \newcommand{\op}[1]{{#1}^\mathit{o}} 
\newcommand{\N}{\mathbb N}
\newcommand{\natset}[1]{\underline{#1}} \newcommand{\Bool}{\mathbb B \mathrm{ool}}
\renewcommand{\P}{\mathcal P} \newcommand{\Pf}{\mathcal P_f} \newcommand{\pow}{\mathcal P}
\newcommand{\etaP}{\eta^\P}
\newcommand{\muP}{\mu^\P}
\newcommand{\Graph}[1]{\Gamma(#1)}
\newcommand{\pre}[1]{{#1}^{-1}}
\newcommand{\mon}[1]{\mathcal S} \newcommand{\etaS}{\eta^\S}
\newcommand{\muS}{\mu^\S}
\DeclareMathOperator{\supp}{\mathit{supp}}
 \newcommand{\convpow}{\mathcal P_c}
\newcommand{\ConvPow}[2]{\mathcal P_c^{#2} #1}  \newcommand{\convclos}[2]{\overline{#1}^{#2}}
\newcommand{\choice}[1]{\mathfrak c(#1)} 
\newcommand{\convpowS}{{\mathcal P_c \mathcal S}} \newcommand{\convpowfS}{{\mathcal P_{fc} \mathcal S}} 
\newcommand{\id}[1]{\mathit{id}_{#1}}
\newcommand{\Alg}{\mathbb{A}\mathrm{lg}} \newcommand{\CSL}{\mathcal{CSL}} \newcommand{\LSM}{\mathcal{LSM}}
\newcommand{\SL}{\mathcal{SL}}
\newcommand{\D}{\mathcal{D}}
\newcommand{\ext}[1]{\mathrm{ext}(#1)}
\newcommand{\Sup}{\bigsqcup}
\newcommand{\card}[1]{|#1|}
\newcommand{\Sharp}[1]{{#1}^\sharp}
\newcommand{\tvect}[2]{\ensuremath{\Bigl(\negthinspace\begin{smallmatrix}#1\\#2\end{smallmatrix}\Bigr)}}
\newcommand{\myeq}[1]{\stackrel{(#1)}{=}}
\def\eg{{\em e.g.}}
\def\cf{{\em cf.}}
\def\ie{{\em i.e.\ }}
\begin{document}
		
		\title{Convexity via Weak Distributive Laws}
		\titlecomment{An extended abstract of this paper has appeared in the proceedings of the FoSSaCS 2021 conference, see~\cite{bonchi_combining_2021}.}
		
		\author[F.~Bonchi]{Filippo Bonchi\lmcsorcid{0000-0002-3433-723X}}
		\author[A.~Santamaria]{Alessio Santamaria\lmcsorcid{0000-0001-7683-5221}}
		
		\address{Dipartimento di Informatica, Università di Pisa, Largo B.~Pontecorvo 3, 56127 Pisa, Italy}	
		\email{filippo.bonchi@unipi.it, alessio.santamaria@di.unipi.it}  
		
		
		
		
		
		
		\begin{abstract}
			\noindent We study the canonical weak distributive law $\delta$ of the powerset monad over the semimodule monad for a certain class of semirings containing, in particular, positive semifields. For this subclass we characterise $\delta$ as a convex closure in the free semimodule of a set. Using the abstract theory of weak distributive laws, we compose the powerset and the semimodule monads via $\delta$, obtaining the monad of convex subsets of the free semimodule.
		\end{abstract}
		
		\maketitle
		
		\section*{Introduction}
		
		Convexity plays an important role in several areas of computer science ranging from linear programming to program analysis. In the former, the space of solutions of an optimisation problem is convex and this property is crucially exploited to find optimal solutions. In program analysis, the space of stores possibly encountered during the execution of a program is usually \emph{not} convex but, to make their analysis feasible, one can exploit abstract interpretation~\cite{10.1145/512950.512973} and consider an abstracted program dealing with convex spaces. Convexity also appears in the study of concurrent probabilistic systems that usually behave both in probabilistic and in non-deterministic ways. Since the very beginning of this field, the behaviour of such systems is often assumed to be convex (see \eg~convex bisimilarity~\cite{DBLP:conf/concur/SegalaL94,DBLP:journals/corr/Mio13}) and much emphasis has been given in finding algebras to properly deal with both non-determinism and probability~\cite{bandini2001axiomatizations,mislove2000nondeterminism,DBLP:journals/entcs/MisloveOW04,varacca2003probability}.
		
		\begin{table}[t]
			\centering
			\scalebox{0.88}{
				\begin{tabular}{|c|}
					\hline 
					\begin{tabular}{rcl||rclrcl}
						Semilattice & & & $S$-semimodule \\
						\hline
						$(x \sqcup y) \sqcup z$ & $\myeq{A}$ & $ x\sqcup(y \sqcup z)$  & $(x+y)+z$ &$\myeq{A}$& $x+(y+z)$ & $\lambda \cdot 0$&$\myeq{Z_s}$& $0$   \\
						$x\sqcup y $&$\myeq{C}$&$ y \sqcup x$     & $x+y $&$\myeq{C}$&$ y+x$     & $0_S \cdot x $&$\myeq{An_s}$&$ 0$  \\
						$x\sqcup \bot $&$\myeq{U}$&$ x$  & $x+0 $&$\myeq{U}$&$ x$  &   $1_S \cdot x $&$\myeq{U_s}$&$ x$ \\
						$x\sqcup x $&$\myeq{I}$&$ x$ & && & $(\lambda \mu) \cdot x $&$\myeq{\mathit{Co}_s}$&$ \lambda \cdot (\mu \cdot x)$  \\
						&& & && & $\lambda \cdot (x+y)$&$\myeq{D_s1}$& $\lambda \cdot x + \lambda \cdot y$ \\
						&& & && &  $ (\lambda +_S \mu) \cdot x$ &$\myeq{D_s2}$& $\lambda \cdot x + \mu \cdot x $ \\
						\hline
						\hline
					\end{tabular} \\
					\begin{tabular}{rclrcl}
						Distributivity \\
						\hline
						$\lambda \cdot \bot$ & $\myeq{D1}$ & $\bot \text{ for } \lambda\neq 0_S$ \quad & \quad $\lambda \cdot (x \sqcup y) $&$\myeq{D3}$&$(\lambda \cdot x) \sqcup (\lambda \cdot y)$ \\
						$x + \bot $&$\myeq{D2}$&$\bot$ \quad & \quad $x + (y \sqcup z) $&$\myeq{D4}$&$ (x + y) \sqcup (x + z)$ \\
					\end{tabular}\\
					\hline 
				\end{tabular}
			}
			\caption{The sets of axioms $E_{\SL}$ for semilattices (left), $E_{\LSM}$ for $S$-semimodules (right) and $E_{\D'}$ for their distributivity (bottom).}\label{tab:axiomsinitial}
		\end{table}
		Such algebras inspire the main intuition of this work, that is convexity arises somewhat naturally when combining non-deterministic and quantitative features. To give a first motivation of this intuition, we show now that convexity can be derived as a law from a few axioms. Consider Table~\ref{tab:axiomsinitial}: the axioms on the top left, forming the algebraic theory of semilattices, are the rules for non-determinism; those on the top right form the algebraic theory of semimodules for some fixed semiring $S$ that are often exploited to model different quantitative aspects; the four axioms at the bottom of the table \emph{distribute} the structure of semilattice over the semimodule one.
		As illustrated in Table~\ref{tab:derivation}, from these axioms one can derive the following law: \[
		x \sqcup y =  x \sqcup y \sqcup (\alpha \cdot x + \beta \cdot y) \text{ for all } \alpha, \beta \in S \text{ such that } \alpha+\beta=1\text{.}
		\]
		The above equality expresses convexity. Take indeed $S$ to be $\mathbb{R}^+$, the semiring of non negative reals, and consider the algebra freely generated by the two-element set $\{x_1,x_2\}$, \ie the algebra of terms with variables in $\{x_1,x_2\}$ built out of $\bot, \sqcup, 0, +, 1, \lambda \cdot$ and quotiented by the axioms in Table \ref{tab:axiomsinitial}.
		Because of the distributivity axioms, the general term is of the form $ (\lambda_1 \cdot x_1 + \mu_1 \cdot x_2) \sqcup \dots \sqcup (\lambda_n \cdot x_1 + \mu_n \cdot x_2)$. Now, we can interpret geometrically these terms as follows. The structure of semimodule allows to express single (two-dimensional) vectors, for instance the terms $x_1$, $x_2$, $x_1 + 3 \cdot x_2$ denote  the vectors $\tvect{1}{0}$, $\tvect{0}{1}$, $\tvect{1}{3}$; the join of two vectors is interpreted by forming a set which, as expressed by the above equality, must contain the two vectors and also all the vectors between them. For instance the term $x_1 \sqcup x_2$ is \emph{not} just the set of vectors $\{\tvect{1}{0}, \tvect{0}{1}\}$ but its convex closure $\{\alpha \cdot \tvect{1}{0} + \beta \cdot \tvect{0}{1}\mid \alpha +\beta=1\}$, \ie the segment whose vertices are $x_1$ and $x_2$. Similarly, $x_1 \sqcup x_2 \sqcup (x_1 + 3 \cdot x_2)$ is the set of all points enclosed within (and including) the triangle with vertices $\tvect{1}{0}$, $\tvect{0}{1}$, $\tvect{1}{3}$. More generally, as we will show at the end of this paper in Example~\ref{ex:segments and polytopes}, these terms are in one-to-one correspondence with convex polygons in the $\mathbb{R}^+$-semimodule $(\mathbb{R}^+)^2$.
		
		\begin{table}\small{
				\begin{tabular}{|c|}
					\hline \\
					$\begin{array}{rcl} x \sqcup y \sqcup (\alpha \cdot x + \beta \cdot y) &
						\myeq{*} &
						x\sqcup (\alpha \cdot y +\beta \cdot x) \sqcup (\alpha \cdot x + \beta \cdot y) \sqcup y \sqcup (\alpha \cdot x+\beta \cdot y)\\
						& \myeq{I} & x\sqcup (\alpha \cdot y +\beta \cdot x) \sqcup (\alpha \cdot x + \beta \cdot y) \sqcup y \\
						& \myeq{*} & x \sqcup y
					\end{array}$ \\
					\hline 
					\hline \\
					$\begin{array}{rcl}  (x \sqcup y) &\myeq{U_s} &
						1\cdot (x\sqcup y) \\
						& \myeq{\alpha+\beta=1}  &
						(\alpha+\beta)\cdot (x \sqcup y) 
						\\ &\myeq{D_s2}&
						\alpha\cdot (x \sqcup y ) + \beta \cdot (x \sqcup y) \\ 
						& \myeq{D3} &
						(\alpha \cdot x \sqcup \alpha \cdot y) + (\beta \cdot x \sqcup \beta \cdot y) \\ 
						&\myeq{D4} &
						((\alpha \cdot x \sqcup \alpha \cdot y) + \beta \cdot x) \sqcup ((\alpha \cdot x \sqcup \alpha \cdot y) + \beta \cdot y) \\ 
						&\myeq{D4}  &
						(\alpha \cdot x+\beta \cdot x) \sqcup (\alpha \cdot y +\beta \cdot x) \sqcup (\alpha \cdot x + \beta \cdot y) \sqcup (\alpha \cdot y + \beta \cdot y) \\ 
						& \myeq{D_s2} &
						(\alpha + \beta) \cdot x \sqcup (\alpha \cdot y +\beta \cdot x) \sqcup (\alpha \cdot x + \beta \cdot y) \sqcup (\alpha + \beta) \cdot y \\ 
						&\myeq{\alpha+\beta=1} &
						1 \cdot x \sqcup (\alpha \cdot y +\beta \cdot x) \sqcup (\alpha \cdot x + \beta \cdot y) \sqcup 1 \cdot y \\ 
						& \myeq{D_s1} &
						x\sqcup (\alpha \cdot y +\beta \cdot x) \sqcup (\alpha \cdot x + \beta \cdot y) \sqcup y
					\end{array}$\\
					\hline
			\end{tabular}}
			\caption{Derivation for the convexity law ($\alpha$ and $\beta$ are assumed to be such that $\alpha+\beta=1$). The steps marked with $(*)$ are justified by the derivation in the bottom.}\label{tab:derivation}
		\end{table}

		The aim of this paper is to make this intuition formal by using the language of category theory. Rather than considering the aforementioned algebraic theories~\cite{DBLP:journals/tcs/HylandPP06}, we work with the equivalent notion of \emph{monad}. Monads are nowadays common tools in theoretical computer science, as they embody different notions of computations~\cite{DBLP:journals/iandc/Moggi91}, like nondeterminism, side effects and exceptions. 
		
		We show that combining  somehow the finite powerset monad, $\P_f$, corresponding to the algebraic theory of semilattices, together with the monad $\S$ of semimodules, one obtains the monad of convex subsets that are \emph{finitely generated} (roughly, that arise as convex combinations of finitely many elements). Moreover, the algebraic theory corresponding to the composed monad is exactly the one depicted in Table \ref{tab:axiomsinitial}. The monad for arbitrary--not necessarily finitely generated--convex subsets can instead be achieved by combining the full powerset monad $\P$ with $\S$; the corresponding algebraic theory can be easily obtained from Table~\ref{tab:axiomsinitial} by replacing finite joins with arbitrary ones: see Table~\ref{table:otheraxioms}.

		We perform the combination of these monads in a principled way, as in the last decades many works proposed different methods to compose monads and related notions~\cite{DBLP:journals/tcs/HylandPP06,DBLP:journals/mscs/VaraccaW06,bohm_weak_2010,lack2004composing,cheng2011distributive}. Indeed, the standard approach of composing monads by means of \emph{distributive laws}~\cite{beck_distributive_1969} turned out to be somehow unsatisfactory. On the one hand, distributive laws do not exist in many relevant cases: see~\cite{DBLP:journals/entcs/KlinS18,zwart_no-go_2019} for some no-go theorems; on the other hand, proving their existence is error-prone: see~\cite{DBLP:journals/entcs/KlinS18} for a list of results that were mistakenly assuming the existence of a distributive law of the powerset monad over itself. 
		Nevertheless, some sort of weakening of the notion of distributive law--\eg~distributive laws of functors over monads~\cite{DBLP:journals/tcs/Klin11}--proved to be ubiquitous in computer science: they are GSOS specifications~\cite{turi1997towards}, they are sound coinductive up-to techniques~\cite{DBLP:conf/csl/BonchiPPR14} and complete abstract domains~\cite{DBLP:conf/lics/BonchiGGP18}. 
		In this paper we will exploit \emph{weak distributive laws} in the sense of~\cite{garner_vietoris_2020} that have been recently shown to be successful in composing the monads for nondeterminism and probability~\cite{goy_combining_2020}.
		
		\begin{table} 
			\centering
			\small
			\begin{tabular}{|c|}
				\hline 
				\begin{tabular}{c}
					Complete Semilattice \\
					\hline \\
					$\begin{array}{rcl}
						\Sup\limits_{i \in \{ 0\}} x_i  & \myeq{Si} &  x_0 \\[.5em]
						\Sup\limits_{j \in J}  x_j  & \myeq{CI} &  \Sup\limits_{i \in I}  x_{f(i)}  \text{ for all } f \colon I \to J \text{ surjective}  \\[.5em]
						\Sup\limits_{i \in I} x_i  & \myeq{AU} & \Sup\limits_{j \in J} \, \, \Sup\limits_{i \in f^{-1}\{j\}}  x_i   \text{ for all } f \colon I \to J 
					\end{array}$
				\end{tabular} \\
				\hline 
				\hline
				\begin{tabular}{c}
					Distributivity \\
					\hline \\
					$\begin{array}{rcl}
						\lambda \cdot \Sup\limits_{i \in I} x_i & \myeq{D1} & \Sup\limits_{i \in I} \lambda \cdot x_i \quad \text{for }\lambda \ne 0 \\  
						\Sup\limits_{i \in I} x_i + \Sup\limits_{j \in J} y_j & \myeq{D2} & \Sup\limits_{(i,j) \in I \times J} x_i + y_j
					\end{array}$
				\end{tabular} \\
				\hline
			\end{tabular}
			\caption{The sets of axioms $E_{\CSL}$ for complete semilattices (top) and $E_{\D}$ for distributivity  with $S$-semimodules (bottom). Note that the  axiom $(CI)$ of $E_{\CSL}$ generalises the usual idempotency and commutativity properties of finitary $\sqcup$, while $(AU)$ generalises associativity and unity of $\Sup\limits_{i \in \emptyset} x_i =\bot$. }\label{table:otheraxioms}
		\end{table}

		We start with the full powerset monad $\P$, rather than the finite one $\P_f$. In this way we can reuse several results in~\cite{clementino_monads_2014} that provide necessary and sufficient conditions on the semiring $S$ for the existence of a canonical weak~\cite{garner_vietoris_2020} distributive law  $\delta \colon \S \P \to \P \S$. 
		Our first contribution (Theorem~\ref{thm:delta for positive refinable semifields}) consists in showing that $\delta$ has a convenient alternative characterisation whenever the underlying semiring is a \emph{positive semifield}, a condition that is met for example by the semiring of non-negative reals $\mathbb{R}^+$ and by the semiring of Booleans.
		
		This characterisation allows us to give a handy definition of the \emph{canonical weak lifting} of $\P$ over $\EM{\S}$ (Theorem~\ref{thm:weaklift}) and to observe that this lifting is \emph{almost} the same as the monad $\mathcal{C} \colon \EM{\S} \to \EM{\S} $ defined by Jacobs in~\cite{jacobs_coalgebraic_2008} (Remark~\ref{remarkJacobs}): the only difference is the absence in $\mathcal{C}$ of the empty subset. This divergence becomes crucial when considering the composed monads, named $\mathcal{C}\mathcal M \colon \Set \to \Set$ in~\cite{jacobs_coalgebraic_2008} and $\convpowS\colon \Set \to \Set$ in this paper: the latter maps a set $X$ into the set of convex subsets of $\S X$, while the former additionally requires the subsets to be non-empty. It turns out that while the Kleisli category $\Kl{\mathcal{C}\mathcal M}$ is not $\mathbb{CPPO}$-enriched, a crucial condition in the setting of~\cite{hasuo_generic_2006} (on which~\cite{jacobs_coalgebraic_2008} is based) to define trace semantics within a coalgebraic framework, $\Kl{\convpowS}$ indeed is (Theorem~\ref{thm:Kleisli is CPPO enriched}).

		Composing monads by means of weak distributive laws is rewarding in many respects: here we exploit the fact that algebras for the composed monad $\convpowS$ coincide with $\delta$-algebras, namely algebras for both $\P$ and $\S$ satisfying a certain pentagonal law. One can extract from this law some distributivity axioms (in the bottom of Table \ref{table:otheraxioms}) that, together with the axioms for semimodules (algebras for the monad $\S$) and those for complete semilattices (algebras for the monad $\P$), provide an algebraic theory presenting the monad $\convpowS$ (Theorem \ref{thm:pres}).
		
		We conclude with the finite powerset monad $\P_f$. By replacing, in the above theory, complete semilattices with semilattices (algebras for the monad $\P_f$) one obtains a theory  presenting the monad $\convpowfS$ of finitely generated convex subsets (Theorem~\ref{thm:convpowfS is presented by (Sigma', E')}), which is formally defined as a restriction of the canonical $\convpowS$. The corresponding algebraic theory is exactly the one in Table~\ref{tab:axiomsinitial}.
		
		\medskip
		
		This article is an extend version of~\cite{bonchi_combining_2021}. 
		Beyond providing all proofs which could not be published in~\cite{bonchi_combining_2021} because of space limitations, we illustrate here a detailed example (Example~\ref{ex:working example delta=delta' part 2}) that gives a sketch of the proof of the fact that the canonical weak distributive law $\delta \colon \S\pow \to \pow\S$ can be expressed in terms of a convex closure when $S$ is a positive semifield and, in Remark~\ref{rem:convexity still present even if not positive semifield}, we explain that even when $S$ only satisfies the minimal assumptions of Theorem~\ref{thm:existence of delta}, convexity still plays a crucial role.
		Moreover, we discuss  at the end of Section~\ref{sec:finite joins and finitely generated convex sets}  how a natural candidate for a weak distributive law of $\Pf$ over $\S$ fails to work, thus justifying our approach via the full powerset monad.
		Finally, in Remark~\ref{rem:Kleisli vs Barr extensions} and Appendix~\ref{app:another weak distributive law}, we show the existence of a weak extension of the powerset functor to $\Rel$ that is \emph{not} locally monotone, and thus is not an extension in the sense of Barr~\cite{barr_relational_1970}. Barr extensions, if they exist, are unique, but this is not true for arbitrary extensions to $\Rel$. Since weak distributive laws are in bijective correspondence with these latter, simpler extensions, this shows that one could have various weak distributive laws of $\pow$ over a monad $T$.

		\subsection*{Notation}We assume the reader to be familiar with monads and their maps.	Given a monad $(M,\eta^M,\mu^M)$ on $\C$, $\EM M$ and $\Kl M$ denote, respectively, the Eilenberg-Moore category and the Kleisli category of $M$. The latter is defined as the category whose objects are the same as $\C$ and a morphism $f \colon X \to Y$ in $\Kl M$ is a morphism $f \colon X \to M(Y)$ in $\C$. We write $\U M \colon \EM M \to \C$ and $\UK M \colon \Kl M \to \C$ for the canonical forgetful functors, and $\F M \colon \C \to \EM M$, $\FK M \colon \C \to \Kl M$ for their respective left adjoints. Recall, in particular, that $\F M (X) = (X,\mu^M_X)$ and, for $f \colon X \to Y$, $\F M(f) = M(f)$. 
		
		If $f \colon X \to Y$ is in $\Set$, we denote its graph as $\Gamma(f) = \{(x,f(x)) \mid x \in X\}$, which is a morphism from $X$ to $Y$ in the category $\Rel$ of sets and relations. If $R \subseteq X \times Y$, then we write $\op R$ for its opposite relation given by $\op R=\{(y,x) \mid (x,y) \in R \}$.

		Given $n$ a natural number, we denote by $\natset n$ the set $\{ 1, \dots, n\}$. \section{(Weak) Distributive laws}
		
		Given two monads $S$ and $T$ on a category $\C$, is there a way to compose them to form a new monad $ST$ on $\C$? This question was answered by Beck~\cite{beck_distributive_1969} and his theory of \emph{distributive laws}, which are natural transformations $\delta \colon TS \to ST$ satisfying four axioms and that provide a canonical way to endow the composite functor $ST$ with a monad structure. We begin by recalling the classic definition. In the following, let $(T,\eta^T,\mu^T)$ and $(S,\eta^S,\mu^S)$ be two monads on a category $\C$.
		
		\begin{defi}\label{def:distributive law}
			A \emph{distributive law} of the monad $S$ over the monad $T$ is a natural transformation $\delta\colon TS \to ST$ such that the following diagrams commute.
			\begin{equation}\label{eqn:def distributive law}
				\begin{tikzcd}[ampersand replacement=\&,column sep=2em]
					TSS \ar[r,"\delta S"] \ar[d,"T\mu^S"'] \& STS \ar[r,"S\delta"] \& SST \ar[d,"\mu^S T"] \& \& TTS \ar[r,"T\delta"] \ar[d,"\mu^T S"'] \& TST \ar[r,"\delta T"] \& STT \ar[d,"S\mu^T"] \\
					TS \ar[rr,"\delta"] \& \& ST \& \& 	TS \ar[rr,"\delta"] \& \& ST \\
					\& T \ar[dl,"T\eta^S"'] \ar[dr,"\eta^S T"] \& \& \& \& S \ar[dl,"\eta^T S"'] \ar[dr,"S \eta^T"] \\
					TS \ar[rr,"\delta"] \& \& ST \& \& 	TS \ar[rr,"\delta"] \& \& ST
				\end{tikzcd}	
			\end{equation}
		\end{defi}
		
		One important result of Beck's theory is the correspondence between distributive laws, liftings to Eilenberg-Moore algebras and extensions to Kleisli categories, in the following sense.
		
		\begin{defi}\label{def:liftings and extensions}
			Let $(T,\eta^T,\mu^T)$ and $(S,\eta^S,\mu^S)$ be two monads on a category $\C$.
			\begin{enumerate}
				\item A \emph{lifting} of the functor $S$ to $\EM T$ is a functor $\tilde S$ on $\EM T$ such that
				\[
				\begin{tikzcd}
					\EM T \ar[r,"\tilde S"] \ar[d,"\U T"'] & \EM T  \ar[d,"\U T"]\\
					\C \ar[r,"S"]  & \C 
				\end{tikzcd}
				\]
				commutes.
				\item\label{extension functor to Kleisli} An \emph{extension} of the functor $T$ to $\Kl S$ is a functor $\tilde T$ on $\Kl S$ such that
				\[
				\begin{tikzcd}
					\C \ar[r,"T"] \ar[d,"\FK S"'] & \C \ar[d,"\FK S"] \\
					\Kl S \ar[r,"\tilde T"] & \Kl S
				\end{tikzcd}
				\]
				commutes.
			\end{enumerate}
			Let also:
			\begin{itemize}[left=\parindent]
				\item $S_1$, $S_2$, $T_1$ and $T_2$ be functors on $\C$,
				\item $\tilde{S_1}$ and $\tilde{S_2}$ be liftings of $S_1$ and $S_2$ to $\EM T$,
				\item $\tilde{T_1}$ and $\tilde{T_2}$ be extensions of $T_1$ and $T_2$ to $\Kl S$,
				\item $\phi \colon S_1 \to S_2$ and $\psi \colon T_1 \to T_2$ be natural transformations. 
			\end{itemize}  
			\begin{enumerate}[resume]
				\item A \emph{lifting} of $\phi$ to $\EM T$ is a natural transformation $\tilde \phi \colon \tilde{S_1} \to \tilde{S_2}$ such that $\U T \tilde \phi = \phi \U T$. 
				\item An \emph{extension} of $\psi$ to $\Kl S$ is a natural transformation $\tilde \psi \colon \tilde{T_1} \to \tilde{T_2}$ such that $\tilde\psi \FK S = \FK S \psi$.
				\item\label{extension natural transformation to Kleisli} A \emph{lifting}  of the monad $(S,\eta^S,\mu^S)$ to $\EM T$ is a monad $(\tilde S, \eta^{\tilde S}, \mu^{\tilde S} )$ such that $\tilde S$, $\eta^{\tilde S}$ and $\mu^{\tilde S}$ are lifting of, respectively, $S$, $\eta^S$ and $\mu^S$.
				\item An \emph{extension} of the monad $(T,\eta^T,\mu^T)$ to $\Kl S$ is a monad $(\tilde T, \eta^{\tilde T}, \mu^{\tilde T})$ such that $\tilde T$, $\eta^{\tilde T}$ and $\mu^{\tilde T}$ are extensions of, respectively, $T$, $\eta^T$ and $\mu^T$.
			\end{enumerate}
		\end{defi}
		
		\begin{thmC}[\cite{beck_distributive_1969}]\label{thm:correspondence distributive laws and extensions}
			Let $S$ and $T$ be two monads on $\C$. There are bijections between distributive laws of the form $TS \to ST$, liftings of the monad $S$ to $\EM T$ and extensions of the monad $T$ to $\Kl S$.
		\end{thmC}
		\begin{proof}[Proof (sketch)]
			Given a distributive law $\delta \colon TS \to ST$, the corresponding lifting $\tilde S$ of $S$ to $\EM T$ is given on objects as 
			\[
			\tilde S (X,a \colon TX \to X) = (SX,
			\begin{tikzcd}[cramped,sep=small]
				TSX \ar[r,"\delta_X"] & STX \ar[r,"Sa"] & SX
			\end{tikzcd}
			)
			\]
			and on morphisms as $\tilde S(f) = S(f)$; while $\eta^{\tilde S} = \eta^S$ and $\mu^{\tilde S} = \mu^S$. Vice versa, given a lifting of the monad $(S,\eta^S,\mu^S)$ to $\EM T$ with $\tilde S(X,a) = (S X, \sigma_{X,a})$, the corresponding distributive law $\delta \colon TS \to ST$ has components:
			\begin{equation}\label{eqn:Garner 3.1}
				\delta_X = \begin{tikzcd}
					TSX \ar[r,"TS\eta^T_X"] &  TSTX \ar[r,"\sigma_{TX,\mu^T_X}"] & STX.
				\end{tikzcd}
			\end{equation}
			Next, given a distributive law $\delta \colon TS \to ST$, the corresponding extension $\tilde T$ is given on objects by $\tilde T(X)=T(X)$ and on a morphism $f \colon X \to S(Y)$ in $\C$ as
			\[
			\tilde T(f) = \begin{tikzcd}T(X) \ar[r,"T(f)"] & TS(Y) \ar[r,"\delta_Y"] & ST(Y),
			\end{tikzcd}
			\]
			while the $X$-th component of $\eta^{\tilde T}$ and $\mu^{\tilde T}$ are, respectively,
			\[
			\begin{tikzcd}X \ar[r,"\eta^T_X"] & T(X) \ar[r,"\eta^S_{T(X)}"] & ST(X)
			\end{tikzcd}
			\quad \text{and} \quad
			\begin{tikzcd}TT(X) \ar[r,"\mu^T_X"] & T(X) \ar[r,"\eta^S_{T(X)}"] & ST(X).
			\end{tikzcd}
			\]
			Vice versa, given an extension $(\tilde T, \eta^{\tilde T}, \mu^{\tilde T})$ of $T$ to $\Kl S$, consider the identity map in $\C$ $\id{S(X)}^\C$, which is a morphism $\id{S(X)}^\C \colon S(X) \to X$ in $\Kl S$. The image along $\tilde T$ provides a morphism $\tilde T (\id{S(X)}^\C) \colon  TS(X) \to T(X)$ in $\Kl S$ and therefore a morphism $TS(X) \to ST(X)$ in $\C$, which is the $X$-th component of the desired distributive law $\delta$:
			\[
			\delta_X = \tilde T(\id{S(X)}^\C). \qedhere
			\]
		\end{proof}
		
		Given a distributive law $\delta \colon TS \to ST$, and therefore a lifting of $S$ to $\EM T$, the \emph{composite} monad on $\C$ is the monad induced by the composite adjunction:
		\[
		\begin{tikzcd}[bend angle=30,column sep=3em]
			\C \ar[r,bend left,pos=0.6,"\F T"{name=I}] & \EM {T} \ar[l,bend left,pos=0.4,"\U T"{name=K}] \ar[r,bend left,"\F {\tilde S}"{name=F}] & \EM {\tilde S}. \ar[l,bend left,"\U{\tilde S}"{name=U}]
			\ar[from=I,to=K,draw=none,"\bot" description]
			\ar[from=F,to=U,draw=none,"\bot" description]
		\end{tikzcd}
		\]
		Its underlying functor is $ST$, the unit is $\eta^S \eta^T$ and the multiplication is $(\mu^S \mu^T) \circ (S \delta T)$. 
		
		It turns out, however, that distributive laws often cannot exist in many practical examples, see for instance~\cite{zwart_no-go_2019}. Böhm~\cite{bohm_weak_2010} and Street~\cite{street_weak_2009} have studied various weaker notions of distributive law, obtained by relaxing in different ways the axioms required by the definition of distributive law. We are interested in one in particular, as done in~\cite{garner_vietoris_2020}, obtained by dropping the requirement of commutativity of the bottom-right triangle of (\ref{eqn:def distributive law}), which concerns $\eta^T$.

		\begin{defi}\label{def:weak distributive law}
			Let $(T,\eta^T,\mu^T)$ and $(S,\eta^S,\mu^S)$ be two monads on $\C$. A \emph{weak distributive law} of $S$ over $T$ is a natural transformation $\delta \colon TS \to ST$ such that the following diagrams commute:
			\[
			\begin{tikzcd}[ampersand replacement=\&,column sep=2em]
				TSS \ar[r,"\delta S"] \ar[d,"T\mu^S"'] \& STS \ar[r,"S\delta"] \& SST \ar[d,"\mu^S T"] \& \& TTS \ar[r,"T\delta"] \ar[d,"\mu^T S"'] \& TST \ar[r,"\delta T"] \& TTS \ar[d,"S\mu^T"] \\
				TS \ar[rr,"\delta"] \& \& ST \& \& 	TS \ar[rr,"\delta"] \& \& ST \\
				\& \& \& T \ar[dl,"T\eta^S"'] \ar[dr,"\eta^S T"]  \\
				\& \& TS \ar[rr,"\delta"] \& \& ST
			\end{tikzcd}	
			\]
		\end{defi}
		
		There are now corresponding notions of weak lifting and weak extensions, whose definitions we recall next.
		
		\begin{defi}\label{def:weak lifting}
			Let $(S,\eta^S,\mu^S)$ and $(T,\eta^T,\mu^T)$ be two monads on a category $\C$.
			\begin{enumerate}
				\item A \emph{weak extension} of the monad $T$ to $\Kl{S}$ is a functor $\tilde T \colon \Kl S \to \Kl S$ together with a natural transformation $\mu^{\tilde T} \colon \tilde T \tilde T \to \tilde T$ such that $\tilde T$ and $\mu^{\tilde T}$ are extensions, respectively, of $T$ and $\mu^T$.
				
				\item A \emph{weak lifting} of the monad $S$ to $\EM T$ consists of a monad $(\tilde S,\eta^{\tilde S},\mu^{\tilde S})$ on $\EM{T}$ and two natural transformations
				\[
				\begin{tikzcd}
					\U T \tilde S \ar[r,"\iota"] & S \U T \ar[r,"\pi"] & \U T \tilde S
				\end{tikzcd}
				\]
				such that $\pi \iota = id_{\U T \tilde S}$ and such that the following diagrams commute:
				\begin{equation}\label{eqn:weak lifting diagrams iota}
					\begin{tikzcd}
						\U T \tilde S \tilde S \ar[r,"\iota \tilde S"] \ar[d,"\U T \mu^{\tilde S}"'] & S \U T \tilde S \ar[r,"S \iota"] & S S \U T \ar[d,"\mu^S \U T"] \\
						\U T \tilde S \ar[rr,"\iota"] & & S \U T
					\end{tikzcd}
					\quad
					\begin{tikzcd}
						& \U T \ar[dl,"\U T \eta^{\tilde S}"'] \ar[dr,"\eta^S \U T"] \\
						\U T \tilde S \ar[rr,"\iota"] & & S \U T
					\end{tikzcd}
				\end{equation}
				\begin{equation}\label{eqn:weak lifting diagrams pi}
					\begin{tikzcd}
						S S \U T \ar[r,"S\pi"] \ar[d,"\mu^S \U T"'] & S \U T \tilde S \ar[r,"\pi \tilde S"] & \U T \tilde S \tilde S \ar[d,"\U T \mu^{\tilde S}"] \\
						S \U T \ar[rr,"\pi"] & & \U T \tilde S
					\end{tikzcd}
					\quad
					\begin{tikzcd}
						& \U T \ar[dl,"\eta^S \U T"'] \ar[dr,"\U T \eta^{\tilde S}"] \\
						S \U T \ar[rr,"\pi"] & & \U T \tilde S
					\end{tikzcd}
				\end{equation}
			\end{enumerate}
		\end{defi}
		
		Again we have a bijective correspondence between weak distributive laws and weak extensions, whereas the relationship between weak distributive laws and weak liftings is more subtle. We state it in the following Theorem, which is proved in~\cite{garner_vietoris_2020}.
		
		\begin{thmC}[{\cite[Propositions 11, 13]{garner_vietoris_2020}}]\label{thm:correspondence weak distributive law with weak extensions-liftings}
			Let $(T,\eta^T,\mu^T)$ and $(S,\eta^S,\mu^S)$ be two monads on $\C$. Then:
			\begin{enumerate}[(i),leftmargin=\parindent]
				\item There is a bijective correspondence between weak distributive laws of type $TS \to ST$ and weak extensions of $T$ to $\Kl S$.
				\item If idempotents split in $\C$, then there is a bijective correspondence between weak distributive laws of type $TS \to ST$ and weak lifting of $S$ to $\EM T$.
			\end{enumerate}
		\end{thmC}
		\begin{proof}
			(i) is proved exactly as in Theorem~\ref{thm:correspondence distributive laws and extensions}.
			
			For (ii): let $\delta \colon TS \to ST$ be a weak distributive law. Then the weak lifting of $S$ to $\EM T$ is obtained as follows.
			\begin{itemize}
				\item The action of the functor $\tilde S \colon \EM T \to \EM T$ on a $T$-algebra $(X,a)$ is given by the $T$-algebra $(Y,b)$ where:
				\begin{itemize}
					\item $Y$ is the result of splitting the idempotent $Sa \circ \delta_X \circ \eta^T_{SX} \colon SX \to SX$ in $\C$:
					\[
					\begin{tikzcd}
						SX \ar[r,"\pi_{(X,a)}"] & Y \ar[r,"\iota_{(X,a)}"] & SX,
					\end{tikzcd}
					\]
					\item $b$ is the composite:
					\[
					b= \begin{tikzcd}
						TY \ar[r,"T\iota_{(X,a)}"] & TSX \ar[r,"\delta_X"] & STX \ar[r,"Sa"] & SX \ar[r,"\pi_{(X,a)}"] & Y.
					\end{tikzcd}
					\]
				\end{itemize}
				\item Given $f \colon (X,a) \to (X',a')$ in $\EM T$ and
				\[
				\begin{tikzcd}
					SX' \ar[r,"\pi_{(X',a')}"] & Y' \ar[r,"\iota_{(X',a')}"] & SX'
				\end{tikzcd}
				\]
				a splitting of the idempotent $Sa' \circ \delta_{X'} \circ \eta^T_{S X'}$, the action of the functor $\tilde S$ on $f$ is the composite:
				\[
				\tilde S (f) = 
				\begin{tikzcd}
					Y \ar[r,"\iota_{(X,a)}"] & SX \ar[r,"Sf"] & SX' \ar[r,"\pi_{(X',a')}"] & Y'.
				\end{tikzcd}
				\]
				\item The $(X,a)$-th components of the two natural transformations $\iota$ and $\pi$ are given as above by splitting the idempotent $Sa \circ \delta_X \circ \eta^T_{SX}$.
				\item The unit $\eta^{\tilde S}$ is the unique map rendering commutative the triangles in (\ref{eqn:weak lifting diagrams iota}) and (\ref{eqn:weak lifting diagrams pi}).
				\item The multiplication $\mu^{\tilde S}$ is the unique map rendering commutative the rectangles in (\ref{eqn:weak lifting diagrams iota}) and (\ref{eqn:weak lifting diagrams pi}).
			\end{itemize}
			
			Vice versa, suppose we have a weak lifting of $S$ to $\EM T$. For each $T$-algebra $(X,a)$ with $\tilde S (X,a) = (Y,b)$, define the morphism $\sigma_{X,a}$ as 
			\[
			\sigma_{X,a} = 
			\begin{tikzcd}
				TSX \ar[r,"T\pi_{(X,a)}"] & TY \ar[r,"b"] & Y \ar[r,"\iota_{(X,a)}"] & SX
			\end{tikzcd}
			\]
			and then define $\delta \colon TS \to ST$ as the family of morphisms whose $X$-th component is~(\ref{eqn:Garner 3.1}). Then $\delta$ is a weak distributive law.
		\end{proof}
		
		As in the ``strong'' case, given a weak distributive law $\delta \colon TS \to ST$, and therefore a weak lifting of $S$ to $\EM T$ (when idempotents split in $\C$), we can define a ``composite'' monad $\widetilde{ST}$ as the monad arising from the composite adjunction
		\[
		\begin{tikzcd}[bend angle=30,column sep=3em]
			\C \ar[r,bend left,pos=0.6,"\F T"{name=I}] & \EM {T} \ar[l,bend left,pos=0.4,"\U T"{name=K}] \ar[r,bend left,"\F {\tilde S}"{name=F}] & \EM {\tilde S}, \ar[l,bend left,"\U{\tilde S}"{name=U}]
			\ar[from=I,to=K,draw=none,"\bot" description]
			\ar[from=F,to=U,draw=none,"\bot" description]
		\end{tikzcd}
		\]
		whose underlying functor this time is not $ST$, but is obtained by splitting the idempotent
		\[
		\begin{tikzcd}
			ST \ar[r,"\eta^T ST"] & TST \ar[r,"\delta T"] & STT \ar[r,"S\mu^T"] & ST.
		\end{tikzcd}
		\] 
		\section{The Powerset and Semimodule Monads}\label{sec:monad}
		Through this paper we will take as $S$  and $T$ the powerset and semimodule monads, respectively. In this section we recall their definitions as well as some useful known results.
		\subsection{The Monad \texorpdfstring{$\P$}{\textbf{\textit{P}}}.}Let us now consider, as $S$, the \emph{powerset} monad $(\pow,\eta^\P,\mu^\P)$, where $\etaP_X(x)=\{x\}$ and $\muP_X(\mathcal U) = \bigcup_{U \in \mathcal U} U$. Its algebras are precisely the complete semilattices and we have that $\Kl \P$ is isomorphic to the category $\Rel$ of sets and relations. Hence, giving a distributive law $T\P \to \P T$ is the same as giving an extension of $T$ to $\Rel$. 
		
		As $\Rel$ is enriched over the category of partial orders, and thus is a 2-category, it is quite natural to require $T$ to be a 2-functor, \ie to preserve the partial order:  if $R \subseteq S$, then $T(R) \subseteq T(S)$.  It is not difficult to prove (see, for example,~\cite[\sect 5.1]{bird_algebra_1997}) that any 2-endofunctor $F$ on $\Rel$ preserves functions and opposite relations, in the sense that if $f$ is a function then so is $F(f)$ and if $\op R$ is the opposite relation of $R$, then $F(\op R) = \op{F(R)}$. This and the fact that every relation $R \colon X \to Y$ with projections $\pi_X \colon R \to X$ and $\pi_Y \colon R \to Y$ can be written as a composite in $\Rel$:
		\[
		R = \Gamma(\pi_Y) \circ \op{\Gamma(\pi_X)}
		\]
		(recall that $\Gamma$ computes the graph of functions) imply that\[
		F(R) = F\bigl(\Gamma(\pi_Y) \circ \op{\Gamma(\pi_X)} \bigr) = F(\Gamma(\pi_Y)) \circ \op{F(\Gamma(\pi_X))}.
		\]
		Therefore, any functor $T \colon \Set \to \Set$ has at most one extension to a 2-functor $\tilde T$ on $\Rel$, which is necessarily defined as $T$ on objects and as 
		\[
		\tilde T(R) = \Graph{T(\pi_Y)} \circ \op{\Graph{T(\pi_X)}}
		\]
		on morphisms. $\tilde T$ so defined is called the \emph{Barr extension} of $T$, as it was discussed first in~\cite{barr_relational_1970}. The problem is that $\tilde T$ is not always functorial: for this to happen the notions of weakly cartesian functor and natural transformation are crucial, and we report them next. First, recall that a commutative diagram
		\[
		\begin{tikzcd}
			W \ar[r,"q"] \ar[d,"p"'] & Y \ar[d,"g"] \\
			X \ar[r,"f"] & Z
		\end{tikzcd}
		\]
		is a \emph{weak pullback} if for every pair of morphisms $(u \colon P \to X, \, v \colon P \to Y)$ there exists a (not necessarily unique) morphism $t \colon P \to W$ such that $p \circ t = u$ and $q \circ t = v$.
		\begin{defi}
			A functor $T \colon \Set \to \Set$ is said to be \emph{weakly cartesian} if and only if it preserves weak pullbacks. A natural transformation $\phi \colon F \to G$ is said to be \emph{weakly cartesian} if and only if its naturality squares are weak pullbacks.
		\end{defi}
		
		Kurz and Velebil~\cite{kurz_relation_2016} proved, using an original argument of Barr~\cite{barr_relational_1970}, that $\tilde T$ is functorial if and only if $T$ is weakly cartesian. Moreover, one can show using a similar technique that a natural transformation $\phi$ between weakly cartesian functors on $\Set$ has at most one extension to a 2-natural transformation, which exists precisely when $\phi$ is weakly cartesian, as mentioned in~\cite{garner_vietoris_2020}. These considerations are summarised in the following Theorem.
		
		\begin{thm}\label{thm:extension to Rel iff weakly cartesian}
			Any functor $T \colon \Set \to \Set$ has at most one extension to a 2-functor $\tilde T$ on $\Rel$; such an extension exists if and only if $T$ is weakly cartesian. $\tilde T$ is defined as $T$ on objects and, for every relation $R \colon X \to Y$ with projections $\pi_X \colon R \to X$ and $\pi_Y \colon R \to Y$, as
			\[
			\tilde T(R) = \Graph{T(\pi_Y)} \circ \op{\Graph{T(\pi_X)}}
			\]
			where $\Graph f$, for $f \colon A \to B$ a function, is its graph $\Graph f = \{ (a,f(a)) \mid a \in A \}$.
			
			Let $T_1$ and $T_2$ be weakly cartesian functors. Any natural transformation $\phi \colon T_1 \to T_2$ has at most one extension $\tilde\phi \colon \tilde{T_1} \to \tilde{T_2}$; such an extension exists if and only if $\phi$ is weakly cartesian. Its $X$-th component, for $X$ a set, is given by the graph of $\phi_X$:
			\[
			\tilde\phi_X = \Graph{\phi_X} = \{ (x,\phi_X(x)) \mid x \in T_1(X)  \}.
			\]
		\end{thm}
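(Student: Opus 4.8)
The plan is to prove both parts by the same template --- uniqueness by a forcing argument, existence by verifying that the displayed formula defines a $2$-functor (resp.\ $2$-natural transformation) precisely when the weak cartesianness hypothesis holds.

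First I would recall two facts about $\Rel$ viewed as a poset-enriched $2$-category: the graph $\Gamma(f)$ of a function is a left adjoint with right adjoint $\op{\Gamma(f)}$ (the laws $\id{} \subseteq \op{\Gamma(f)}\circ\Gamma(f)$ and $\Gamma(f)\circ\op{\Gamma(f)} \subseteq \id{}$ being immediate), and the maps of $\Rel$ are exactly the graphs of functions \cite{bird_algebra_1997}. Since a $2$-functor preserves adjunctions, any $2$-endofunctor of $\Rel$ sends graphs to graphs and, right adjoints in a poset-enriched $2$-category being unique, satisfies $\tilde T(\op{\Gamma(f)}) = \op{\tilde T(\Gamma(f))}$. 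If moreover $\tilde T$ \emph{extends} a given $T$, then $\tilde T(\Gamma(f)) = \Gamma(Tf)$, so writing an arbitrary $R \colon X \to Y$ as $R = \Gamma(\pi_Y)\circ\op{\Gamma(\pi_X)}$ with $\pi_X,\pi_Y$ the projections of $R \subseteq X\times Y$ forces $\tilde T(R) = \Gamma(T\pi_Y)\circ\op{\Gamma(T\pi_X)}$; this gives uniqueness, and the same decomposition of $\op R$ shows $\tilde T(\op R) = \op{\tilde T(R)}$ for any such extension. For existence I would take that formula as a definition and check the easy points --- it restricts to $\Gamma T$ on functions, preserves identities, and is monotone (an inclusion $R \subseteq R'$ gives a mono $j$ over the projections, and $\Gamma(Tj)\circ\op{\Gamma(Tj)} \subseteq \id{}$ yields $\tilde T(R)\subseteq\tilde T(R')$) --- leaving preservation of composition. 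This last axiom is the crux: composing relations is a matter of forming a pullback in $\Set$, and $\tilde T(Q\circ R) = \tilde T(Q)\circ\tilde T(R)$ for all $R,Q$ holds iff $T$ carries pullbacks to weak pullbacks, i.e.\ (a standard fact, since pullbacks are weak pullbacks) iff $T$ is weakly cartesian; conversely a $2$-functorial extension, being forced to be the formula, must preserve composition, so $T$ must be weakly cartesian. I expect this equivalence --- essentially Barr's argument carried out with care --- to be the main obstacle, and I would cite \cite{barr_relational_1970,kurz_relation_2016} for it rather than reprove it.

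For the natural transformation statement, the condition that $\tilde\phi$ \emph{extend} $\phi$, namely $\tilde\phi\, \FK\P = \FK\P\, \phi$ with $\FK\P$ taking a function to its graph, already forces $\tilde\phi_X = \Gamma(\phi_X)$, giving uniqueness and the formula simultaneously. It then remains to see when $\{\Gamma(\phi_X)\}_X$ is $2$-natural from $\tilde{T_1}$ to $\tilde{T_2}$, which I would test on morphisms of $\Rel$ case by case: at a graph $\Gamma(f)$ the naturality square commutes by ordinary naturality of $\phi$, whereas at an opposite graph $\op{\Gamma(f)}$, writing out both composites as relations shows the square commutes exactly when the naturality square of $\phi$ at $f$ is a weak pullback. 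Hence $2$-naturality of $\{\Gamma(\phi_X)\}_X$ is equivalent to $\phi$ being weakly cartesian: one direction by instantiating at opposite graphs, the other by building the square at a general $R = \Gamma(\pi_Y)\circ\op{\Gamma(\pi_X)}$ out of the two special cases --- which is legitimate precisely because $\tilde{T_1}$ and $\tilde{T_2}$ are functorial, the hypothesis that $T_1$ and $T_2$ are weakly cartesian being exactly what is needed here.
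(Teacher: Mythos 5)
Your proposal is correct and follows essentially the same route as the paper, which likewise derives uniqueness from the decomposition $R = \Gamma(\pi_Y)\circ\op{\Gamma(\pi_X)}$ together with the fact that a $2$-endofunctor of $\Rel$ preserves graphs and opposite relations, and defers the substantive equivalence (functoriality of the Barr extension iff $T$ is weakly cartesian, and the analogous statement for natural transformations) to Barr, Kurz--Velebil and Garner. The only difference is that you flesh out a few of the steps the paper leaves to citations (the adjunction argument for preservation of graphs, monotonicity, and the explicit computation showing that $2$-naturality at $\op{\Gamma(f)}$ amounts to the naturality square of $\phi$ at $f$ being a weak pullback), all of which are carried out correctly.
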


		The following result is therefore immediate. 
		\begin{propC}[{\cite[Corollary 16]{garner_vietoris_2020}}]\label{prop:(weak)distributive law with P iff weakly cartesian}
			For any monad $(T,\eta^T,\mu^T)$ on $\Set$:
			\begin{enumerate}
				\item If $T$, $\eta^T$ and $\mu^T$ are weakly cartesian, then there exists a distributive law of $\P$ over $T$.
				\item If $T$ and $\mu^T$ are weakly cartesian, then there exists a \emph{weak} distributive law of $\P$ over $T$.
			\end{enumerate}
		\end{propC}
		
		We follow~\cite{garner_vietoris_2020} and call the (weak) distributive law of Proposition~\ref{prop:(weak)distributive law with P iff weakly cartesian}, arising from the unique extension of $T$, $\mu^T$ and possibly $\eta^T$ given by Theorem~\ref{thm:extension to Rel iff weakly cartesian}, as the \emph{canonical} (weak) distributive law of $\pow$ over $T$.
		
		\begin{rem}[Kleisli extensions vs Barr extensions]\label{rem:Kleisli vs Barr extensions} Theorem~\ref{thm:extension to Rel iff weakly cartesian} states that any endofunctor $T$ on $\Set$ has at most one extension to a \emph{2-functor} on $\Rel$, seen as a poset-enriched category. However, this is asking for more than a Kleisli extension of $T$ to $\Kl{\pow}=\Rel$ in the sense of Definition~\ref{def:liftings and extensions}.\ref{extension functor to Kleisli}, since the latter does not require the extension to be locally monotone, but only to agree with $T$ on $\Set$. Kurz and Velebil~\cite{kurz_relation_2016} left an open question, in this regard: 
			\begin{center}
				Is there a Kleisli extension of a \emph{monad} $(T,\eta^T,\mu^T)$ to $\Rel$ that is not a Barr extension?
			\end{center}
			We can provide a partial answer to this question: there exists a \emph{weak} extension of a monad that is not a Barr extension. To illustrate this, we reuse from~\cite{bird_generic_1996} a non-monotone extension of a functor on $\Set$ to $\Rel$. Consider the powerset functor $\pow$ itself: define $E \colon \Rel \to \Rel$ as $E(X)=\pow (X)$ and, for $R \subseteq X \times Y$ a relation, $E(R)$ as the \emph{function} (notice, not just a relation!):
			\[
			\begin{tikzcd}[row sep=0em]
				\pow(X) \ar[r,"E(R)"] & \pow(Y) \\
				A \ar[r,|->] & \{ y \in Y \mid \exists a \in A \ldotp a R y \}
			\end{tikzcd}
			\]
			This is a somewhat trivial extension of $\pow$ to $\Rel$, as for all functions $f\colon X \to Y$, $E(f)$ has exactly the same definition of $\pow(f)$. It is clear, therefore, that $E$ and $\pow$ agree on $\Set$ and it is not difficult to see that $E$ is a functor. However, it is not locally monotone: the main reason being that $E(R)$ is a function, and a function in $\Rel$ is included in another if and only if they are equal. For a specific counterexample, take $X=\{0\}$, $Y=\{ 1, 2 \}$, $R=\{ (0,1) \}$, $S=\{ (0,1), \, (0,2) \}$. Then $ R \subseteq S$ but 
			\[
			\begin{tikzcd}[row sep=0em]
				\pow(X) \ar[r,"E(R)"] & \pow(Y) \\
				\emptyset \ar[r,|->] & \emptyset \\
				\{ 0 \} \ar[r,|->] & \{ 1 \}
			\end{tikzcd}
			\quad \ne \quad
			\begin{tikzcd}[row sep=0em]
				\pow(X) \ar[r,"E(S)"] & \pow(Y) \\
				\emptyset \ar[r,|->] & \emptyset \\
				\{ 0 \} \ar[r,|->] & \{ 1, 2 \}
			\end{tikzcd}
			\]
			It turns out that $E$ is not just an extension of the powerset functor, but it is actually a weak extension of the monad: the interested reader can find a detailed explanation in Appendix~\ref{app:another weak distributive law}.
		\end{rem}
		
		\subsection{The Monad \texorpdfstring{$\mon S$}{\textbf{\textit{S}}}.} Recall that a \emph{semiring} is a tuple $(S,+,\cdot,0,1)$ such that $(S,+,0)$ is a commutative monoid, $(S,\cdot,1)$ is a monoid, $\cdot$ distributes over $+$ and $0$ is an annihilating element for $\cdot$. In other words, a semiring is a ring where not every element has an additive inverse. Natural numbers $\N$ with the usual operations of addition and multiplication form a semiring. Similarly, integers, rationals and reals form semirings. Also the booleans $\Bool=\{0,1\}$ with $\vee$ and $\wedge$ acting as $+$ and $\cdot$, respectively, form a semiring. 
		
		Every semiring $S$ generates a \emph{semimodule} monad $\S$ on $\Set$ as follows. Given a set $X$, $\S(X) = \{ \phi \colon X \to S \mid \supp \phi \text{ finite} \}$, where $\supp \phi = \{ x \in X \mid \phi(x) \ne 0 \}$. For $f \colon X \to Y$, define for all $\phi\in \S (X)$
		\[
		\S(f)(\phi) = \Bigl( y \mapsto \sum_{x \in f^{-1}\{y\}} \phi(x) \Bigr) \colon Y \to S.
		\]
		This makes $\S$ a functor. The unit $\etaS_X \colon X \to \S(X)$ is given by $\etaS_X(x)=\Delta_x$, where $\Delta_x$ is the Dirac function centred in $x$, while the multiplication $\muS_X \colon \S^2(X) \to \S(X)$ is defined for all $\Psi\in \S^2(X)$ as
		\[
		\muS_X(\Psi) = \Bigl( x \mapsto \sum_{\phi \in \supp \Psi} \Psi(\phi) \cdot \phi(x)  \Bigr) \colon X \to S.
		\]
		An algebra for $\S$ is precisely a \emph{left-S-semimodule}, namely a set $X$ equipped with a binary operation $+$, an element $0$ and a unary operation $\lambda \cdot$ for each $\lambda\in S$, satisfying the equations in Table~\ref{tab:axiomsinitial}, right.
		Indeed, if $X$ carries a semimodule structure then one can define a map $a \colon \S X \to X$ as, for $\phi \in \mon S X$,
		\begin{equation}\label{eq:S-algebra associated to semimodule}
			a(\phi) = \sum_{x \in X} \phi(x) \cdot x
		\end{equation}
		where the above sum is finite because so is $\supp \phi$. Vice versa, if $(X,a)$ is an $\mon S$-algebra, then the corresponding left-semimodule structure on $X$ is obtained by defining for all $\lambda \in S$ and $x,y \in X$
		\begin{equation}\label{eqn:semimodule associated to S-algebra}    x+^a y = a (x \mapsto 1, y \mapsto 1), \qquad
			0^a = a(\zero), \qquad
			\lambda \cdot^a x = a (x\mapsto \lambda).
		\end{equation}
		Above and in the remainder of the paper, we write the list $(x_1\mapsto s_1, \dots, x_n \mapsto s_n)$ for the only function $\phi\colon X \to S$ with support $\{x_1,\dots , x_n\}$ mapping $x_i$ to $s_i$ and we write the empty list $\zero$ for the function constant to $0$. For instance, for $a=\muS_X \colon \mon S \mon S X \to \mon S X$, the left-semimodule structure is defined for all $\phi_1,\phi_2 \in \mon S X$ and $x\in X$ as
		\[(\phi_1+^{\muS_X} \phi_2)(x)=\phi_1(x)+\phi_2(x), \qquad 0^{\muS_X}(x)=0, \qquad (\lambda \cdot^{\muS_X} \phi_1 )(x)= \lambda \cdot \phi_1( x).\]
		
		Proposition~\ref{prop:(weak)distributive law with P iff weakly cartesian} tells us when a canonical (weak) distributive law of the form $T\P \to \P T$ exists for an arbitrary monad $T$ on $\Set$. Take then $T=\S$: when are the functor $\S$ and the natural transformations $\etaS$ and $\muS$ weakly cartesian? The answer has been given in~\cite{clementino_monads_2014} (see also~\cite{gumm_monoid-labeled_2001}), where a complete characterisation in purely algebraic properties for $S$ is provided. In Table~\ref{tab:semiring properties} we recall such properties.
		
		\begin{table}[t]
			\centering
			\begin{tabular}{|c|p{30em}|}
				\hline
				Positive     & $a+b=0 \implies a=0=b$ \\ \hline
				Semifield    & $a \ne 0 \implies \exists x \ldotp a \cdot x = x \cdot a = 1$ \\ \hline
				Refinable    & $a+b=c+d \implies \exists x,y,z,t \ldotp x+y=a,\, z+t=b, \, x+z=c, \,y+t = d$ \\ \hline
				(A) & $a+b=1 \implies a = 0$ or $b=0$ \\ \hline
				(B) & $a \cdot b = 0 \implies a=0 \text{ or } b=0$ \\ \hline
				(C) & $a+c=b+c \implies a = b$ \\ \hline
				(D) & $\forall a,b$. $\exists x \ldotp a+x = b$ or $b+x = a$\\ \hline
				(E) & $a+b = c \cdot d \implies \exists t \colon \{ (x,y) \in S^2 \mid x+y=d  \} \to S$ such that \newline
				$
				\sum\limits_{x+y=d} t(x,y) x = a, \quad \sum\limits_{x+y=d} t(x,y)y=b, \quad \sum\limits_{x+y=d} t(x,y)=c.
				$ \\ \hline
			\end{tabular}
			\caption{Definition of some properties of a semiring $S$. Here $a,b,c,d \in S$.}
			\label{tab:semiring properties}
		\end{table}
		
		\begin{thmC}[{\cite{clementino_monads_2014}}]\label{thm:S,etaS,muS weakly cartesian iff}
			Let $S$ be a semiring.
			\begin{enumerate}
				\item The functor $\S$ is weakly cartesian if and only if $S$ is positive and refinable.
				\item $\etaS$ is weakly cartesian if and only if $S$ enjoys \emph{(A)} in Table \ref{tab:semiring properties}.
				\item\label{condition:muS weakly cartesian iff} If $\S$ is weakly cartesian, then $\muS$ is weakly cartesian if and only if $S$ enjoys \emph{(B)} and \emph{(E)} in Table \ref{tab:semiring properties}.
			\end{enumerate}
		\end{thmC}
		
		\begin{rem}\label{rem:S positive refinable semifield then ok}
			In~\cite[Proposition 9.1]{clementino_monads_2014} it is proved that if $S$ enjoys (C) and (D), then $S$ is refinable; if $S$ is a positive semifield, then it enjoys (B) and (E). 
		\end{rem}
		
		In the next Proposition we prove that if $S$ is a positive semifield then it is also refinable, hence $\S$ and $\mu^\S$ are weakly cartesian.
		
		\begin{prop}\label{prop:positive semifield implies refinable}
			If $S$ is a positive semifield, then it is refinable.
		\end{prop}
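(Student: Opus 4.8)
The plan is to exhibit an explicit refinement, patterned on the elementary computation valid in $\mathbb{R}^{+}$: if $a+b=c+d=e$ there, then $x=ac/e$, $y=ad/e$, $z=bc/e$, $t=bd/e$ witness refinability whenever $e\ne 0$. So I would first isolate the degenerate case. Suppose $a+b=c+d$ and set $e:=a+b$. If $e=0$, then positivity of $S$ forces $a=b=0$, and likewise $c+d=0$ forces $c=d=0$; in that situation $x=y=z=t=0$ is a refinement and we are done. Hence from now on $e\ne 0$, and since $S$ is a semifield there is a two-sided multiplicative inverse $e^{-1}$ with $ee^{-1}=e^{-1}e=1$.

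Next I would define $x=ae^{-1}c$, $y=ae^{-1}d$, $z=be^{-1}c$ and $t=be^{-1}d$, and verify the four equations in the definition of \emph{refinable}. Using left distributivity together with $c+d=e$ and $e^{-1}e=1$: $x+y=ae^{-1}(c+d)=ae^{-1}e=a$ and $z+t=be^{-1}(c+d)=be^{-1}e=b$. Using right distributivity together with $a+b=e$ and $ee^{-1}=1$: $x+z=(ae^{-1}+be^{-1})c=(a+b)e^{-1}c=ee^{-1}c=c$ and $y+t=(ae^{-1}+be^{-1})d=(a+b)e^{-1}d=ee^{-1}d=d$. This establishes refinability; combined with Remark~\ref{rem:S positive refinable semifield then ok}, it follows that $\S$ and $\mu^{\S}$ are weakly cartesian.

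I do not expect a genuine obstacle here; the only points requiring care are (i) the possible non-commutativity of $S$ — the semifield axiom only supplies a \emph{two-sided} inverse, so $e^{-1}$ is deliberately placed in the middle of each triple product, so that one outer factor can be pulled out on the right and the other on the left; and (ii) the fact that $e$ is invertible only when $e\ne 0$, which is exactly the place where positivity is invoked. I would also remark that one cannot shortcut the argument by checking properties (C) and (D) of Table~\ref{tab:semiring properties} and appealing to Remark~\ref{rem:S positive refinable semifield then ok}, since a positive semifield need not be additively cancellative: the tropical semifield $(\mathbb{R}\cup\{-\infty\},\max,+)$ satisfies $\max(0,2)=\max(1,2)$ with $0\ne 1$, so (C) fails there, whereas the explicit construction above still applies verbatim.
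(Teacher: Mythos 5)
Your proof is correct and follows essentially the same route as the paper's: the same case split on $a+b=0$ and the same witnesses $x=a(c+d)^{-1}c$, $y=a(c+d)^{-1}d$, $z=b(c+d)^{-1}c$, $t=b(c+d)^{-1}d$. You are in fact slightly more careful than the published proof, which writes these as fractions $\frac{ac}{c+d}$ and leaves both the placement of the inverse in the non-commutative case and the use of positivity in the degenerate case implicit.
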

		\begin{proof}
			Let $a$, $b$, $c$ and $d$ in $S$ be such that $a+b=c+d$. If $a+b=0$, then take $x=y=z=t=0$, otherwise take
			\[
			x=\frac{ac}{c+d}, \quad y = \frac{ad}{c+d}, \quad z=\frac{bc}{c+d}, \quad t=\frac{bd}{c+d}.
			\]
			Then $x+y=a,\, z+t=b, \, x+z=c, \,y+t = d$.
		\end{proof}
		
		\begin{exa}\label{example:distributive law for N}
			It is known that, for $\S=\N$,  a distributive law $\delta \colon \mon S \P \to \P \mon S$ exists. Indeed one can check  that all conditions of Theorem~\ref{thm:S,etaS,muS weakly cartesian iff} are satisfied, therefore we can apply Proposition~\ref{prop:(weak)distributive law with P iff weakly cartesian}.1. In this case, the monad $\mon S X$ is naturally isomorphic to the commutative monoid monad, which given a set $X$ returns 
			the collection of all finite \emph{multisets} of elements of $X$. 
			The law $\delta$ is well known (see \eg~\cite{garner_vietoris_2020,hyland_category_2006}): given a finite multiset $\langle A_1,\dots,A_n \rangle$ of subsets of $X$ in $\mon S \pow X$, where the $A_i$'s need not be distinct, it returns the set of multisets
			$\{ \langle a_1, \dots, a_n \rangle \mid a_i\in A_i \}$.
		\end{exa}
		
		Theorem~\ref{thm:S,etaS,muS weakly cartesian iff} together with Proposition~\ref{prop:(weak)distributive law with P iff weakly cartesian}.1 tell us that whenever the element $1$ of $S$ can be decomposed as a non-trivial sum we do not have a canonical distributive law $\delta \colon \mon S \P \to \P \mon S$. Semirings with this property abound, for example $\mathbb Q$, $\mathbb R$, $\Rp$ with the usual operations of sum an multiplication, as well as $\Bool$ (since $1 \vee 1 = 1$). These semirings are precisely those for which the notion of \emph{convex subset} of their left-semimodules is non-trivial. For the existence of the canonical \emph{weak} distributive law, however, this condition on $1_S$ is not required: convexity will indeed play a crucial role in the definition of the weak distributive law.
		
		\begin{defi}\label{def:convex closure SEMIMODULE}
			Let $S$ be a semiring, $X$ an $S$-left-semimodule and $A \subseteq X$. The \emph{convex closure} of $A$ is the set
			\[
			\convclos{A}{} =\left\{ \sum_{i=1}^n \lambda_i \cdot a_i \mid n \in \N,\, a_i \in A,\,  \sum_{i=1}^n \lambda_i =1 \right\} \subseteq X.
			\]
			The set $A$ is said to be \emph{convex} if and only if $A = \convclos A {}$.
		\end{defi}
		
		Recalling that the category of $S$-left-semimodules is isomorphic to $\EM{\mon S}$, we can use~\eqref{eq:S-algebra associated to semimodule} to translate Definition~\ref{def:convex closure SEMIMODULE} of convex subset of a semimodule into the following notion of convex subset of a $\mon S$-algebra $a\colon \mon S X \to X$.
		\begin{defi}\label{def:convex closure S-ALGEBRA}
			Let $S$ be a semiring, $(X,a) \in \EM{\mon S}$, $A \subseteq X$. The \emph{convex closure} of $A$ in $(X,a)$ is the set
			\[
			\convclos A a = \left\{ a(\phi) \mid \phi \in \mon S X, \supp \phi \subseteq A, \sum_{x \in X} \phi(x) =1 \right\}.
			\]
			$A$ is said to be \emph{convex} in $(X,a)$ if and only if $A = \convclos A a$. We denote by $\ConvPow X a$ the set of convex subsets of $X$ with respect to $a$.
		\end{defi}
		
		\begin{rem}
			Observe that $\emptyset$ is convex, because $\convclos \emptyset a = \emptyset$, since there is no $\phi \in \mon S X$ with empty support such that $\sum_{x \in X} \phi(x) = 1$.
		\end{rem}
		
		\begin{exa}\label{ex:convex subsets of N-semimodules}
			Suppose $S$ is such that $\etaS$ is weakly cartesian (equivalently (A) holds: $x+y=1 \implies x=0$ or $y=0$), for example $S = \N$, and let $(X,a) \in \EM{\mon S}$. A $\phi \in \mon S X$ such that $\sum_{x \in X} \phi(x)=1$ and $\supp \phi \subseteq A$ is a function that assigns $1$ to \emph{exactly one} element of $A$ and $0$ to all the other elements of $X$. These functions are precisely all the $\Delta_x$ for those elements $x\in A$. Since $a\colon \mon S X \to X$ is a structure map for an $\mon S$-algebra, it maps the function $\Delta_x$ into $x$. Therefore $\convclos A a =\{a(\Delta_x) \mid x\in A\} = \{x \mid x \in A\} = A$. Thus \emph{all}  $A\in \pow \mon S X$ are convex.
		\end{exa}
		
		\begin{exa}\label{ex:S=Bool, convex subsets are directed}
			When $S = \Bool$, we have that $\S$ is naturally isomorphic to $\Pf$, the finite powerset monad, whose algebras are idempotent commutative monoids or equivalently semilattices with a bottom element (Table~\ref{tab:axiomsinitial}, left). So, for $(X,a) \in \EM{\mon S}$, a $\phi \in \mon S X$ such that $\sum_{x \in X} \phi(x)=1$ and $\supp \phi \subseteq A$ is any finitely supported function from $X$ to $\Bool$ that assigns $1$ to at least one element of $A$. Intuitively, such a $\phi$ selects a non-empty finite subset of $A$, then  $a(\phi)$ takes the join of all the selected elements. Thus, $\convclos A a$ adds to $A$ all the possible joins of non-empty finite subsets of $A$: $A$ is convex if and only if it is closed under binary joins.
		\end{exa}
		\section{The Canonical Weak Distributive Law \texorpdfstring{$\delta \colon \S \P \to \P \S$}{of \emph{P} over \emph{S}}}\label{sec:the weak distributive law}
		
		Weak extensions of $\mon S$ to $\Kl\P = \Rel$ only consist of extensions of the functor $\mon S$ and of the multiplication $\muS$, for whose existence it is sufficient that $\S$ and $\muS$ be weakly cartesian. Hence for semirings $S$ satisfying the criteria given in Theorems~\ref{thm:S,etaS,muS weakly cartesian iff}.1 and~\ref{thm:S,etaS,muS weakly cartesian iff}.3 we have a canonical weak distributive law $\delta \colon \mon S \P \to \P \mon S$, arising from the Barr extension of the functor $\mon S$ to $\Rel$.
		
		\begin{thm}\label{thm:existence of delta} 
			Let $S$ be a positive, refinable semiring satisfying \emph{(B)} and \emph{(E)} in Table~\ref{tab:semiring properties}. Then there exists a canonical weak distributive law $\delta \colon \mon S \P \to \P \mon S$ defined for all sets $X$ and $\Phi \in \mon S \pow X$ as:
			\begin{equation}\label{eqn:def of delta}
				\delta_X (\Phi) = \Biggl\{ \phi \in \mon SX \mid \exists \psi \in \mon S (\ni_X) \ldotp \begin{cases}
					\forall A \in \pow X \ldotp \Phi(A) = \sum\limits_{x \in A} \psi(A,x) & (a)\\
					\forall x \in X \ldotp \phi(x) = \sum\limits_{A \ni x} \psi(A,x) & (b)
				\end{cases}  
				\Biggr\}
			\end{equation}
			where ${}\ni_X{}$ is the set $\{ (A,x) \in \pow X \times X \mid x \in A \}$.
		\end{thm}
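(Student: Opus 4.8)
The plan is to read off both assertions — existence of $\delta$ and the explicit formula~\eqref{eqn:def of delta} — from the machinery already assembled. For existence, I would first observe that the hypotheses are exactly those needed: $S$ positive and refinable makes the functor $\S$ weakly cartesian by Theorem~\ref{thm:S,etaS,muS weakly cartesian iff}.1, and then conditions (B) and (E), with $\S$ already weakly cartesian, make $\muS$ weakly cartesian by Theorem~\ref{thm:S,etaS,muS weakly cartesian iff}.3. Hence Proposition~\ref{prop:(weak)distributive law with P iff weakly cartesian}.2 applies and yields the canonical weak distributive law $\delta\colon\S\P\to\P\S$, which by construction is the one induced by the (unique) Barr extensions of the functor $\S$ and of $\muS$ to $\Rel=\Kl\P$ furnished by Theorem~\ref{thm:extension to Rel iff weakly cartesian}; write $\tilde{\S}$ for the former.

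It then remains to compute $\delta$ componentwise. By the bijection between weak distributive laws and weak extensions of Theorem~\ref{thm:correspondence weak distributive law with weak extensions-liftings}(i) — established, as noted there, exactly as in the strong case of Theorem~\ref{thm:correspondence distributive laws and extensions} — the $X$-component $\delta_X$ is the image under $\tilde{\S}$ of the identity $\id{\P X}$ regarded as a morphism $\P X\to X$ in $\Kl\P$. The first step is to identify this Kleisli morphism with the membership relation ${\ni_X}\subseteq\P X\times X$: a morphism $A\to B$ in $\Kl\P$ is a relation $R\subseteq A\times B$, and the $\Set$-identity on $\P X$, read as such a relation $R\subseteq\P X\times X$, satisfies $(A,x)\in R$ iff $x\in A$.

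Next I would push $\tilde{\S}$ through the Barr formula. Writing $\pi_1\colon{\ni_X}\to\P X$ and $\pi_2\colon{\ni_X}\to X$ for the projections of ${\ni_X}$, one has the factorisation ${\ni_X}=\Graph{\pi_2}\circ\op{\Graph{\pi_1}}$ in $\Rel$, so Theorem~\ref{thm:extension to Rel iff weakly cartesian} gives $\delta_X=\tilde{\S}({\ni_X})=\Graph{\S\pi_2}\circ\op{\Graph{\S\pi_1}}$, a relation from $\S\P X$ to $\S X$. Composing the two graphs, $(\Phi,\phi)$ lies in $\delta_X$ iff there is $\psi\in\S({\ni_X})$ with $\S\pi_1(\psi)=\Phi$ and $\S\pi_2(\psi)=\phi$, both pushforwards being well defined because $\psi$ has finite support. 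Expanding the action of $\S$ on functions, $\S\pi_1(\psi)(A)=\sum_{x\in A}\psi(A,x)$ since the $\pi_1$-fibre over $A$ is $\{(A,x)\mid x\in A\}$, and this rewrites $\S\pi_1(\psi)=\Phi$ as clause (a); symmetrically $\S\pi_2(\psi)(x)=\sum_{A\ni x}\psi(A,x)$ rewrites $\S\pi_2(\psi)=\phi$ as clause (b). Since the relation $\delta_X\colon\S\P X\to\S X$ in $\Rel$ is the same datum as the function $\delta_X\colon\S\P X\to\P\S X$ in $\Set$, reading off $\delta_X(\Phi)=\{\phi\mid(\Phi,\phi)\in\delta_X\}$ gives precisely~\eqref{eqn:def of delta}.

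The only genuine work here is this last unwinding, and I expect the obstacles to be purely bookkeeping: keeping straight which of $\P$, $\S$ plays Beck's role ``$S$'' and which plays ``$T$'' (here opposite to the suggestive letters), correctly recognising the Kleisli identity as ${\ni_X}$, and matching the two fibrewise sums coming from $\S\pi_1$ and $\S\pi_2$ with clauses (a) and (b). Everything else is a direct appeal to the cited results.
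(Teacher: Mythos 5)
Your proposal is correct and follows essentially the same route as the paper's proof: existence via Theorem~\ref{thm:S,etaS,muS weakly cartesian iff} and Proposition~\ref{prop:(weak)distributive law with P iff weakly cartesian}.2, then computation of $\delta_X$ as the Barr extension $\widetilde{\S}(\ni_X)$ of the Kleisli identity on $\P X$, unwound via the graph factorisation into the two marginal conditions (a) and (b). The only cosmetic difference is that the paper first writes out $\widetilde{\S}(R)$ for a general relation $R$ before specialising to $\ni_X$, whereas you specialise immediately.
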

		\begin{proof}
			We calculate $\delta$ by first analysing the canonical extension $\widetilde{\mon S}$ of $\mon S$ to $\Rel$ and then by following the proof of Theorem~\ref{thm:correspondence distributive laws and extensions}.
			
			Theorem~\ref{thm:extension to Rel iff weakly cartesian} gives us the formula to extend the functor $\mon S$ to $\Rel$: on objects it behaves like $\mon S$; if $R \subseteq A \times B$ is a relation and $\pi_A \colon R \to A$ and $\pi_B \colon R \to B$ are the two projections, we have that for all $\psi \in \S(R)$
			\[
			\mon S (\pi_A) (\psi) = \Bigl( a \mapsto \sum_{r \in \pi_A^{-1}\{a\}} \psi(r) \Bigr) = 
			\Bigl( a \mapsto \sum_{\substack{b \in B \\ a R b}} \psi(a,b) \Bigr)
			\]
			and similarly for $\mon S (\pi_B)(\psi)$. So, since $\widetilde{\mon S} (R)= \Graph{\mon S(\pi_B)} \circ \op{\Graph{\mon S(\pi_A)}}$, we have that
			\[
			\widetilde{\mon S}(R) = \Bigl\{ \Bigl( \bigl( a \mapsto \sum_{\substack{b \in B \\ a R b}} \psi(a,b)  \bigr) ,   \bigl( b \mapsto \sum_{\substack{a \in A \\ a R b}} \psi(a,b)  \bigr)   \Bigr) \mid \psi \in \S(R) \Bigr\}.
			\]
			
			We now use the proof of Theorem~\ref{thm:correspondence distributive laws and extensions} to calculate the canonical weak distributive law $\mon S \P \to \P \mon S$. Recall the identity-on-objects isomorphism of categories $F \colon \Kl\P \to \Rel$ where for $f \colon X \to \pow(Y)$ in $\Set$, $F(f)=\{(x,y) \mid y \in f(x)\} \subseteq X \times Y$ and for $R \subseteq X \times Y$, $\pre F (R) = ( x \mapsto \{y \in Y \mid x \mathrel R y\}) \colon X \to \pow(Y)$. Consider $\id{\pow X} \colon \pow X \to \pow X$ in $\Set$. This is a Kleisli map $\id{\pow X} \colon \pow X \to X$ in $\Kl \P$, which corresponds to the relation ${}\ni_X{} \colon \pow X \to X = \{ (A,x) \mid x \in A  \}$. Then we have
			\[
			\widetilde{\mon S}(\ni_X)=\Bigl\{ \Bigl( \bigl( A \mapsto \sum_{x \in A } \psi(A,x)  \bigr) ,   \bigl( x \mapsto \sum_{A \ni x} \psi(A,x)  \bigr)   \Bigr) \mid \psi \in \mon S (\ni_X) \Bigr\} \subseteq \S\P X \times \S X.
			\]
			Remember that $\widetilde{\mon S}$ and $\mon S$ coincide on objects. This relation, seen back as a Kleisli map $\widetilde{\mon S}\pow X \to \widetilde{\mon S} X = \S\P X \to \S X$, thus a function $\S\P X \to \P\S X$, gives us the definition of $\delta_X$:
			\begin{align*}
				\delta_X(\Phi) &= \pre F( \widetilde{\mon S}(\ni_X)) (\Phi) \\
				&= \{ \phi \in \S (X) \mid (\Phi,\phi) \in \widetilde \S (\ni_X) \}
			\end{align*}
			which coincides with~\eqref{eqn:def of delta}. By Theorem~\ref{thm:correspondence weak distributive law with weak extensions-liftings}, $\delta$ so defined is indeed a weak distributive law.	
		\end{proof}
		
		\begin{exa}\label{ex:working example delta=delta' part 1}
			Take $S=\Rp$ with the usual operations of sum and multiplication. Consider $X=\{x,y,z,a,b\}$, $A_1 =\{x,y\}$, $A_2=\{y,z\}$ and $A_3=\{a,b\}$. Let $\Phi \in \mon S (\pow X)$ be defined as
			\[\Phi = ( A_1 \mapsto 5, \quad A_2 \mapsto 9, \quad A_3 \mapsto 13)\]
			and $\Phi(A)=0$ for all other sets $A\subseteq X$, so $\supp \Phi = \{A_1,A_2,A_3\}$. In order to find an element $\phi \in \delta_X(\Phi)$, we can first take a $\psi\in \mon S (\ni_X)$ satisfying  condition (a) in \eqref{eqn:def of delta} and then compute the $\phi\in \mon S X$ using condition (b).
			
			Among the $\psi \in \mon S (\ni_X)$, consider for instance the following:
			\[
			\psi = \left(  
			\begin{array}{rclrclrcl}
				(A_1,x)&\mapsto& 2 \quad & (A_2,y)&\mapsto& 4 \quad (A_3,a)&\mapsto& 6  \\
				(A_1,y)&\mapsto& 3 \quad & (A_2,z)&\mapsto& 5 \quad (A_3,b)&\mapsto& 7  & 
			\end{array}
			\right).
			\]
			Since $\Phi (A_1) = \psi(A_1,x) + \psi(A_1,y)$, $\Phi(A_2)=\psi(A_2,y) + \psi(A_2,z)$ and $\Phi(A_3)=\psi(A_3,a) + \psi(A_3, b)$, we have that $\psi$ satisfies condition (a) in \eqref{eqn:def of delta}. Condition (b) forces $\phi$ to be the following:
			\[\phi = ( x \mapsto 2, \quad y \mapsto 3+4, \quad z \mapsto 5, \quad a \mapsto 6, \quad b\mapsto 7).\]
		\end{exa}
		
		\begin{rem}
			If $S$ enjoys (A) in Table~\ref{tab:semiring properties}, then the transformation $\delta$ given in~(\ref{eqn:def of delta}) is actually a distributive law, and for $S=\N$ we recover the well-known $\delta$ of Example~\ref{example:distributive law for N}. Example~\ref{ex:working example delta=delta' part 1} can be repeated with $S=\N$: then $\Phi$ is the multiset where the set $A_1$ occurs five times, $A_2$ nine times and $A_3$ thirteen times. The elements of $\delta_X(\Phi)$ are all those multisets containing one element per copy of $A_1$, $A_2$ and $A_3$ in $\supp\Phi$. The $\phi$ provided indeed contains five elements of $A_1$ (two copies of $x$ and three of $y$), nine elements of $A_2$ (four copies of $y$ and five of $z$), thirteen elements of $A_3$ (six copies of $a$ and seven of $b$).
		\end{rem}
		
		As Example~\ref{ex:working example delta=delta' part 1} shows, each element $\phi$ of $\delta_X(\Phi)$ is determined by a function $\psi$ choosing for each set $A \in \supp \Phi$ a finite number of elements $x^A_1,\dots,x^A_m$ in $A$ and $s^A_1,\dots,s^A_m$ in $S$ in such a way that $\sum_{j=1}^m s^A_j = \Phi(A)$. The function $\phi$ maps each $x^A_j$ to $s^A_j$ if the sets in $\supp \Phi$ are \emph{disjoint}; if however there are $x^A_{j}$ and $x^B_k$ such that $x^A_j=x^B_k$ (like $y$ in Example~\ref{ex:working example delta=delta' part 1}), then $x^A_j$ is mapped to $s^A_j + s^B_k$. 
		
		Among those $\psi$'s, there are some special, \emph{minimal} ones as it were, that choose for each $A$ in $\supp \Phi$ exactly \emph{one} element of $A$, and assign to it $\Phi(A)$. The induced $\phi$ in $\delta_X(\Phi)$ can be described as $\phi(x)=\sum_{A\in u^{-1}\{x\}} \Phi(A)$ (equivalently $\phi = \mon{S}(u)(\Phi)$\footnote{More precisely, we should write $\S(u)(\Phi')$ where $\Phi'$ is the restriction of $\Phi$ to $\supp \Phi$.}) where $u\colon \supp \Phi \to X$ is a function selecting an element of $A$ for each $A\in \supp \Phi$ (that is $u(A)\in A$). We denote the set of such $\phi$'s by $\choice \Phi$.
		\begin{equation}\label{eqn:c(Phi)}
			\choice \Phi =\{ \mon{S}(u)(\Phi) \mid u \colon \supp \Phi \to X \text{ such that } \forall A \in \supp \Phi \ldotp u(A)\in A \}
		\end{equation}
		\begin{exa}
			Take $X$, $A_1$ and $A_2$ as in Example~\ref{ex:working example delta=delta' part 1}, but a different, smaller, $\Phi\in  \mon S (\pow X)$ defined as
			$\Phi = ( A_1 \mapsto 1, \quad A_2 \mapsto 2)$.
			There are only four functions $u\colon \supp \Phi \to X$ such that $u(A)\in A$ and thus only four  functions $\phi$ in $\choice \Phi$:
			\[\begin{array}{c|l}
				u_1=(A_1 \mapsto x, \quad A_2 \mapsto y) \quad & \quad  \phi_1 = (x \mapsto 1, \; y \mapsto 2)   \\
				u_2=(A_1 \mapsto x, \quad A_2 \mapsto z) \quad & \quad  \phi_2 = (x \mapsto 1, \; z \mapsto 2)   \\
				u_3=(A_1 \mapsto y, \quad A_2 \mapsto y) \quad & \quad  \phi_3 = ( y \mapsto 3)   \\
				u_4=(A_1 \mapsto y, \quad A_2 \mapsto z) \quad & \quad  \phi_4 = (y \mapsto 1, \; z \mapsto 2)  
			\end{array}\]
			Observe that the function $\phi= (x \mapsto 1, y\mapsto 1, z\mapsto 1)$ belongs to $\delta_X(\Phi)$ but not to $\choice \Phi$. Nevertheless $\phi$ can be retrieved as the convex combination $\frac{1}{2}\cdot \phi_1 + \frac{1}{2} \cdot \phi_2$.
		\end{exa}
		
		Our key result of this section, Theorem~\ref{thm:delta for positive refinable semifields} below, states that every $\phi \in \delta_X(\Phi)$ can be written as a convex combination (performed in the $\S$-algebra $(\S X,\muS_X)$) of functions in $\choice\Phi$, at least when $S$ is a positive semifield, which by Remark~\ref{rem:S positive refinable semifield then ok} and Proposition~\ref{prop:positive semifield implies refinable} satisfies all the conditions that make (\ref{eqn:def of delta}) a weak distributive law. The proof is rather laborious and will be the subject of the rest of this section.
		\begin{thm}\label{thm:delta for positive refinable semifields}
			Let $S$ be a positive semifield. Then for all sets $X$ and $\Phi \in \S \P X$
			\begin{equation}\label{eqn:delta for positive refinable semifields}
				\delta_X(\Phi)=\left\{ \muS_X(\Psi) \mid \Psi \in \mon S^2 X\ldotp \sum\limits_{\phi \in \mon S X} \Psi(\phi)=1, \, \supp \Psi \subseteq \choice\Phi \right\} = \convclos{\choice \Phi}{\muS_X}.
			\end{equation}
		\end{thm}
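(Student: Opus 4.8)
The second equality in~\eqref{eqn:delta for positive refinable semifields} is just Definition~\ref{def:convex closure S-ALGEBRA} of convex closure instantiated at the $\S$-algebra $(\S X, \muS_X)$, so the real content is the first equality $\delta_X(\Phi) = \convclos{\choice\Phi}{\muS_X}$, which I would prove by two inclusions. Throughout write $\supp\Phi = \{A_1, \dots, A_n\}$ and $s_i = \Phi(A_i)$, so $s_i \ne 0$. One preliminary observation I would use repeatedly: in~\eqref{eqn:def of delta}, positivity of $S$ forces any witness $\psi \in \S(\ni_X)$ of an element $\phi \in \delta_X(\Phi)$ to be supported inside $\{(A_i, x) \mid x \in A_i\}$, since for $A \notin \supp\Phi$ condition~(a) reads $0 = \sum_{x \in A} \psi(A,x)$, whence $\psi(A,x) = 0$ for all $x \in A$.

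For the inclusion $\convclos{\choice\Phi}{\muS_X} \subseteq \delta_X(\Phi)$ it suffices to prove that $\choice\Phi \subseteq \delta_X(\Phi)$ and that $\delta_X(\Phi)$ is convex in $(\S X, \muS_X)$. For the former, given $u \colon \supp\Phi \to X$ with $u(A) \in A$, the ``minimal'' assignment $\psi(A_i, x) = s_i$ when $x = u(A_i)$ and $\psi(A,x) = 0$ otherwise witnesses $\S(u)(\Phi) \in \delta_X(\Phi)$: condition~(a) gives $\sum_{x \in A_i} \psi(A_i, x) = s_i$, and condition~(b) gives $\sum_{A \ni x} \psi(A,x) = \sum_{i\,:\,u(A_i) = x} s_i = \S(u)(\Phi)(x)$. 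For convexity, if $\phi_1, \dots, \phi_k \in \delta_X(\Phi)$ have witnesses $\psi_1, \dots, \psi_k$ and $\sum_j \lambda_j = 1$ in $S$, then $\psi = \sum_j \lambda_j \psi_j$ lies in $\S(\ni_X)$ and witnesses $\sum_j \lambda_j \phi_j \in \delta_X(\Phi)$, because $\sum_{x \in A} \psi(A,x) = \sum_j \lambda_j \Phi(A) = \Phi(A)$ and $\sum_{A \ni x} \psi(A,x) = \sum_j \lambda_j \phi_j(x)$; this step needs nothing about $S$ beyond the semiring axioms.

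The reverse inclusion $\delta_X(\Phi) \subseteq \convclos{\choice\Phi}{\muS_X}$ is the laborious half, and it is here that the semifield hypothesis --- genuinely stronger than the positivity, refinability and properties (B), (E) that make~\eqref{eqn:def of delta} a weak distributive law --- is needed: a general witness $\psi$ is \emph{not} minimal, it spreads the mass $s_i$ over several elements of $A_i$ with arbitrary coefficients summing to $s_i$, and one must reassemble $\phi$ as an honest convex combination of minimal choice functions. Given $\phi \in \delta_X(\Phi)$ with witness $\psi$, I would first normalise inside each set, putting $\bar\psi(A_i, x) = \psi(A_i, x) \cdot s_i^{-1}$; this is legitimate because $s_i \ne 0$ is invertible in the semifield $S$, and by right distributivity $\sum_{x \in A_i} \bar\psi(A_i, x) = s_i \cdot s_i^{-1} = 1$. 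Since each $\bar\psi(A_i, -)$ is finitely supported there are only finitely many tuples $\vec a = (a_1, \dots, a_n)$ with $a_i \in A_i$ and $\bar\psi(A_i, a_i) \ne 0$; for each such $\vec a$ let $u_{\vec a} \colon \supp\Phi \to X$ send $A_i$ to $a_i$, and define $\Psi \in \S^2 X$ by assigning the value $\prod_{i=1}^n \bar\psi(A_i, a_i)$ to the point $\S(u_{\vec a})(\Phi)$ and $0$ elsewhere. Then $\supp\Psi \subseteq \choice\Phi$ and $\sum_{\phi' \in \S X} \Psi(\phi') = \prod_{i=1}^n \bigl( \sum_x \bar\psi(A_i, x) \bigr) = 1$, so $\muS_X(\Psi) \in \convclos{\choice\Phi}{\muS_X}$; it remains to check $\muS_X(\Psi) = \phi$. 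Evaluating at an arbitrary $x \in X$ and rearranging the resulting finite double sum with the semimodule distributivity axioms and distributivity in $S$ collapses it, one factor at a time, to $\muS_X(\Psi)(x) = \sum_{i=1}^n s_i \cdot \bar\psi(A_i, x) = \sum_{i=1}^n \psi(A_i, x) = \phi(x)$, the last equality being condition~(b) together with the support constraint on $\psi$ noted above.

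I expect the main obstacle to be not a conceptual one but the bookkeeping in this last verification. A tidier organisation, which I would probably adopt in the write-up, factors it through three easy facts: first $\delta_X(\Phi) = \sum_{i=1}^n \delta_X\bigl((A_i \mapsto s_i)\bigr)$, a Minkowski sum in $(\S X, \muS_X)$ obtained by splitting a witness along $\supp\Phi$; second the single-set case $\delta_X\bigl((A \mapsto s)\bigr) = \{\phi \in \S X \mid \supp\phi \subseteq A,\ \sum_x \phi(x) = s\} = \convclos{\{s\, \Delta_a \mid a \in A\}}{\muS_X}$ for $s \ne 0$, immediate from~\eqref{eqn:def of delta} and from dividing by $s$; and third the elementary identity $\convclos{B + C}{a} = \convclos{B}{a} + \convclos{C}{a}$, valid in any $\S$-algebra, which is a two-line computation with the semimodule distributivity axioms. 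Combining these with the observation $\choice\Phi = \sum_{i=1}^n \{s_i\, \Delta_a \mid a \in A_i\}$ yields $\delta_X(\Phi) = \sum_i \convclos{\{s_i \Delta_a \mid a \in A_i\}}{\muS_X} = \convclos{\choice\Phi}{\muS_X}$. Either way, one must keep the degenerate cases straight --- $\supp\Phi = \emptyset$, where both sides reduce to $\{\zero\}$, and $\emptyset \in \supp\Phi$, where both sides are empty --- and make sure every rearrangement stays valid over a possibly non-commutative semifield.
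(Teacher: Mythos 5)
Your proposal is correct, and for the difficult inclusion $\delta_X(\Phi) \subseteq \convclos{\choice\Phi}{\muS_X}$ it is essentially the paper's argument: the paper also takes, for each tuple of choices $w \in \prod_i \natset{s_i}$ of one $\psi$-selected element per $A_i$, the function $\chi_w$ weighted by $\prod_i \psi(A_i,x^i_{w_i}) \cdot \bigl(\prod_i \Phi(A_i)\bigr)^{-1}$ (your per-factor normalisation $\bar\psi(A_i,-) = \psi(A_i,-)\cdot s_i^{-1}$ is the same thing), and it isolates the generalised distributivity identity $\sum_w \prod_k \lambda^k_{w_k} = \prod_k \sum_j \lambda^k_j$ as a standalone lemma to collapse the double sum, exactly the rearrangement you describe. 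Where you diverge is in the easy inclusion and in the sketched alternative. For the easy inclusion the paper constructs, for each convex combination $\Psi$ of elements of $\choice\Phi$, an explicit witness $\psi(A_i,x) = \sum_{j: x = u^j_i} \Psi(\chi_j)\Phi(A_i)$; your route --- show $\choice\Phi \subseteq \delta_X(\Phi)$ via minimal witnesses and then that $\delta_X(\Phi)$ is convex because the two defining conditions are linear in $\psi$ --- is a genuine simplification and buys a reusable fact (convexity of $\delta_X(\Phi)$) that the paper only obtains implicitly. Your Minkowski-sum factorisation $\delta_X(\Phi) = \sum_i \delta_X\bigl((A_i\mapsto s_i)\bigr)$ together with $\convclos{B+C}{a} = \convclos{B}{a} + \convclos{C}{a}$ is a genuinely different, more modular decomposition that the paper does not take; it isolates the single-set case (which is essentially the paper's Lemma~\ref{lemma:delta of Dirac} rescaled by $s$) and would shorten the bookkeeping considerably. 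One point you should make explicit in a write-up: distinct tuples $\vec a \ne \vec a'$ can yield $\S(u_{\vec a})(\Phi) = \S(u_{\vec a'})(\Phi)$ when the $A_i$ overlap, so $\Psi$ must assign to each function the \emph{sum} of the products over all tuples producing it --- the paper devotes its step $(\ast_1)$ precisely to this --- and note that the free rearrangement of factors (which the paper also performs) tacitly assumes $S$ commutative, consistent with the intended examples $\Rp$ and $\Bool$.
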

		
		\begin{rem}\label{rem:nice delta does not hold for minimal semirings}
			If we drop the hypothesis of semifield and only have the minimal assumptions of Theorem~\ref{thm:existence of delta}, then (\ref{eqn:delta for positive refinable semifields}) does not hold any more: $S=\N$ is a counterexample. Indeed, in this case every subset of $\S X$ is convex with respect to $\muS_X$ (see Example~\ref{ex:convex subsets of N-semimodules}), therefore we would have $\delta_X(\Phi)=\choice\Phi$, which is false: the function $\phi$ of Example~\ref{ex:working example delta=delta' part 1} is an example of an element in $\delta_X(\Phi) \setminus \choice\Phi$.
		\end{rem}
		
		\newcommand{\macroA}{\mathcal{A}}
		\begin{rem}\label{remarkKlinRot}
			When $S = \Bool$ (which is a positive semifield), the monad $\S$ coincides with the monad $\P_f$. The function $\choice{\cdot}$ in~\eqref{eqn:c(Phi)} can then be described as
			\begin{equation*}
				\choice \macroA =\{ \P_f(u)(\macroA) \mid u \colon \macroA \to X \text{ such that } \forall A \in \macroA \ldotp u(A)\in A \}
			\end{equation*}
			for all $\macroA \in \P_f \P X $. It is worth remarking that this is the transformation $\chi$ appearing in Example 9 of~\cite{DBLP:conf/fossacs/KlinR15} (which is in turn equivalent to the one in Example 2.4.7 of~\cite{manes2007monad}). This transformation was erroneously supposed to be a distributive law, as it fails to be natural (see~\cite{DBLP:journals/entcs/KlinS18}). However, by taking its convex closure, as displayed in~\eqref{eqn:delta for positive refinable semifields}, one can turn it into a \emph{weak} distributive law.
		\end{rem}
		
		The difficult inclusion of~\eqref{eqn:delta for positive refinable semifields} will be the left-to-right: before giving the proof, we shall present a ``working example'' which shows what sort of techniques one has to do to find a convex linear combination $\Psi$ in $\mon S^2 X$ of elements in $\choice\Phi$ for a given $\phi \in \delta_X(\Phi)$ in such a way that $\muS_X(\Psi)=\phi$. In particular we will use the possibility of performing divisions in the semiring, whence the required hypothesis for $S$ to be a semifield in Theorem~\ref{thm:delta for positive refinable semifields}.
		
		\begin{exa}\label{ex:working example delta=delta' part 2}
			Let us continue Example~\ref{ex:working example delta=delta' part 1}. We have shown that
			\[\phi = ( x \mapsto 2, \quad y \mapsto 3+4, \quad z \mapsto 5, \quad a \mapsto 6, \quad b\mapsto 7)\]
			is an element of $\delta_X(\Phi)$ determined by
			\[
			\psi = \left(  
			\begin{array}{rclrclrcl}
				(A_1,x)&\mapsto& 2 \quad & (A_2,y)&\mapsto& 4 \quad (A_3,a)&\mapsto& 6  \\
				(A_1,y)&\mapsto& 3 \quad & (A_2,z)&\mapsto& 5 \quad (A_3,b)&\mapsto& 7  & 
			\end{array}
			\right) \in \S(\ni_X).
			\]
			We want to write $\phi$ as a convex linear combination in the $\S$-algebra $(\S X, \muS_X)$ of elements in $\choice\Phi$, which means we have to find some functions $\chi \in \choice\Phi$ and some numbers $\Psi(\chi)$ for all these $\chi$'s such that their sum is $1$ and $\muS_X(\Psi)=\phi$. Now, to give a $\chi \in \choice\Phi$ we have to choose one element in each $A_i$ (which will have to be one of those that $\psi$ above has already ``selected''): we have a choice of $x$ or $y$ for $A_1$, $y$ or $z$ for $A_2$ and $a$ or $b$ for $A_3$. With the aim of finding a general strategy that does not rely on the peculiarities of this example where, for instance, it happens that the number of elements for each $A_i$ we can choose from is the same, we shall consider \emph{all} the possible combinations of choices, and define one $\chi$ for each. The number of these combinations is $2 \cdot 2 \cdot 2$, that is the cardinality of the set $\natset 2 \times \natset 2 \times \natset 2$, whose elements are indeed triples of indexes $1$, $2$. So, write
			\begin{align*}
				x^1_1 &= x & x^2_1 &= y & x^3_1 &= a \\
				x^1_2 &= y & x^2_2 &= z & x^3_2 &= b
			\end{align*}
			Now, for each $w=(w_1,w_2,w_3) \in \natset 2 ^3$, we define $\chi_w$ as the function choosing the $w_i$-th element in $A_i$ among those given by $\psi$, namely $x^i_{w_i}$; all these $\chi_w$ will always associate to the $w_i$-th element of $A_i$ the number $\Phi(A_i)$. For example, for $w^1=(1,1,1)$, we have 
			\[
			\chi_{w^1} = \left(  
			\begin{tikzcd}[row sep=0em,ampersand replacement=\&]
				x^1_1 \ar[r,|->] \& \Phi(A_1) \\
				x^2_1 \ar[r,|->] \& \Phi(A_2) \\
				x^3_1 \ar[r,|->] \& \Phi(A_3)
			\end{tikzcd}
			\right)
			\]
			while for $w^4=(1,2,2)$ we have
			\[
			\chi_{w^4} = \left(  
			\begin{tikzcd}[row sep=0em,ampersand replacement=\&]
				x^1_1 \ar[r,|->] \& \Phi(A_1) \\
				x^2_2 \ar[r,|->] \& \Phi(A_2) \\
				x^3_2 \ar[r,|->] \& \Phi(A_3)
			\end{tikzcd}
			\right)
			\]
			and so on. All in all, the eight functions $\chi_w$'s are:
			\begin{align*}
				\chi_{w^1} = \left(  
				\begin{tikzcd}[row sep=0em,ampersand replacement=\&]
					x \ar[r,|->] \& 5 \\
					y \ar[r,|->] \& 9 \\
					a \ar[r,|->] \& 13
				\end{tikzcd}
				\right)
				& &
				\chi_{w^2} = \left(  
				\begin{tikzcd}[row sep=0em,ampersand replacement=\&]
					x \ar[r,|->] \& 5 \\
					y \ar[r,|->] \& 9 \\
					b \ar[r,|->] \& 13
				\end{tikzcd}
				\right) 
				\displaybreak[0] \\
				\chi_{w^3} = \left(  
				\begin{tikzcd}[row sep=0em,ampersand replacement=\&]
					x \ar[r,|->] \& 5 \\
					z \ar[r,|->] \& 9 \\
					a \ar[r,|->] \& 13
				\end{tikzcd}
				\right)
				& &
				\chi_{w^4} = \left(  
				\begin{tikzcd}[row sep=0em,ampersand replacement=\&]
					x \ar[r,|->] \& 5 \\
					z \ar[r,|->] \& 9 \\
					b \ar[r,|->] \& 13
				\end{tikzcd}
				\right)
				\displaybreak[0] \\
				\chi_{w^5} = \left(  
				\begin{tikzcd}[row sep=0em,ampersand replacement=\&]
					y \ar[r,|->] \& 5 \\
					y \ar[r,|->] \& 9 \\
					a \ar[r,|->] \& 13
				\end{tikzcd}
				\right)
				& &
				\chi_{w^6} = \left(  
				\begin{tikzcd}[row sep=0em,ampersand replacement=\&]
					y \ar[r,|->] \& 5 \\
					y \ar[r,|->] \& 9 \\
					b \ar[r,|->] \& 13
				\end{tikzcd}
				\right)
				\displaybreak[0] \\
				\chi_{w^7} = \left(  
				\begin{tikzcd}[row sep=0em,ampersand replacement=\&]
					y \ar[r,|->] \& 5 \\
					z \ar[r,|->] \& 9 \\
					a \ar[r,|->] \& 13
				\end{tikzcd}
				\right)
				& &
				\chi_{w^8} = \left(  
				\begin{tikzcd}[row sep=0em,ampersand replacement=\&]
					y \ar[r,|->] \& 5 \\
					z \ar[r,|->] \& 9 \\
					b \ar[r,|->] \& 13
				\end{tikzcd}
				\right)
			\end{align*}
			with the understanding that $\chi_{w^5}(y)=\chi_{w^6}(y)=5+9$.
			
			Now, how should we define $\Psi(\chi_w)$ in such a way that $\sum_{w} \Psi(\chi_w)=1$? We shall use the image along $\psi$ of the elements we chose in each $\chi_w$, multiplying them all, and dividing by the product of $\Phi(A_i)$, for example,
			\[
			\Psi(\chi_{w^1}) = \frac{\psi(A_1,x) \cdot \psi(A_2,y) \cdot \psi(A_3,a)} {\Phi(A_1) \cdot \Phi(A_2) \cdot \Phi(A_3)}.
			\]
			(Here we crucially rely on the fact that $\Rp$ is a semi\emph{field}.) So, we have
			\begin{align*}
				\Psi(\chi_{w^1}) = \frac{2 \cdot 4 \cdot 6}{5 \cdot 9 \cdot 13}
				& &
				\Psi(\chi_{w^2}) = \frac{2 \cdot 4 \cdot 7}{5 \cdot 9 \cdot 13}
				\\[1em]
				\Psi(\chi_{w^3}) = \frac{2 \cdot 5 \cdot 6}{5 \cdot 9 \cdot 13}
				& &
				\Psi(\chi_{w^4}) = \frac{2 \cdot 5 \cdot 7}{5 \cdot 9 \cdot 13}
				\\[1em]
				\Psi(\chi_{w^5}) = \frac{3 \cdot 4 \cdot 6}{5 \cdot 9 \cdot 13}
				& &
				\Psi(\chi_{w^6}) = \frac{3 \cdot 4 \cdot 7}{5 \cdot 9 \cdot 13}
				\\[1em]
				\Psi(\chi_{w^7}) = \frac{3 \cdot 5 \cdot 6}{5 \cdot 9 \cdot 13}
				& &
				\Psi(\chi_{w^8}) = \frac{3 \cdot 5 \cdot 7}{5 \cdot 9 \cdot 13}
			\end{align*}
			Now, in order to calculate $\muS_X(\Psi)(y)$, for example, we have two, equivalent algorithms we can follow. The first one scans each $\chi_{w^k}$ (with $1 \le k \le 8$) one at a time, checks each ``line'' (indexed by $i \in \{1,2,3\}$) to see if $y$ appears there, and if it does, writes down the addend $\Psi(\chi_{w^k}) \cdot \chi_{w^k}(y)$ (where the second factor is $\Phi(A_1)$ if $y$ appears in the first line, and $\Phi(A_2)$ if it appears in the second), moves to the next line of $\chi_{w^k}$, repeats until it has to change $k$. The other algorithm, instead, scans first each $i \in \natset 3$ and checks for each $k$ if $y$ appears in the $i$-th line of $\chi_{w^k}$. If it does, it writes down the addend $\Psi(\chi_{w^k}) \cdot \chi_{w^k}(y)$, then moves on to the next function, until it has to change $i$. Following this second approach, we obtain that $\muS_X(\Psi)(y)$ is equal to:
			\begin{align*}
				5 \cdot \frac{3 \cdot 4 \cdot 6}{5 \cdot 9 \cdot 13} + 5 \cdot \frac{3 \cdot 4 \cdot 7}{5 \cdot 9 \cdot 13} + 5 \cdot \frac{3 \cdot 5 \cdot 6}{5 \cdot 9 \cdot 13} + 5 \cdot \frac{3 \cdot 5 \cdot 7}{5 \cdot 9 \cdot 13}+ & & (i=1,\,k=5,6,7,8) \\[1em]
				+9 \cdot \frac{2 \cdot 4 \cdot 6}{5 \cdot 9 \cdot 13} + 9 \cdot \frac{2 \cdot 4 \cdot 7}{5 \cdot 9 \cdot 13} + 9 \cdot \frac{3 \cdot 4 \cdot 6}{5 \cdot 9 \cdot 13} + 9 \cdot \frac{3 \cdot 4 \cdot 7}{5 \cdot 9 \cdot 13} & & (i=2,\,k=1,2,5,6)
			\end{align*} 
			Let us look at the first row, corresponding to index $i=1$. There are a few repeated numbers, which we can isolate on the left: there is $5=\Phi(A_1)$, $\frac{1} {5 \cdot 9 \cdot 13} = \frac 1 {\Phi(A_1) \cdot \Phi(A_2) \cdot \Phi(A3)}$, and there is also $3 = \psi(A_1,y)$. Similarly, in the second row, when $i=2$, we can isolate $\Phi(A_2)$, $\frac 1 {\Phi(A_1) \cdot \Phi(A_2) \cdot \Phi(A3)}$ and $\psi(A_2,y)$. So, we have that $\mu(\Psi)(y)$ is equal to
			\begin{align*}
				5 \cdot \frac{1} {5 \cdot 9 \cdot 13} \cdot 3 \cdot (4 \cdot 6 + 4 \cdot 7 + 5 \cdot 6 + 5 \cdot 7) +\\
				+9 \cdot \frac 1 {5 \cdot 9 \cdot 13} \cdot 4 \cdot (2 \cdot 6 + 2 \cdot 7 + 3 \cdot 6 + 3 \cdot 7)
			\end{align*}
			Now, by collecting together the same factors, we have that
			\begin{equation}\label{eqn:example generalised distributivity}
				4 \cdot 6 + 4 \cdot 7 + 5 \cdot 6 + 5 \cdot 7 = (4+5) \cdot (6+7)
			\end{equation}
			which is precisely, by construction, $\Phi(A_1) \cdot \Phi(A_3)$. Hence, by simplifying everything, we obtain
			\[
			\muS_X(\Psi)(y)=3+4=\psi(A_1,y)+\psi(A_2,y) = \phi(y).
			\]
			In a similar way, one can also prove that $\muS_X(\Psi)(t)=\phi(t)$ for all $t \in X$.
		\end{exa}
		
		The proof of Theorem~\ref{thm:delta for positive refinable semifields} is a direct generalisation of the calculations taken in the previous example. We need a lemma that generalises the distributivity property in an arbitrary semiring, showing how the observation of equation~(\ref{eqn:example generalised distributivity}) can always be done. The statement of the following Lemma makes sense only for commutative semirings, but it can be adapted for arbitrary semirings by restricting it to sets $L$ only of the form $\natset n = \{1,\dots,n\}$.
		
		\begin{lem}\label{lemma:generalised distributivity}
			Let $n \in \N$, $L \subset \N$ such that $\card L = n$,  $(s_k)_{k \in L} \in \N^n$, and $(\lambda^k_j)$ a family of elements of $S$, with $k \in L$ and $j \in \natset{s_k}$.  Then
			\begin{equation}\label{eqn:generalised distributivity}
				\sum_{w \in \prod\limits_{k \in L} \natset {s_k}} \, \prod_{k \in L} \lambda^k_{w_k} = 
				\prod_{k \in L} \, \sum_{j=1}^{s_k} \lambda^k_j.
			\end{equation}
			(Mind the overloading use of $\prod$ as a cartesian product of the sets $\underline{s_k}$ and also as a product of the elements $\lambda^k_{w_k}$ and $\sum_{j=1}^{s_k} \lambda^k_j$ in $S$.)
		\end{lem}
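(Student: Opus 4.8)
The plan is to prove~\eqref{eqn:generalised distributivity} by induction on $n = \card L$, using only associativity and commutativity of the semiring operations together with the distributivity of $\cdot$ over $+$. For the base case $n = 0$ we have $L = \emptyset$: the cartesian product $\prod_{k \in \emptyset}\natset{s_k}$ is the one-element set consisting of the empty tuple, over which the empty product $\prod_{k \in \emptyset}\lambda^k_{w_k}$ evaluates to $1$, so the left-hand side is $1$; and the right-hand side is the empty product $\prod_{k \in \emptyset}(\cdots)$, which is also $1$. (If one prefers to start at $n=1$, both sides are then literally $\sum_{j=1}^{s_k}\lambda^k_j$.)

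For the inductive step, assume~\eqref{eqn:generalised distributivity} for all index sets of size $n$ and let $\card L = n+1$. Fix any $\ell \in L$ and set $L' = L \setminus \{\ell\}$, so that $\card{L'} = n$. The obvious bijection $\prod_{k \in L}\natset{s_k} \cong \natset{s_\ell} \times \prod_{k \in L'}\natset{s_k}$, sending $w$ to $(w_\ell, w|_{L'})$, allows us to re-index the outer sum on the left-hand side:
\[
\sum_{w \in \prod_{k \in L}\natset{s_k}}\ \prod_{k \in L}\lambda^k_{w_k}
\;=\; \sum_{j=1}^{s_\ell}\ \sum_{w' \in \prod_{k \in L'}\natset{s_k}}\ \lambda^\ell_j \cdot \prod_{k \in L'}\lambda^k_{w'_k}.
\]
Since $\lambda^\ell_j$ does not depend on $w'$, distributivity lets us pull it out of the inner sum, and the inductive hypothesis applied to the family $(\lambda^k_j)_{k \in L',\, j \in \natset{s_k}}$ rewrites the remaining inner sum as $\prod_{k \in L'}\sum_{i=1}^{s_k}\lambda^k_i$; thus the right-hand side above equals $\sum_{j=1}^{s_\ell}\lambda^\ell_j \cdot \prod_{k \in L'}\sum_{i=1}^{s_k}\lambda^k_i$. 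One last application of distributivity collects the common factor $\prod_{k \in L'}\sum_{i=1}^{s_k}\lambda^k_i$, giving $\bigl(\sum_{j=1}^{s_\ell}\lambda^\ell_j\bigr)\cdot\prod_{k \in L'}\sum_{i=1}^{s_k}\lambda^k_i = \prod_{k \in L}\sum_{j=1}^{s_k}\lambda^k_j$, which is the right-hand side of~\eqref{eqn:generalised distributivity}.

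There is no real obstacle here: the only thing to watch is the bookkeeping of the re-indexing bijection and the degenerate empty product in the base case. Note that the inductive step never appeals to commutativity of multiplication, as long as one keeps the factors indexed by $L$ in a fixed linear order and always splits off the last index $\ell$; this is precisely why the lemma, as invoked in the proof of Theorem~\ref{thm:delta for positive refinable semifields}, remains valid for arbitrary (possibly non-commutative) semirings once $L$ is taken to be of the form $\natset n$.
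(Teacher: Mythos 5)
Your proof is correct and follows essentially the same route as the paper's: induction on $\card{L}$, splitting off one index, and using distributivity together with the inductive hypothesis applied to the remaining $n$ indices. The only divergence is cosmetic: in the base case the paper asserts both sides are $0$, whereas you take the standard empty-product convention under which both sides are $1$ — either way the identity holds for $n=0$ (and your convention is in fact the one needed when the lemma is later applied with $L = \{1,\dots,i-1,i+1,\dots,n\}$ and $n=1$).
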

		\begin{proof}
			By induction on $n$. If $n=0$, both sides of~(\ref{eqn:generalised distributivity}) are $0$. 
			
			Let now $n \ge 0 $, suppose that the statement of the Lemma holds for $n$, let us consider a $L \subset \N$ such that $\card L = n+1$, $(s_k)_{k \in L} \in \N^{n+1}$, $(\lambda^k_j)_{\substack{k \in L \\ j \in \natset{s_k}}}$. Without loss of generality we can assume that $L=\natset{n+1}$. Then we have:
			\begin{align*}
				\sum_{w \in \prod\limits_{k =1}^{n+1} \natset{s_k}} \, \prod_{k =1}^{n+1} \lambda^k_{w_k} &=
				\sum_{j=1}^{s_{n+1}} \biggl( \sum_{w' \in \prod\limits_{k =1}^{n} \natset{s_k}} \, \Bigl((\prod_{k =1}^{n} \lambda^k_{w_k'}) \cdot \lambda^{n+1}_j \Bigr) \biggr) \displaybreak[0] \\
				&= \sum_{j=1}^{s_{n+1}} \biggl( \Bigl( \sum_{w' \in \prod\limits_{k =1}^{n} \natset{s_k}} \, \prod_{k =1}^{n} \lambda^k_{w_k'}\Bigr) \cdot \lambda^{n+1}_j  \biggr) \displaybreak[0] \\
				&= \sum_{j=1}^{s_{n+1}} \biggl( \Bigl( \prod_{k=1}^n \, \sum_{u=1}^{s_k} \lambda^k_u \Bigr) \cdot \lambda^{n+1}_j \biggr) \displaybreak[0] \\
				&= \Bigl( \prod_{k=1}^n \, \sum_{u=1}^{s_k} \lambda^k_u \Bigr) \cdot \Bigl( \sum_{j=1}^{s_{n+1}} \lambda^{n+1}_j  \Bigr)\displaybreak[0]  \\
				&= \prod_{k=1}^{n+1} \, \sum_{j=1}^{s_k} \lambda^k_j. \qedhere
			\end{align*} 
		\end{proof}
		
		\subsubsection*{Proof of Theorem~\ref{thm:delta for positive refinable semifields}.} 	We shall discuss first some preliminary cases involving the empty set, excluding each time all the cases previously covered. Recall that 
		\[
		\delta_X(\Phi)=\Biggl\{ \phi \in \mon S(X) \mid \exists \psi \in \mon S (\ni_X) \ldotp \begin{cases}
			\forall A \in \pow X \ldotp \Phi(A) = \sum_{x \in A} \psi(A,x) \\
			\forall x \in X \ldotp \phi(x) = \sum_{A \ni x} \psi(A,x)
		\end{cases}  
		\Biggr\}
		\]
		and write, within the scope of this proof, $\delta_X'(\Phi) = \convclos{\choice \Phi}{\muS_X}$, that is:
		\[
		\delta_X'(\Phi)= \left\{ \muS_X(\Psi) \mid \Psi \in \mon S^2 X\ldotp \sum\limits_{\chi \in \mon S X} \Psi(\chi)=1, \, \supp \Psi \subseteq \choice\Phi \right\}
		\]
		where
		\[
		\choice \Phi =\{ \chi \in \mon S X \mid \exists u \in \prod\limits_{A \in \supp \Phi} A \ldotp \forall x \in X \ldotp \chi(x) = \sum\limits_{\substack{A \in \supp \Phi \\ x=u_A}} \Phi(A)  \}
		\]
		(this is of course an equivalent formulation of~\eqref{eqn:c(Phi)}). We need to prove that $\delta_X(\Phi) = \delta_X'(\Phi)$. Since there is no possibility of confusion, we shall write $\ni$ instead of $\ni_X$ for the rest of the proof. Recall that we denote with $\zero$ the null function, \ie the function constantly equal to $0$.
		\paragraph{Case 1: $\Phi(\emptyset)\ne 0$.} We have that $\delta_X(\Phi)=\emptyset$ because there is no $\psi \in \mon S (\ni)$ such that $\Phi(\emptyset)=\sum_{x \in \emptyset} \psi(A,x)=0$. At the same time, also $\delta_X'(\Phi)=\emptyset$, because $\choice{\Phi}=\emptyset$--since $\prod_{A \in \supp \Phi} A = \emptyset$--hence there is no $\Psi \in \mon S ^2 X$ with empty support that can satisfy $\sum_{\chi \in \mon S X} \Psi(\chi)=1$.
		
		\paragraph{Case 2: $X = \emptyset$.} (We also assume that $\Phi(\emptyset)=0$ from now on.) We have that $\mon S \pow \emptyset = \{ \Omega \colon \{\emptyset\} \to S \}$, therefore $\Phi = \zero \colon \{\emptyset\} \to S$, the null function. Moreover, ${}\ni{} \subseteq \pow(\emptyset) \times \emptyset = \emptyset$ and $\mon S (\emptyset) = \{\emptyset \colon \emptyset \to S\}$ is the singleton of the empty map, so
		\begin{align*}
			\delta_\emptyset (\Phi) &= \{ \phi \in \mon S (\emptyset) \mid \exists \psi \in \mon S (\emptyset) \ldotp \forall A \subseteq \emptyset \ldotp \Phi(A)=\sum_{x \in \emptyset} \psi(A,x) \} \\
			&= \{\phi \in \mon S (\emptyset) \mid \Phi(\emptyset)=0\} \\
			&= \{ \emptyset \colon \emptyset \to S  \}
		\end{align*}
		because, by assumption, $\Phi(\emptyset) = 0$. On the other hand, we have that, since $\supp \Phi = \emptyset$, 
		\[
		\choice\Phi = \{ \chi \in \mon S (\emptyset) \mid \exists u \in \prod\limits_{A \in \emptyset} A  \} = \mon S (\emptyset)
		\]
		because the zero-ary product is a choice of a terminal object of $\Set$, a singleton. So, 
		\begin{align*}
			\delta_\emptyset'(\Phi) &= \{ \mu(\Psi) \mid \Psi \in \mon S ^2 \emptyset, \sum_{\chi \in \mon S \emptyset} \Psi(\chi)=1\} \\
			&= \{ \mu(\mon S (\emptyset) \ni \emptyset \mapsto 1)  \} = \{\emptyset \colon \emptyset \to S  \}.
		\end{align*}
		
		\paragraph{Case 3: $\Phi = \zero \colon \pow X \to S$.} (We also assume that $X \ne \emptyset$ from now on.) We have that the only $\psi \in \mon S (\ni)$ such that for all $\sum_{x \in A} \psi(A,x) = 0$ for all $A \subseteq X$ is the null function, therefore $\delta_X(\zero \colon \pow X \to S)=\{\zero \colon X \to S\}$. On the other hand, we have that $\supp \Phi = \emptyset$, so
		\[
		\choice\Phi = \{ \chi \in \mon S X \mid \exists u \in \prod_{A \in \emptyset} A \ldotp \forall x \in X \ldotp \chi(x)=\sum_{A \in \emptyset} \Phi(A) \} = \{\zero \colon X \to S\}.
		\]
		It follows then that
		\begin{align*}
			\delta_X'(\zero \colon \pow S \to S) &= \{\mu(\Psi) \mid \Psi \in \mon S ^2 X, \sum_{\chi \in \mon S X} \Psi(\chi)=1, \supp \Psi \subseteq \{0 \colon X \to S\} \} \\
			&= \{ \mu(\mon S X \ni 0 \mapsto 1)  \} \\
			&= \{ (S \ni x \mapsto 1 \cdot 0)  \} = \{\zero \colon X \to S\}.
		\end{align*}
		
		We have now discussed all the preliminary cases. For the rest of the proof, we shall assume that $X \ne \emptyset$, $\Phi \ne \zero \colon \pow X \to S$ and $\Phi(\emptyset) = 0$.
		
		We first prove $\delta_X(\Phi) \subseteq \delta_X'(\Phi)$. To this end, let $\phi \in \mon S X$ and $\psi \in \mon S (\ni)$ such that $\Phi(A) = \sum_{x \in A} \psi(A,x)$ for all $A \in \pow X$ and $\phi(x)=\sum_{A \ni x} \psi(A,x)$ for all $x \in X$. Observe that:
		\begin{itemize}
			\item for all $(A,x) \in \supp \psi$ we have that $A \in \supp \Phi$ and $x \in \supp \phi \cap A$,
			\item for all $A \in \supp \Phi$ there is $x \in \supp \phi \cap A$ such that $(A,x) \in \supp \psi$,
			\item for all $x \in \supp \phi$ there exists $A \in \supp \Phi$ such that $A \ni x$ and $(A,x) \in \supp \psi$.
		\end{itemize}
		Hence the first elements of pairs in $\supp \psi$ range over all and only elements of $\supp \Phi$, and the second entries of pairs in $\supp \psi$ range over all and only elements of $\supp \phi$. In other words, say $\supp \Phi = \{A_1,\dots,A_n\}$: then we have
		\[
		\supp \psi = \Bigl\{ (A_1,x^1_1), \dots, (A_1,x^1_{s_1}), \dots, (A_n, x^n_1), \dots, (A_n,x^n_{s_n}) \Bigr\}
		\]
		where $\bigcup_{i=1}^n \{x^i_1,\dots,x^i_{s_i}\} = \supp \phi$. Notice that for all $i \in \natset{n}$, $u,v \in \natset{s_i}$, if $u \ne v$ then $x^i_u \ne x^i_v$, but we may have $x^i_a = x^j_b$ if $i \ne j$, because the $A_i$'s are distinct but not \emph{disjoint}. We can then write:
		\begin{enumerate}[(I)]
			\item\label{enumerate 1 lemma delta = delta'}  $\Phi(A_i) = \sum\limits_{j=1}^{s_i} \psi(A_i,x^i_j)$ for all $i \in \natset{n}$
			\item\label{enumerate 2 lemma delta = delta'}  $\phi(x) = \sum\limits_{\substack{i \in \natset n \\ \exists j \in \natset{s_i}\ldotp x = x^i_j}} \psi(A_i,x)$ for all $x \in X$.
		\end{enumerate}
		We want to find a convex linear combination $\Psi$ of elements of $\mon S X$ of the form $\sum_{i=1}^n \Phi(A_i) \cdot a_i$ for some $a_i \in A_i$ such that $\muS_X(\Psi)=\phi$. Now, for every $i \in \natset{n}$, we have many candidates for $a_i$, namely $x^i_1, \dots, x^i_{s_i}$. Given that every $\chi \in \supp \Psi$ can only involve one $x^i_j$ for every $i$, we shall have as many $\chi$'s as the number of ways to pick one element of $\{ x^i_1,\dots,x^i_{s_i} \}$ for each $i$: let then $w \in \prod_{i=1}^n \natset{s_i}$ (it is a tuple of indexes $w_i$ which we are going to use as $j$'s). Define $\chi_w$ as
		\[
		\chi_w(x^1_{w_1}) = \Phi(A_1), \,\dots,\, 
		\chi_w(x^n_{w_n}) = \Phi(A_n) 
		\]
		with the understanding that if $x^i_{w_i} =  x^j_{w_j}$ then $\chi_w(x^i_{w_i}) = \Phi(A_i) + \Phi(A_j)$. Written more precisely, we define for all $x \in X$
		\[
		\chi_w (x) = \sum_{\substack{i \in \natset n \\ x = x^i_{w_i}}} \Phi(A_i).
		\]
		We now define $\Psi \in \mon S ^2 X$ with $\supp \Psi=\{ \chi_w \mid w \in \prod_{i=1}^n \natset{s_i}  \}$ that assigns to each $\chi_w$ a number such that $\sum_w \Psi(\chi_w) = 1$. We shall use the previous Lemma to show that by defining
		\[
		\Psi(\chi_w) = \frac{ \prod_{i=1}^n \psi(A_i,x^i_{w_i}) } { \prod_{i=1}^n \Phi(A_i) }
		\]
		for all $w$, or more precisely by defining
		\[
		\Psi(\chi) = \sum_{\substack{w \in \prod_{i=1}^n \natset{s_i} \\ \chi=\chi_w}} \frac{ \prod_{i=1}^n \psi(A_i,x^i_{w_i}) } { \prod_{i=1}^n \Phi(A_i) }
		\]
		for all $\chi \in \mon S X$ we indeed have that $\Psi$ satisfies the conditions of $\delta_X'(\Phi)$ and $\phi=\muS_X(\Psi)$. First we prove that $\Psi$ is in fact a convex linear combination:
		\begin{align*}
			\sum_{\chi \in \supp \Psi} \Psi(\chi) &= \frac 1 {\prod_{i=1}^n \Phi(A_i)} \sum_{w \in \prod \natset{s_i}} \prod_{i=1}^n \psi(A_i,x^i_{w_i})  \displaybreak[0]\\
			&= \frac 1 {\prod_{i=1}^n \Phi(A_i)} \prod_{i=1}^n \sum_{j=1}^{s_i} \psi(A_i,x^i_j) & \text{Lemma~\ref{lemma:generalised distributivity}} \displaybreak[0]\\
			&= \frac 1 {\prod_{i=1}^n \Phi(A_i)} {\prod_{i=1}^n \Phi(A_i)}  & \text{because of \ref{enumerate 1 lemma delta = delta'}} \\
			&=1.
		\end{align*}
		Next, notice that for every $w$ the vector $u \in \prod_{i=1}^n A_i$ required by the definition of $\delta_X'(\Phi)$ is exactly $(x^1_{w_1},\dots,x^n_{w_n})$. Finally, we compute $\muS_X(\Psi)(x)$ for an arbitrary $x \in X$. The equations marked with $(\ast)$ will be explained later.
		\begin{align*}
			\muS_X(\Psi)(x) &= \sum_{\chi \in \supp \Psi} \Psi(\chi) \cdot \chi(x) \displaybreak[0] \\
			&= \sum_{w \in \prod_{k=1}^n \natset{s_k}} \Biggl[ \frac {\prod_{i=1}^n \psi(A_i,x^i_{w_i})} {\prod_{i=1}^n \Phi(A_i)} \sum_{\substack{i \in \natset{n} \\ x=x^i_{w_i}}} \Phi(A_i) \Biggr] & (\ast_1) \displaybreak[0] \\
			&= \sum_{w \in \prod_{k=1}^n \natset{s_k}} \, \sum_{\substack{i \in \natset{n} \\ x=x^i_{w_i}}} \Biggl[   \frac {\prod_{k=1}^n \psi(A_k,x^k_{w_k})} {\prod_{k=1}^n \Phi(A_k)} \Phi(A_i)  \Biggr] \displaybreak[0] \\
			&= \sum_{i \in \natset{n}} \,\, \sum_{\substack{w \in \prod_{k=1}^n \natset{s_k} \\ x=x^i_{w_i} }} \Biggl[   \frac {\prod_{k=1}^n \psi(A_k,x^k_{w_k})} {\prod_{k=1}^n \Phi(A_k)} \Phi(A_i)  \Biggr]  \displaybreak[0] \\
			&= \sum_{\substack{i \in \natset n \\ \exists j \in \natset{s_i} \ldotp x = x^i_j}} \, \sum_{\substack{w \in \prod_{k=1}^n \natset{s_k} \\ x=x^i_{w_i} }} \Biggl[ \Phi(A_i) \frac {\psi(A_i,x^i_{w_i})} {\prod_{k=1}^n \Phi(A_k)} \prod_{\substack{k \in \natset n \\ k \ne i}} \psi(A_k,x^k_{w_k}) \Biggr] & (\ast_2) \displaybreak[0] \\
			&= \sum_{\substack{i \in \natset n \\ \exists j \in \natset{s_i} \ldotp x = x^i_j}} \Phi(A_i)  \cdot \frac {\psi(A_i,x)} {\prod_{k=1}^n \Phi(A_k)} \cdot \sum_{\substack{w \in \prod_{k=1}^n \natset{s_k} \\ x=x^i_{w_i} }} \, \prod_{\substack{k \in \natset n \\ k \ne i}} \psi(A_k,x^k_{w_k})
		\end{align*}
		Now, use Lemma~\ref{lemma:generalised distributivity} with $L=\{1,\dots,i-1,i+1,\dots,n\}$ and $\lambda^k_j = \psi(A_k,x^k_j)$ in the following chain of equations:
		\begin{align*}
			\sum_{\substack{w \in \prod_{k=1}^n \natset{s_k} \\ x=x^i_{w_i} }} \, \prod_{\substack{k \in \natset n \\ k \ne i}} \psi(A_k,x^k_{w_k}) &= \sum_{w' \in \prod_{k \in L} \natset{s_k} } \, \prod_{k \in L} \psi(A_k,x^k_{w_k'}) \displaybreak[0] \\
			&= \prod_{k \in L} \, \sum_{j=1}^{s_k} \psi(A_k, x^k_j) \displaybreak[0] \\
			&= \prod_{\substack{k \in \natset n \\ k \ne i}} \, \sum_{j=1}^{s_k} \psi(A_k,x^k_j) \displaybreak[0] \\
			&= \prod_{\substack{k \in \natset n \\ k \ne i}} \Phi(A_k)
		\end{align*}
		Therefore, we obtain
		\begin{align*}
			\muS_X(\Phi) &= \sum_{\substack{i \in \natset n \\ \exists j \in \natset{s_i} \ldotp x = x^i_j}} \Phi(A_i)  \cdot \frac {\psi(A_i,x)} {\prod_{k=1}^n \Phi(A_k)} \cdot \prod_{\substack{k \in \natset n \\ k \ne i}} \Phi(A_k) \\
			&= \sum_{\substack{i \in \natset n \\ \exists j \in \natset{s_i} \ldotp x = x^i_j}} \psi(A_i,x) \\
			&= \phi(x) & \text{because of \ref{enumerate 2 lemma delta = delta'}}.
		\end{align*}
		It remains to explain equations $(\ast_1)$ and $(\ast_2)$ above. The latter is simply due to the fact that if $i$ is such that there is no $j$ for which $x =x^i_j$, then there is not a $w$ such that $x=x^i_{w_i}$ either, and vice versa. The former is more delicate. It may be the case that $\chi_w = \chi_{w'}$ for $w \ne w'$. So, let us write
		\[
		\supp \Psi = \{\chi_{w^1}, \dots, \chi_{w^m}\}
		\]
		where the $\chi_{w^l}$ are now all distinct. Then we have
		\begin{align*}
			\sum_{\chi \in \supp \Psi} \Psi(\chi) \cdot \chi(x) &= \sum_{l=1}^m \Psi(\chi_{w^l}) \cdot \chi_{w^l}(x) \\
			&= \sum_{l=1}^m \Biggl( \sum_{\substack{w \\ \chi_{w^l} = \chi_w}} \frac{ \prod_{i=1}^n \psi(A_i,x^i_{w_i}) } { \prod_{i=1}^n \Phi(A_i) } \Biggr) \chi_{w^l}(x) \displaybreak[0] \\
			&= \sum_{l=1}^m\, \sum_{\substack{w \\ \chi_{w^l} = \chi_w}} \frac{ \prod_{i=1}^n \psi(A_i,x^i_{w_i}) } { \prod_{i=1}^n \Phi(A_i) } \chi_{w^l}(x)  \displaybreak[0] \\
			&= \sum_{l=1}^m\, \sum_{\substack{w \\ \chi_{w^l} = \chi_w}} \frac{ \prod_{i=1}^n \psi(A_i,x^i_{w_i}) } { \prod_{i=1}^n \Phi(A_i) } \chi_w(x) \displaybreak[0] \\
			&= \sum_{w} \frac{ \prod_{i=1}^n \psi(A_i,x^i_{w_i}) } { \prod_{i=1}^n \Phi(A_i) } \chi_w(x)
		\end{align*}
		which is the right-hand side of $(\ast_1)$.
		
		The other inclusion, $\delta_X'(\Phi) \subseteq \delta_X(\Phi)$, is easier. Let $\Psi \in \mon S ^2 X$ be such that $\sum_{\chi \in \mon S X} \Psi(\chi) = 1$ and 
		\[
		\supp \Psi \subseteq \{ \chi \in \mon S X \mid \exists u \in \prod\limits_{A \in \supp \Phi} A \ldotp \forall x \in X \ldotp \chi(x) = \sum\limits_{\substack{A \in \supp \Phi \\ x=u_A}} \Phi(A)  \}.
		\]
		Write again $\supp \Phi=\{A_1,\dots,A_n\}$ and let $\supp \Psi = \{\chi_1,\dots,\chi_m\}$. For all $j \in \natset m$, let $u^j \in \prod_{i=1}^n A_i$ be such that $\chi_j(x) = \sum_{\substack{i \in \natset n \\ x = u^j_i}} \Phi(A_i)$. Then we have
		\[
		\muS_X(\Psi)(x) = \sum_{j=1}^m \Psi(\chi_j) \chi_j(x) = \sum_{j=1}^m \Psi(\chi_j) \sum_{\substack{i \in \natset n \\ x = u^j_i}} \Phi(A_i).
		\]
		Define then
		\[
		\psi(B,x) = \begin{cases}
			\sum_{\substack{j \in \natset m \\ x=u^j_i }} \Psi(\chi_j) \Phi(A_i) & \exists (!) i \in \natset n \ldotp B=A_i \\
			0 & \text{otherwise}
		\end{cases}
		\]
		We have $\supp \psi = \{ (A_i,u^j_i) \mid i \in \natset n, j \in \natset m  \}$ is finite and so $\psi \in \mon S (\ni)$. We now verify the two conditions required in the definition of $\delta_X(\Phi)$. For all $x \in X$:
		\begin{align*}
			\sum_{A \ni x} \psi(A,x) &= \sum_{i=1}^n \psi(A_i,x) = \sum_{i=1}^n \sum_{\substack{j \in \natset m \\ x=u^j_i }} \Psi(\chi_j) \Phi(A_i) = \sum_{j=1}^m \sum_{ \substack{i \in \natset n \\ x = u^j_i} } \Psi(\chi_j) \Phi(A_i) \\
			&= \muS_X(\Psi)(x)
		\end{align*}
		while for all $A \subseteq X$, if $A \notin \supp \Phi$, then $\sum_{x \in A} (\psi(A,x)) = 0 = \Phi(A)$ by definition and, for all $i \in \natset n$:
		\begin{align*}
			\sum_{x \in A_i} \Psi(A_i,x) &= \sum_{x \in A_i} \sum_{\substack{j \in \natset m \\ x = u^j_i}} \psi(\chi_j) \Phi(A_i) \overset{(\ast_3)}{=} \sum_{j=1}^m \Psi(\chi_j) \Phi(A_i) = \Phi(A_i) \sum_{j=1}^m \Psi(\chi_j) \\
			&= \Phi(A_i)
		\end{align*}
		where equation $(\ast_3)$ holds because for all $j \in \natset m$ the addend $\Psi(\chi_j)\Phi(A_i)$ appears on the left-hand side exactly once, when in the first sum we are using, as $x$, precisely $u^j_i \in A_i$. In other words, given $j$, there is a unique $x \in A_i$ such that $x=u^j_i$, so $\Psi(\chi_j)\Phi(A_i)$ appears in the left-hand side of $(\ast_3)$ and it does so only once. Therefore $\muS_X(\Psi) \in \delta_X(\Phi)$, and the proof is complete. \qed

		\section{The Canonical Weak Lifting of \texorpdfstring{$\P$}{\emph{P}} to \texorpdfstring{$\EM\S$}{\emph{EM(S)}}}\label{sec: the weak lifting}
		
		By exploiting the characterisation of the weak distributive law $\delta$ for positive semifields (Theorem~\ref{thm:delta for positive refinable semifields}), we can now describe the weak lifting of $\P$ to $\EM\S$ generated by $\delta$, by listing all the data required by Definition~\ref{def:weak lifting}.
		
		Let $(X,a \colon \S X \to X)$ be a $\S$-algebra. Define $\tilde \P (X,a) = (\ConvPow X a, \alpha_a)$, where recall from Definition~\ref{def:convex closure S-ALGEBRA} that $\ConvPow X a$ is the set of convex subsets of $X$ with respect to $a$, while the function $\alpha_a \colon \mon{S} \ConvPow X a \to \ConvPow X a$ is defined for all $\Phi \in \mon{S} \ConvPow X a$ as
		\begin{equation}\label{eq:alphaa}
			\alpha_a(\Phi)= \{a(\phi) \; \mid \; \phi \in \choice{\Phi} \}\text{.}
		\end{equation}
		To be completely formal, above we should have written $\choice{\mon S (\iota) (\Phi)}$ in place of $\choice{\Phi}$, but it is immediate to see that the two sets coincide. Proving that $\alpha_a \colon \mon{S} \ConvPow X a \to \ConvPow X a$ is well defined (namely, $\alpha_a(\Phi)$ is a convex set) and forms an $\mon S$-algebra requires some ingenuity and will be shown later in Section~\ref{sec:proof}. Next, define $\tilde \P$ on morphisms $f \colon (X,a) \to (X',a')$  as
		\begin{equation}\label{eq:f}
			\tilde \P(f)(A) =  \pow f (A)
		\end{equation}
		for all $A\in \ConvPow X a$. This makes $\tilde \P$ an endofunctor on $\EM\S$. For all $(X,a)$ in $\EM{\mon S}$, $\eta^{\tilde \P}_{(X,a)} \colon (X,a) \to \tilde \P (X,a) $ and  $\mu^{\tilde \P}_{(X,a)} \colon \tilde \P \tilde \P (X,a) \to \tilde \P (X,a)$ are defined for $x\in X$ and $\mathcal A \in \ConvPow {(\ConvPow X a)} {\alpha_a}$ as
		\begin{equation}\label{eq:etamu}
			\eta^{\tilde \P}_{(X,a)}(x) =  \{x\} \qquad \text{and} \qquad \mu^{\tilde \P}_{(X,a)} (\mathcal A) = \bigcup_{A \in \mathcal A} A\text{.} 
		\end{equation}
		Notice that $\{ x \}$ is of course a convex subset of $X$ with respect to the structure map $a$, but it is not at all obvious that $\bigcup_{A \in \mathcal A} A$ is. In general the union of convex subsets of $X$ is not again convex, but here $\mathcal A$ is not an arbitrary family of convex sets: it is a \emph{convex} subset of $\ConvPow X a$ with respect to $\alpha_a$, and this ensures that its union is convex in $X$ with respect to $a$. We will prove this in Proposition~\ref{prop:union of convex family of convex subsets is convex}.
		Finally, define the functions $\iota_{(X,a)} \colon	\ConvPow X a \to \pow X$
		and $\pi_{(X,a)} \colon \pow X \to \ConvPow X a$ as, for all $A\in \ConvPow X a$ and $B \in \pow X$,
		\begin{equation}\label{eq:iotapi}
			\iota_{(X,a)}(A)=A    \qquad \text{and} \qquad \pi_{(X,a)}(B)=\convclos B a \text{,}
		\end{equation}
		that is $\iota_{(X,a)}$ is just the obvious set inclusion and 
		$\pi_{(X,a)}$ performs the convex closure in $a$.
		
		\begin{thm}\label{thm:weaklift}
			Let $S$ be a positive semifield. Then the canonical weak lifting of the powerset monad $\P$ to $\EM{\mon S}$, determined by the weak distributive law $\delta$ given in~\eqref{eqn:delta for positive refinable semifields}, consists of the monad $(\tilde \P,\eta^{\tilde \P},\mu^{\tilde \P})$ on $\EM{\mon S}$ defined as in \eqref{eq:alphaa}, \eqref{eq:f}, \eqref{eq:etamu} and the natural transformations $\iota \colon \U{\mon S}\tilde \P \to \P \U{\mon S}$ and $\pi \colon \P \U{\mon S} \to \U{\mon S} \tilde \P$ defined as in \eqref{eq:iotapi}.
		\end{thm}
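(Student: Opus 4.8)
The plan is to invoke part~(ii) of Theorem~\ref{thm:correspondence weak distributive law with weak extensions-liftings}, whose hypothesis (splitting of idempotents) holds in $\Set$. A positive semifield is positive and, by Proposition~\ref{prop:positive semifield implies refinable}, refinable, and by Remark~\ref{rem:S positive refinable semifield then ok} it satisfies (B) and (E); hence Theorem~\ref{thm:existence of delta} applies and $\delta$ is a genuine weak distributive law, with the characterisation of Theorem~\ref{thm:delta for positive refinable semifields} available. Part~(ii) of Theorem~\ref{thm:correspondence weak distributive law with weak extensions-liftings} then produces \emph{the} canonical weak lifting of $\P$ to $\EM\S$, so the task reduces to identifying each of its pieces of data with the explicit formulas~\eqref{eq:alphaa},~\eqref{eq:f},~\eqref{eq:etamu} and~\eqref{eq:iotapi}.

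The heart of the matter is the computation of the idempotent $e_{(X,a)} = \P a \circ \delta_X \circ \etaS_{\P X} \colon \P X \to \P X$ whose splitting defines $\tilde\P(X,a)$. For $B \in \P X$ we have $\etaS_{\P X}(B) = \Delta_B$, and since $\supp \Delta_B = \{B\}$ the set $\choice{\Delta_B}$ of~\eqref{eqn:c(Phi)} is exactly $\{\Delta_x \mid x \in B\}$; thus Theorem~\ref{thm:delta for positive refinable semifields} gives $\delta_X(\Delta_B) = \convclos{\{\Delta_x \mid x \in B\}}{\muS_X}$. I would then use the elementary fact that every morphism $h \colon (X,a) \to (X',a')$ of $\S$-algebras commutes with convex closure, i.e.\ $h[\convclos{A}{a}] = \convclos{h[A]}{a'}$ for all $A \subseteq X$ (the inclusion $\subseteq$ follows from $h \circ a = a' \circ \S h$ and the fact that pushing a weight function of total mass $1$ forward along $h$ preserves its mass, and $\supseteq$ from picking $h$-preimages of the elements of the relevant support). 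Applying this to the $\S$-algebra morphism $a \colon (\S X, \muS_X) \to (X,a)$ — which is just the algebra law $a \circ \muS_X = a \circ \S a$ — and using that $\P a$ is the direct-image map and $a(\Delta_x) = x$, we obtain $e_{(X,a)}(B) = \P a\bigl(\convclos{\{\Delta_x \mid x \in B\}}{\muS_X}\bigr) = \convclos{a[\{\Delta_x \mid x \in B\}]}{a} = \convclos{B}{a}$. So $e_{(X,a)}$ is the convex-closure operator of $(X,a)$; by the cited theorem it is idempotent, and a splitting of it is $\P X \xrightarrow{\convclos{\,\cdot\,}{a}} \ConvPow X a \hookrightarrow \P X$. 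This identifies the apex with $\ConvPow X a$ and the maps $\pi_{(X,a)}$, $\iota_{(X,a)}$ with those of~\eqref{eq:iotapi}, naturality of $\iota$, $\pi$ being part of the conclusion of Theorem~\ref{thm:correspondence weak distributive law with weak extensions-liftings}.

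The remaining data are obtained by unwinding the recipe of Theorem~\ref{thm:correspondence weak distributive law with weak extensions-liftings}. The $\S$-algebra structure on $\tilde\P(X,a)$ is $b = \pi_{(X,a)} \circ \P a \circ \delta_X \circ \S\iota_{(X,a)}$; since $\iota_{(X,a)}$ is a componentwise inclusion we have $\choice{\S\iota_{(X,a)}(\Phi)} = \choice\Phi$, and the same computation as above (Theorem~\ref{thm:delta for positive refinable semifields}, the closure-preservation fact, and idempotency of $\convclos{\,\cdot\,}{a}$) yields $b(\Phi) = \convclos{\{a(\phi) \mid \phi \in \choice\Phi\}}{a} = \convclos{\alpha_a(\Phi)}{a}$, which equals $\alpha_a(\Phi)$ as in~\eqref{eq:alphaa} provided $\alpha_a(\Phi)$ is convex; this, together with $\alpha_a$ being an $\S$-algebra structure, is exactly the well-definedness statement deferred to Section~\ref{sec:proof}. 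On a morphism $f \colon (X,a) \to (X',a')$ the recipe gives $\tilde\P(f) = \pi_{(X',a')} \circ \P f \circ \iota_{(X,a)}$, i.e.\ $A \mapsto \convclos{f[A]}{a'}$, which equals $\P f(A) = f[A]$ since $f[A] = f[\convclos{A}{a}] = \convclos{f[A]}{a'}$ is convex — this is~\eqref{eq:f}. Finally, $\eta^{\tilde\P}$ and $\mu^{\tilde\P}$ are characterised as the unique maps making the triangles, resp.\ the rectangles, of~\eqref{eqn:weak lifting diagrams iota} and~\eqref{eqn:weak lifting diagrams pi} commute; as each $\iota_{(X,a)}$ is a componentwise inclusion and $\mu^\P$ is union, these force $\eta^{\tilde\P}_{(X,a)}(x) = \{x\}$ and $\mu^{\tilde\P}_{(X,a)}(\mathcal A) = \bigcup_{A \in \mathcal A} A$ as in~\eqref{eq:etamu}, leaving only to check that these are convex: immediate for $\{x\}$, and for the union by Proposition~\ref{prop:union of convex family of convex subsets is convex}.

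The main obstacle is thus not the above bookkeeping but the two well-definedness results on which it rests, both established in Section~\ref{sec:proof}: that $\alpha_a$ is a well-defined $\S$-algebra structure on $\ConvPow X a$ (so that $b$ really is $\alpha_a$ and not merely its convex closure), and that the union of a convex family of convex subsets is convex. The only further delicate point is that the description of $\delta$ from Theorem~\ref{thm:delta for positive refinable semifields} transports correctly through the structure map $a$, which is exactly what the lemma that $\S$-algebra morphisms commute with convex closure provides; everything else — naturality of $\iota$, $\pi$, and commutativity of the diagrams~\eqref{eqn:weak lifting diagrams iota}--\eqref{eqn:weak lifting diagrams pi} — is automatic from Theorem~\ref{thm:correspondence weak distributive law with weak extensions-liftings}.
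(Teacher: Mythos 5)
Your proposal is correct and follows essentially the same route as the paper: split the idempotent $e_{(X,a)}$, identify it with the convex-closure operator so that the splitting is $\ConvPow X a$ with the maps of~\eqref{eq:iotapi}, unwind the recipe of Theorem~\ref{thm:correspondence weak distributive law with weak extensions-liftings} for the remaining data, and defer the two genuinely hard well-definedness facts (that $\{a(\phi) \mid \phi \in \choice\Phi\}$ is already convex, so $b=\alpha_a$ without an extra closure, and that the union of a convex family of convex sets is convex) to the supporting propositions of Section~\ref{sec:proof} and the appendix, exactly as the paper does. The only cosmetic difference is that you compute $\delta_X(\Delta_B)$ via the convex-closure characterisation of Theorem~\ref{thm:delta for positive refinable semifields} together with the fact that $\S$-algebra morphisms preserve convex closures, whereas the paper's Lemma~\ref{lemma:delta of Dirac} computes it directly from the raw definition~\eqref{eqn:def of delta} --- a route that, as Remark~\ref{rem:convexity still present even if not positive semifield} points out, also works when $S$ is not a semifield, though this is immaterial under the hypotheses of the present theorem.
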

		
		It is worth spelling out the left-semimodule structure on $\ConvPow X a$ corresponding to the $\mon S$-algebra $\alpha_a \colon \mon{S} \ConvPow X a \to \ConvPow X a$. 
		
		Let us start with $\lambda \cdot^{\alpha_a} A $ for some $A\in \ConvPow X a$. By~\eqref{eqn:semimodule associated to S-algebra},  $\lambda \cdot^{\alpha_a} A  = \alpha_a(\Phi)$ where $\Phi=(A \mapsto \lambda)$. By~\eqref{eq:alphaa}, $\alpha_a(\Phi)=\{a(\phi) \; \mid \; \phi \in \choice{\Phi} \}$. Following the definition of $\choice{\Phi}$ given in \eqref{eqn:c(Phi)}, one has to consider functions $u\colon \supp \Phi \to X$ such that $u(B)\in B$ for all $B\in \supp \Phi$: if $\lambda \neq 0$, then $\supp \Phi= \{A\}$ and thus, for each $x\in A$, there is exactly one function $u_x \colon \supp \Phi \to X$ mapping $A$ into $x$. It is immediate to see that $\mon S (u_{x}) (\Phi)$ is exactly the function $(x \mapsto \lambda)$ and thus $a(\mon S (u_{x}) (\Phi))$ is, by \eqref{eqn:semimodule associated to S-algebra}, $\lambda \cdot^a x$. Now if $\lambda=0$, then $\supp \Phi = \emptyset$, so there is \emph{exactly one} function $u\colon \supp \Phi \to X$ and $\mon S (u) (\Phi)$ is the function mapping all $x \in X$ into $0$ and thus, by \eqref{eqn:semimodule associated to S-algebra}, $a(\mon S (u) (\Phi)) = 0^a$. Summarising,
		\begin{equation}\label{eq:lambdaP}
			\lambda \cdot^{\alpha_a} A = \begin{cases}
				\{\lambda \cdot^a x \, \mid \; x \in A\} & \text{if }\lambda \neq 0\\
				\{0^a\}  & \text{if }\lambda = 0 
		\end{cases} \end{equation}
		Following similar lines of thoughts, one can check that 
		\begin{equation}\label{eq:sumP}
			A+^{\alpha_a} B = \{x+^ay \; \mid \; x\in A, \; y\in B\} \qquad \text{and} \qquad 0^{\alpha_a}=\{0^a\}\text{.}\end{equation}
		
		\begin{rem}\label{remarkJacobs} 
			By comparing \eqref{eq:sumP} and \eqref{eq:lambdaP} with (4) and (5) in~\cite{jacobs_coalgebraic_2008}, it is immediate to see
			that our monad $\tilde \P$ coincides with a slight variation of Jacobs's convex powerset monad $\mathcal C$, the only difference being that we do allow for $\emptyset$ to be in $\ConvPow X a$. Jacobs insisted on the necessity of $\mathcal C(X)$ to be the set of \emph{non-empty} convex subsets of $X$, because otherwise he was not able to define a semimodule structure on $\mathcal C (X)$ such that $0 \cdot \emptyset = \{0^a\}$. However, we do manage to do so, since by~\eqref{eq:lambdaP}, $0 \cdot A = \{0^a\}$ for all $A$ and in particular for $A=\emptyset$. At first sight, this may look like an ad-hoc solution, but this is not the case: it is intrinsic in the definition of the canonical weak lifting of $\P$ to $\EM{\mon S}$, as stated by Theorem \ref{thm:weaklift} and shown next.
		\end{rem}
		
		\subsection{Proof of Theorem~\ref{thm:weaklift}}\label{sec:proof}
		By Theorem~\ref{thm:correspondence weak distributive law with weak extensions-liftings}, the weak distributive law~\eqref{eqn:def of delta} corresponds to a weak lifting $\tilde \P$ of $\P$ to $\EM\S$, which we are going to show coincides with the data of~\eqref{eq:alphaa}-\eqref{eq:iotapi}. 
		The image along $\tilde \P$ of a $\mon S$-algebra $(X,a)$ will be a set $Y$ together with a structure map $\alpha_a$ that makes it a $\mon S$-algebra in turn. The proof of Theorem~\ref{thm:correspondence weak distributive law with weak extensions-liftings} gives us the recipe to build $Y$ and $\alpha_a$ appropriately. $Y$ is obtained by splitting the following idempotent in $\Set$:
		\begin{equation}\label{eqn:idempotent e}
			e_{(X,a)}=\begin{tikzcd}
				\pow X \ar[r,"\etaS_{\pow X}"] & \mon S (\pow X) \ar[r,"\delta_X"] & \pow(\mon S X) \ar[r,"\pow a"] & \pow X
			\end{tikzcd}
		\end{equation}
		as a composite $e_{(X,a)}=\iota_{(X,a)} \circ \pi_{(X,a)}$, where $\pi_{(X,a)}$ is the corestriction of $e_{(X,a)}$ to its image and $\iota_{(X,a)}$ is the set-inclusion of the image of $e_{(X,a)}$ into $\pow X$. In other words, $Y$ is the set of fixed points of $e_{(X,a)}$. $\alpha_a$ is obtained as the composite
		\[
		\alpha_a=\begin{tikzcd}
			\mon S Y \ar[r,"\mon S \iota_{(X,a)}"] & \mon S \pow X \ar[r,"\delta_X"] & \pow \mon S X \ar[r,"\pow a"] & \pow X \ar[r,"\pi_{(X,a)}"] & Y.
		\end{tikzcd}
		\]
		Let us, then, fix an $\S$-algebra $(X,a)$. Given $A \in \pow X$, we have $\etaS_{\pow X}(A)=\Delta_A \colon \pow X \to S$, the Dirac-function centred in $A$. The set $\delta_X(\etaS_{\pow X}(A))$ has a simple description, shown in the next Lemma: it consists of formal convex linear combinations of elements of $A$.

		\begin{lem}\label{lemma:delta of Dirac}
			For all $A \in \pow X$
			\begin{equation}\label{eqn:delta of Dirac}
				\delta_X(\etaS_{\pow X}(A)) = \left\{ \phi \in \mon S X \mid \supp \phi \subseteq A, \sum_{x \in X} \phi(x) = 1 \right\}.
			\end{equation}
		\end{lem}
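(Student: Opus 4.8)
The plan is to unfold the explicit description of $\delta_X$ from Theorem~\ref{thm:existence of delta} on the particular input $\etaS_{\pow X}(A) = \Delta_A$, the Dirac function on $\pow X$ centred at $A$, for which $\Delta_A(A) = 1$ and $\Delta_A(B) = 0$ whenever $B \ne A$. By~\eqref{eqn:def of delta}, a function $\phi \in \mon S X$ lies in $\delta_X(\Delta_A)$ precisely when there is some $\psi \in \mon S(\ni_X)$ satisfying conditions (a) and (b) of~\eqref{eqn:def of delta} with $\Phi = \Delta_A$.

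For the inclusion from left to right I would take such a $\psi$ and first exploit positivity of $S$: applied to any $B \ne A$, condition (a) reads $\sum_{x \in B}\psi(B,x) = 0$, and since in a positive semiring a finite sum vanishes only if all its summands do, this forces $\psi(B,x) = 0$ for every $x \in B$; hence $\supp \psi \subseteq \{(A,x) \mid x \in A\}$. Condition (b) then gives $\phi(x) = \psi(A,x)$ for $x \in A$ and $\phi(x) = 0$ otherwise, so $\supp \phi \subseteq A$; and summing over $x$ and using (a) at $B = A$ gives $\sum_{x \in X}\phi(x) = \sum_{x \in A}\psi(A,x) = 1$. For the converse, given $\phi$ with $\supp \phi \subseteq A$ and $\sum_{x \in X}\phi(x) = 1$, I would simply set $\psi(A,x) = \phi(x)$ for $x \in A$ and $\psi(B,x) = 0$ otherwise; this has finite support because $\supp \phi$ does, so $\psi \in \mon S(\ni_X)$, and conditions (a) and (b) hold by construction, whence $\phi \in \delta_X(\Delta_A)$. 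The degenerate case $A = \emptyset$ requires no separate argument: condition (a) at $B = \emptyset$ forces $1 = 0$, so the left-hand side is empty, and the only $\phi$ with $\supp \phi \subseteq \emptyset$ is the zero function, whose coefficients sum to $0 \ne 1$, so the right-hand side is empty too.

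I do not expect any real obstacle: the sole non-bookkeeping step is the appeal to positivity of $S$ to deduce $\psi(B,\cdot) \equiv 0$ from $\sum_{x \in B}\psi(B,x) = 0$, and everything else is a direct rewriting of the definitions. As an alternative one could instead derive the statement from Theorem~\ref{thm:delta for positive refinable semifields}: a short computation shows $\choice{\Delta_A} = \{\Delta_a \mid a \in A\}$, and describing the convex closure $\convclos{\choice{\Delta_A}}{\muS_X}$ in the semimodule $(\mon S X,\muS_X)$ — where $\lambda \cdot^{\muS_X}\Delta_a$ is the function $(a \mapsto \lambda)$ — yields exactly the set of $\phi \in \mon S X$ with $\supp \phi \subseteq A$ and $\sum_{x \in X}\phi(x) = 1$.
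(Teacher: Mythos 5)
Your main argument is correct and is essentially the paper's own proof: both unfold \eqref{eqn:def of delta} at $\Phi=\Delta_A$, use positivity of $S$ to force $\psi(B,\cdot)\equiv 0$ for $B\ne A$, and construct the witnessing $\psi$ from $\phi$ for the converse, with the same treatment of $A=\emptyset$. One small remark on your proposed alternative route via Theorem~\ref{thm:delta for positive refinable semifields}: it would work, but it needlessly restricts the lemma to positive semifields, whereas the direct argument (as the paper stresses in Remark~\ref{rem:convexity still present even if not positive semifield}) only needs the weaker hypotheses of Theorem~\ref{thm:existence of delta}.
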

		\begin{proof}
			Recall that, by definition,
			\[
			\delta_X(\etaS_{\pow X}(A)) = \left\{ \phi \in  \mon S X \mid \exists \psi \in \mon S (\ni) \ldotp \begin{cases}
				\forall B \in \pow X \ldotp \Delta_A(B) = \sum\limits_{x \in B} \psi(B,x) \\
				\forall x \in X \ldotp \phi(x) = \sum\limits_{B \ni x} \psi(B,x)
			\end{cases}   \!\!\! \right\}.
			\]
			(In this proof we write $\ni$ instead of $\ni_X$.) First of all, if $A = \emptyset$, then the left-hand side of \eqref{eqn:delta of Dirac} is empty, because there is no $\psi \in \mon S (\ni)$ such that $1=\Delta_{\emptyset}(\emptyset) = \sum_{x \in \emptyset} \psi(B,x) = 0$, and so is the right-hand side, as the only function whose support is empty is the null function, which cannot satisfy $\sum_{x \in X} \phi(x) = 1$. 
			
			Suppose then that $A \neq \emptyset$. 
			For the left-to-right inclusion of~\eqref{eqn:delta of Dirac}: observe first of all that $\supp \phi \ne \emptyset$, because otherwise we would have that $\psi(B,x)=0$ for all $x \in X$ and for all $B \ni x$ due to the fact that $S$ is a positive semiring. This would then lead to the contradiction $1=\Delta_A(A) = \sum_{x \in A} \psi(A,x) = 0$. Let then $x \in \supp \phi$: then $\phi(x) \ne 0$, hence there exists $B \ni x$ such that $\psi(B,x) \ne 0$. It is necessarily the case, however, that $B=A$, because if $B \ne A$, then $0 = \Delta_A(B) = \sum_{y \in B} \psi(B,y) \ge \psi(B,x) \ne 0$, which is a contradiction. Thus $\supp \phi \subseteq A$. Moreover,
			\[
			\sum_{x \in X} \phi(x) = \sum_{x \in A} \phi(x) = \sum_{x \in A} \sum_{B \ni x} \psi(B,x) \overset{(\ast)}{=} \sum_{x \in A} \psi(A,x) = \Delta_A(A)=1
			\]
			where equation $(\ast)$ holds because for all $B \ne A$ and for all $y \in B$ we have $\psi(B,y)=0$, hence the only addend of $\sum_{B \ni x} \psi(B,x)$ which is possibly not-null is the ``$A$-th one'': $\psi(A,x)$.
			
			Vice versa, let $\phi \colon X \to S$ such that $\supp \phi$ is finite and contained in $A$ and satisfying $\sum_{x \in X} \phi(x)=1$. Then the map
			\[
			\begin{tikzcd}[ampersand replacement=\&,row sep=0em]
				\ni \ar[r,"\psi"] \& \Rp \\
				(B,x) \ar[r,|->] \& \begin{cases}
					\phi(x) & B=A \\
					0 & \text{otherwise}
				\end{cases}
			\end{tikzcd}
			\] 
			has finite support, as this is in bijective correspondence with $\supp \phi$; for all $B \in \pow X$, if $B \ne A$ then $\sum_{x \in B} \psi(B,x)=0$, otherwise 
			\[
			\sum_{x \in B} \psi(B,x) = \sum_{x \in A} \phi(x) = \sum_{x \in X} \phi(x)=1;
			\]
			and finally, for all $x \in X$, 
			$
			\sum\limits_{B \ni x} \psi(B,x) = \psi(A,x)=\phi(x). 
			$
		\end{proof}
		
		The image along $A$ of the idempotent $e$ is therefore
		\[
		e(A)=\pow a (\delta_X(\etaS_{\pow X}(A))) = \left\{a(\phi) \mid \phi \in \mon S X, \supp \phi \subseteq A, \sum_{x \in X} \phi(x)=1 \right\} = \convclos A a.
		\]
		Hence the idempotent $e$ computes the convex closure of elements of $\pow X$ and its fixed points are precisely the convex subsets of $X$ with respect to the structure map $a$. Therefore, the carrier set of $\tilde \P (X,a)$ is precisely $\ConvPow X a$, the natural transformations $\pi$ and $\iota$ are, respectively, the convex closure operator and the set-inclusion of $\ConvPow X a$ into $\pow X$ as in \eqref{eq:iotapi}.
		
		\begin{rem}\label{rem:convexity still present even if not positive semifield}
			Lemma~\ref{lemma:delta of Dirac} holds even for semirings satisfying the minimal assumptions of Theorem~\ref{thm:existence of delta}: indeed in its proof we do not rely on the characterisation of $\delta$ given in Theorem~\ref{thm:delta for positive refinable semifields} for positive semifields. This means that although for these less well-behaved semirings we cannot, in general, express the canonical weak distributive law in terms of a convex closure, as noted in Remark~\ref{rem:nice delta does not hold for minimal semirings}, convexity is still deeply woven into the structure and it emerges explicitly in the canonical weak lifting of $\pow$ to $\EM\S$ associated to $\delta$. This is ultimately due to the fact that the element $1 \in S$ plays a protagonist role in the unit $\etaS$, and forces us to consider only those formal linear combinations whose coefficients add up to 1, as we saw in the proof of Lemma~\ref{lemma:delta of Dirac}.
		\end{rem}
		
		$\ConvPow X a$ is then equipped with a structure map $\alpha_a \colon \mon S \ConvPow X a \to \ConvPow X a$ given by
		\[
		\alpha_a=\begin{tikzcd}
			\mon S \ConvPow X a \ar[r,"\mon S \iota_{(X,a)}"] & \mon S \pow X \ar[r,"\delta_X"] & \pow \mon S X \ar[r,"\pow a"] & \pow X \ar[r,"\pi_{(X,a)}"] & \ConvPow X a.
		\end{tikzcd}
		\]
		Let us try to calculate $\alpha_a$: given $\Phi \colon \ConvPow X a \to S$ with finite support, we have that $\mon S ({\iota_{(X,a)}}) (\Phi)$ is just the extension of $\Phi$ to $\pow X$ which assigns $0$ to each non-convex subset of $X$. If we write $\iota$ instead of $\iota_{(X,a)}$ for short, we have
		\begin{equation}\label{eqn:alpha_a}
			\alpha_a(\Phi) = \convclos { \pow a (\delta_X (\mon S (\iota) (\Phi))) } {a}.
		\end{equation}
		Next, we can use the following technical result, which will also be useful later on and whose proof is in Appendix~\ref{app4}.
		\begin{prop}\label{prop:image along a of convex closures}
			Let $(X,a)$ be a $\mon S$-algebra and let $\mathcal A \subseteq \mon S X$. Then
			\[
			\pow a \Bigl( \convclos{\mathcal A}{\muS_X} \Bigr) = \convclos{\pow a (\mathcal A)}{a}.
			\]
		\end{prop}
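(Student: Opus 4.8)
The plan is to prove the two inclusions separately, in both directions by exploiting the multiplicativity axiom of the $\S$-algebra $(X,a)$, namely $a \circ \muS_X = a \circ \S(a)$, which lets one replace the abstract multiplication $\muS_X$ by the concrete pushforward $\S(a) \colon \S^2 X \to \S X$. The only auxiliary fact I would record first is this: for every $\Psi \in \S^2 X$, the function $\S(a)(\Psi) \in \S X$ satisfies $\supp (\S(a)(\Psi)) \subseteq a(\supp \Psi)$ and $\sum_{x \in X} \S(a)(\Psi)(x) = \sum_{\theta \in \S X} \Psi(\theta)$. Both follow at once from the formula $\S(a)(\Psi)(x) = \sum_{\theta \in a^{-1}\{x\}} \Psi(\theta)$ together with the fact that the fibres $a^{-1}\{x\}$ partition $\S X$ and $\Psi$ has finite support.

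For the inclusion $\pow a(\convclos{\mathcal A}{\muS_X}) \subseteq \convclos{\pow a(\mathcal A)}{a}$, I would take a generic element $a(\muS_X(\Psi))$ of the left-hand side, where $\Psi \in \S^2 X$ has $\supp \Psi \subseteq \mathcal A$ and $\sum_{\theta} \Psi(\theta) = 1$. By the algebra axiom $a(\muS_X(\Psi)) = a(\S(a)(\Psi))$, and by the auxiliary fact the function $\phi = \S(a)(\Psi)$ satisfies $\supp \phi \subseteq a(\supp \Psi) \subseteq a(\mathcal A)$ and $\sum_{x} \phi(x) = 1$; hence $a(\phi) \in \convclos{\pow a(\mathcal A)}{a}$ by Definition~\ref{def:convex closure S-ALGEBRA}.

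For the reverse inclusion, I would start from $\phi \in \S X$ with $\supp \phi \subseteq a(\mathcal A)$ and $\sum_{x} \phi(x) = 1$, and realise $a(\phi)$ as $a(\muS_X(\Psi))$ for a suitable $\Psi$ witnessing membership in $\convclos{\mathcal A}{\muS_X}$. For each $x \in \supp \phi$ choose some $\theta_x \in \mathcal A$ with $a(\theta_x) = x$ (possible since $x \in a(\mathcal A)$, and only finitely many choices are needed); the assignment $x \mapsto \theta_x$ is injective because $a(\theta_x) = x$ separates the $\theta_x$, so one may define $\Psi \in \S^2 X$ by $\Psi(\theta_x) = \phi(x)$ for $x \in \supp \phi$ and $\Psi(\theta) = 0$ otherwise. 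Then $\supp \Psi = \{\theta_x \mid x \in \supp \phi\} \subseteq \mathcal A$ and $\sum_{\theta} \Psi(\theta) = \sum_x \phi(x) = 1$, so $\muS_X(\Psi) \in \convclos{\mathcal A}{\muS_X}$; and a short fibre-by-fibre computation (using $a(\theta_x) = x$ and that the $\theta_x$ exhaust $\supp \Psi$) gives $\S(a)(\Psi) = \phi$, whence $a(\muS_X(\Psi)) = a(\S(a)(\Psi)) = a(\phi)$, as required.

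The main obstacle, such as it is, is the construction of $\Psi$ in the reverse inclusion: one must check that the naive recipe of spreading $\phi$ over chosen preimages pushes back along $a$ to exactly $\phi$, and this is precisely where the injectivity of $x \mapsto \theta_x$ — a free consequence of $a(\theta_x) = x$ — is used, to rule out collisions in $\supp \Psi$ that would otherwise spoil the fibrewise computation. Everything else is routine bookkeeping with supports and coefficient sums.
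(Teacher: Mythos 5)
Your proof is correct and follows essentially the same route as the paper's: both directions hinge on the algebra law $a \circ \muS_X = a \circ \S(a)$, the left-to-right inclusion pushes $\Psi$ forward along $a$ exactly as in the paper, and your construction of $\Psi$ from chosen preimages $\theta_x$ is the paper's $\Psi = (\psi_1 \mapsto \phi(a(\psi_1)),\dots,\psi_n \mapsto \phi(a(\psi_n)))$ with the injectivity observation (which the paper leaves implicit) made explicit. No gaps.
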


		Since $\delta_X (\mon S (\iota) (\Phi))$ is the convex closure of $\choice{\mon S (\iota) (\Phi)}$ in $(\S X, \muS_X)$, by Proposition~\ref{prop:image along a of convex closures} and equation~\eqref{eqn:alpha_a} we get: 
		\[
		\alpha_a(\Phi)=\pow a (\convclos{\delta_X (\mon S (\iota) (\Phi))}{\muS_X})=\pow a (\delta_X (\mon S (\iota) (\Phi)))=\pow a \bigl(\convclos{\choice{\mon S (\iota) (\Phi)}}{\muS_X}\bigr).
		\] 
		It turns out, however, that also the $\muS_X$-convex closure is superfluous, due to the fact that $\Phi \in \S \ConvPow X a$ (and not simply $\S\P X$), thus obtaining~\eqref{eq:alphaa}. We prove this in Proposition~\ref{thm:convpow(X,a) is a S-algebra}, in Appendix~\ref{app4}.

		We continue the proof of Theorem~\ref{thm:weaklift}. We have computed the action of the weak lifting $\tilde\P$ of $\pow$ to $\EM\S$ on objects: given an $\S$-algebra $(X,a)$, $\tilde\P (X,a)=(\ConvPow X {\muS_X},\alpha_a)$ where $\alpha_a$ is defined in~\eqref{eq:alphaa}. As for morphisms, technically for $f \colon (X,a) \to (X',a')$ in $\EM \S$, $\tilde\P(f)$ is defined as 
		\[
		\tilde\P(f)(\mathcal A) = \convclos{\P(f)(\mathcal A)}{a'} \quad \text{for all $\mathcal A \in \ConvPow X a$,}
		\]
		however it is not difficult to see that $\P(f)(\mathcal A)$ is convex in $(X',a')$ using the fact that $f$ is a morphism of $\S$-algebras and that $\mathcal A$ is convex in $(X,a)$.
		
		The unit of the monad $\tilde \P$ is given, for every $(X,a)$ object of $\EM{\mon S}$, as the unique morphism $\eta^{\tilde \P}_{(X,a)} \colon (X,a) \to \tilde \P (X,a)$ that makes the two triangles of~(\ref{eqn:weak lifting diagrams iota}) and~(\ref{eqn:weak lifting diagrams pi}) commute, which are in our case:
		\[
		\begin{tikzcd}
			X \ar[r,"\eta^{\P}_X"] \ar[d,"\U{\mon S} (\eta^{\tilde \P}_{(X,a)})"'] & \pow X \\
			\ConvPow X a \ar[ur,"\iota_{(X,a)}"',hookrightarrow]
		\end{tikzcd}
		\qquad
		\begin{tikzcd}[column sep=4em]
			X \ar[r,"\U{\mon S}(\eta^{\tilde \P}_{(X,a)})"] \ar[d,"\eta^{\P}_X"'] & \ConvPow X a \\
			\pow X \ar[ur,"\convclos{(-)}{a}"']
		\end{tikzcd}
		\]
		By definition of $\iota$ and $\eta^{\P}$, the only arrow that makes the left triangle above commutative is necessarily   
		\[
		\begin{tikzcd}[row sep=0em]
			(X,a) \ar[r,"\eta^{\tilde \P}_{(X,a)}"] & (\ConvPow X a,\alpha_a) \\
			x \ar[r,|->] & \{x\}
		\end{tikzcd}
		\]
		and this makes also the right triangle commutative, since $\convclos{\{x\}} a = \{ x \}$.
		Similarly, the multiplication $\mu^{\tilde \P}$ is defined, for every $\mon S$-algebra $(X,a)$, as the unique morphism making the two rectangles of~(\ref{eqn:weak lifting diagrams iota}) and~(\ref{eqn:weak lifting diagrams pi}) commute. These are in our case:
		\[
		\begin{tikzcd}[column sep=1.2em]
			\ConvPow {\bigl(\ConvPow X a\bigr)} {\alpha_a} \ar[r,hookrightarrow,"\iota"] \ar[d,"\mu^{\tilde \P}_{(X,a)}"'] & \pow(\ConvPow X a) \ar[r,"\pow \iota",hookrightarrow] & \pow \pow X \ar[d,"\mu^\P_X"] \\
			\ConvPow X a \ar[rr,hookrightarrow,"\iota"] & & \pow X
		\end{tikzcd}
		\quad 
		\begin{tikzcd}[column sep=1.5em]
			\pow \pow X \ar[r,"\pow \convclos{(-)} a"] \ar[d,"\mu^{\P}_X"'] 
			& \pow(\ConvPow X a) \ar[r,"\convclos{(-)}{\alpha_a}"] & \ConvPow {(\ConvPow X a)} {\alpha_a}  \ar[d,"{\mu^{\tilde \P}_{(X,a)}}"] \\
			\pow X \ar[rr,"\convclos{(-)} a"] & & \ConvPow X a
		\end{tikzcd}
		\]
		
		By definition of $\iota$ and $\mu^{\P}$, the only arrow that makes the left rectangle above commutative is necessarily  
		\[
		\begin{tikzcd}[row sep=0em]
			\ConvPow {(\ConvPow X a)} {\alpha_a} \ar[r,"\mu^{\tilde \P}_{(X,a)}"] & \ConvPow X a \\
			\mathcal A \ar[r,|->] & \bigcup_{A \in \mathcal A} A
		\end{tikzcd}
		\]
		for all $(X,a) \in \EM{\mon S}$.
		Next we show why this morphism is well defined, namely why $\mu^{\tilde \P}_{(X,a)}(\mathcal A)$ is in $\ConvPow X a$, while in Proposition~\ref{prop:rectangle lifting mu commutes} (in Appendix~\ref{app4}) we prove how it makes the right rectangle commutative. This will conclude the proof of Theorem~\ref{thm:weaklift}.

		\begin{prop}\label{prop:union of convex family of convex subsets is convex}
			Let $(X,a)$ be a $\mon S$-algebra and $\mathcal A \in \ConvPow {\bigl(\ConvPow X a\bigr)} {\alpha_a}$. Then $\bigcup_{A \in \mathcal A} A$ is convex in $(X,a)$.
		\end{prop}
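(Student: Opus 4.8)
The plan is to check the two inclusions that together say $\bigcup_{A\in\mathcal A}A$ is convex in $(X,a)$. The inclusion $\bigcup_{A\in\mathcal A}A\subseteq\convclos{\bigcup_{A\in\mathcal A}A}{a}$ is free, since any point $x$ equals $a(\etaS_X(x))$ and $\supp(\etaS_X(x))=\{x\}$ has coefficients summing to $1$. So the real work is $\convclos{\bigcup_{A\in\mathcal A}A}{a}\subseteq\bigcup_{A\in\mathcal A}A$: given $\phi\in\mon S X$ with $\supp\phi\subseteq\bigcup_{A\in\mathcal A}A$ and $\sum_x\phi(x)=1$, I must exhibit a member of the family $\mathcal A$ that contains $a(\phi)$.

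First I would \emph{aggregate} $\phi$ over the sets of $\mathcal A$. For each $x\in\supp\phi$ pick some $A_x\in\mathcal A$ with $x\in A_x$ (a finite choice), and let $B_1,\dots,B_m$ be the distinct sets that occur. Partition $\supp\phi$ according to which $B_j$ was chosen and set $\mu_j=\sum_{x\,:\,A_x=B_j}\phi(x)$. Positivity of $S$ forces each $\mu_j\neq0$ (it is a sum of nonzero elements over a nonempty index set), hence invertible because $S$ is a semifield; so I can form $\phi_j\in\mon S X$ equal to $\mu_j^{-1}\phi(x)$ on those $x$ with $A_x=B_j$ and $0$ elsewhere. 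Then $\supp\phi_j\subseteq B_j$ and $\sum_x\phi_j(x)=1$, so $y_j:=a(\phi_j)\in\convclos{B_j}{a}=B_j$ since $B_j$ is convex.

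Next I would feed the formal combination of the $B_j$'s into $\alpha_a$. Let $\Phi\in\mon S\ConvPow X a$ be the function with $\Phi(B_j)=\mu_j$ and $\Phi(B)=0$ otherwise. Then $\supp\Phi\subseteq\mathcal A$ and $\sum_j\mu_j=\sum_x\phi(x)=1$, so $\alpha_a(\Phi)\in\convclos{\mathcal A}{\alpha_a}=\mathcal A$, the last equality by convexity of $\mathcal A$ with respect to $\alpha_a$. The choice function $u\colon\supp\Phi\to X$, $u(B_j)=y_j\in B_j$, is among those defining $\choice\Phi$, so by~\eqref{eq:alphaa} we get $a(\mon S(u)(\Phi))\in\alpha_a(\Phi)$. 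It then remains to check $a(\mon S(u)(\Phi))=a(\phi)$. Unfolding $\mon S(u)(\Phi)$ and using that $a$ is additive and scalar-homogeneous for the semimodule operations (by~\eqref{eq:S-algebra associated to semimodule}) one computes $a(\mon S(u)(\Phi))=\sum_j\mu_j\cdot^a y_j=\sum_j a(\mu_j\cdot\phi_j)=a\bigl(\sum_j\mu_j\cdot\phi_j\bigr)=a(\phi)$, where $\mu_j\cdot\phi_j$ is just the restriction of $\phi$ to the $B_j$-block and these blocks partition $\supp\phi$. Hence $a(\phi)\in\alpha_a(\Phi)\subseteq\bigcup_{A\in\mathcal A}A$, which finishes the proof.

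I expect the aggregation step to be the only real obstacle: because $\mon S\ConvPow X a$ consists of \emph{functions}, several points of $\supp\phi$ may be attributed to the same set $B_j$, so one cannot realise $a(\phi)$ directly as $a(\mon S(u)(\Phi))$ for a choice function $u$ that picks a single point per set. The remedy --- collapsing each block to the single point $y_j\in B_j$ by convexity of $B_j$, which requires normalising the block and hence dividing in the semifield $S$ --- is exactly where the hypothesis that $S$ is a positive semifield enters; everything else is a routine unwinding of the definitions of $\choice{\cdot}$, $\alpha_a$ and of the semimodule structure induced by $a$.
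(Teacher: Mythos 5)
Your proof is correct, and it shares the paper's opening move --- aggregate the mass of $\phi$ over chosen containing sets to build $\Phi \in \mon S \ConvPow X a$ with $\sum_B \Phi(B)=1$ and $\supp\Phi \subseteq \mathcal A$, then invoke convexity of $\mathcal A$ to conclude $\alpha_a(\Phi) \in \mathcal A$ --- but it diverges in how $a(\phi) \in \alpha_a(\Phi)$ is established. The paper works with the identity $\alpha_a(\Phi) = \pow a\bigl(\delta_X(\mon S(\iota)(\Phi))\bigr)$ and exhibits a witness $\psi \in \mon S(\ni_X)$, namely $\psi(B,x) = \sum_{i:(B,x)=(A_i,x_i)}\phi(x_i)$, showing directly that $\phi \in \delta_X(\mon S(\iota)(\Phi))$ via the raw formula~\eqref{eqn:def of delta}; this step needs neither division in $S$ nor convexity of the individual members of $\mathcal A$. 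You instead use the $\choice{\cdot}$-description~\eqref{eq:alphaa}: you normalise each block by $\mu_j^{-1}$ (which is where positivity and the semifield hypothesis enter, both available as standing assumptions in this section), use convexity of each $B_j$ to collapse the block to a single point $y_j = a(\phi_j) \in B_j$, and realise $a(\phi)$ as $a(\mon S(u)(\Phi))$ for the choice function $u(B_j)=y_j$. Your route is arguably more geometric and makes transparent why the union is convex (each block of the combination can be absorbed into its own convex set before combining across sets), at the cost of invoking more hypotheses at this step; the paper's route is leaner here but requires having already established that the outer convex closure in the definition of $\alpha_a$ is superfluous. Your closing computation $\sum_j \mu_j \cdot^a y_j = a\bigl(\sum_j \mu_j \cdot \phi_j\bigr) = a(\phi)$ is exactly the $\mon S$-algebra associativity law $a \circ \mon S(a) = a \circ \muS_X$ applied to $\Psi = (\phi_j \mapsto \mu_j)_j$, and is sound.
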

		\begin{proof}
			We want to prove the following equality:
			\[
			\bigcup_{A \in \mathcal A} A = \{a(\phi) \mid \phi \in \mon S X, \sum_{x \in X} \phi(x)=1, \supp \phi \subseteq \bigcup_{A \in \mathcal A} A\}.
			\]
			The left-to-right inclusion is trivial: for all $A \in \mathcal A$ and $x \in A$, $x=a(\Delta_x)$. Vice versa, let $\phi \in \mon S X$ be such that $\sum_{x \in X} \phi(x)=1$ and $\supp \phi \subseteq \bigcup_{A \in \mathcal A} A$. We want to find an element $A \in \mathcal A$ such that $a(\phi) \in A$. Suppose
			\[
			\supp \phi = \{x_1,\dots,x_n\}.
			\]
			Then for all $i \in \natset n$ there exists $A_i \in \mathcal A$ such that $x_i \in A_i$. Notice that if $i \ne j$ then it is not necessarily the case that $A_i \ne A_j$. Define $\Phi \in \mon S (\ConvPow X a)$ as
			\[
			\Phi(B) = \sum_{ \substack{i \in \natset n \\ B=A_i} } \phi(x_i)
			\]
			for all $B \in \ConvPow X a$.
			Then $\supp \Phi = \{A_i \mid i \in \natset n\}$ is finite and
			\[
			\sum_{B \in \ConvPow X a} \Phi(B) = \sum_{B \in \ConvPow X a} \sum_{ \substack{i \in \natset n \\ B=A_i} } \phi(x_i) = \sum_{i \in \natset n} \phi(x_i) = 1.
			\]
			Since $\mathcal A$ is convex in $(\ConvPow X a,\alpha_a)$, we have that $\alpha_a(\Phi) \in \mathcal A$. This is going to be our desired $A$: we shall prove that $a(\phi) \in \alpha_a(\Phi)$.
			
			We have $\alpha_a(\Phi)=\pow a (\delta_X(\mon S (i)(\Phi)))$, hence, if we prove that $\phi \in \delta_X(\mon S (i)(\Phi))$, we have finished. To this end, recall the characterisation of $\delta_X$ using elements of $\mon S (\ni_X)$, for $\ni_X {} \subseteq \pow X \times X$:
			\begin{multline*}
				\delta_X(\mon S (i) (\Phi)) = \\
				\left\{a(\chi) \mid \chi \in \mon S X \ldotp \exists \psi \in \mon S (\ni_X) \ldotp 
				\begin{cases}
					\forall B \subseteq X \ldotp \mon S (i) (\Phi)(B) = \sum_{x \in B} \psi(B,x) \\
					\forall x \in X \ldotp \chi(x) = \sum_{B \ni x} \psi(B,x)
				\end{cases}
				\right\}.
			\end{multline*}
			Define, for all $(B,x) \in \pow X \times X$ such that $B \ni x$:
			\[
			\psi(B,x) = \sum_{ \substack{i \in \natset n \\ (B,x)=(A_i,x_i)}} \phi(x_i).
			\]
			Then $\supp \psi = \{(A_i,x_i) \mid i \in \natset n\}$ is finite, for all $B \subseteq X$
			\[
			\sum_{x \in B} \psi(B,x) = \sum_{x \in B} \sum_{ \substack{i \in \natset n \\ (B,x)=(A_i,x_i)}} \phi(x_i) = \sum_{\substack{i \in \natset n \\ B=A_i}} \phi(x_i) = \mon S (i)(\Phi)(B)
			\]
			and, for all $x \in X$,
			\[
			\sum_{B \ni x} \psi(B,x) = \sum_{B \ni x} \sum_{ \substack{i \in \natset n \\ (B,x)=(A_i,x_i)}} \phi(x_i) = 
			\begin{cases}
				0 & \forall i \in \natset n \ldotp x \ne x_i \\
				\phi(x_i) & \exists (!) i \in \natset n \ldotp x=x_i
			\end{cases}
			= \phi(x)
			\]
			where if there is $i$ such that $x=x_i$, then such $i$ is unique due to the fact that the $x_j$'s are distinct. This proves that $\phi \in \delta_X(\mon S (i)(\Phi))$.
		\end{proof}

		\section{The Composite Monad: an Algebraic Presentation}\label{sec:the composite monad}

		We can now compose the two monads $\P$ and $\mon S$ by considering the monad arising from the composition of the following two adjunctions: 
		\[\begin{tikzcd}[column sep=3em]
			{\Set} \ar[r,bend left=30,"{\F{\mon S}}"{name=F1},pos=0.55] & {\EM{\mon S}} \ar[l,bend left=30,"\U{\mon S}"{name=U1},pos=0.45] \ar[r,bend left=30,"\F{\tilde \P}"{name=F2},pos=0.55] & \EM{\tilde \P} \ar[l,bend left=30,"\U{\tilde \P}"{name=U2},pos=0.5]
			\ar[from=F1,to=U1,"\bot"{description},draw=none]
			\ar[from=F2,to=U2,"\bot"{description},draw=none]
		\end{tikzcd}
		\]
		Direct calculations show that the resulting endofunctor on $\Set$, which we call $\convpowS{}$,
		maps a set $X$ and a function $f\colon X \to Y$ into, respectively, 
		\begin{equation}\label{composite endo}
			\convpowS{X} = \ConvPow{(\mon S X)} {\muS_X} \qquad \text{and} \qquad\convpowS(f)(\mathcal A) = \{\mon{S}(f)(\Phi) \mid \Phi \in \mathcal{A}\}\end{equation}
		for all $\mathcal{A}\in \convpowS{X}$. For all sets $X$, $\eta^\convpowS_X \colon X \to \convpowS{X}$ and $\mu^{\convpowS{}}_X \colon \convpowS{}\convpowS{X} \to \convpowS{X}$ are defined as
		\begin{equation}\label{composite unit mult}
			\eta^\convpowS_X(x)= \{ \Delta_x \} \qquad \text{and} \qquad \mu^{\convpowS}_X(\mathscr A )= \bigcup\limits_{\Omega \in \mathscr A} \alpha_{\muS_X} (\Omega)
		\end{equation}
		for all $x\in X$ and $\mathscr A \in \convpowS{}\convpowS{X}$.
		
		\begin{thm}\label{thm:composite monad}
			Let $S$ be a positive semifield. Then the canonical weak distributive law $\delta \colon \mon S \P \to \P \mon S$ given in Theorem~\ref{thm:delta for positive refinable semifields} induces a monad $\convpowS$ on $\Set$ with endofunctor, unit and multiplication defined as in \eqref{composite endo} and \eqref{composite unit mult}.
		\end{thm}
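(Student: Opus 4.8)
The plan is to obtain this statement as a routine consequence of the abstract machinery of Section~1 together with the explicit description of the weak lifting provided by Theorem~\ref{thm:weaklift}. Since idempotents split in $\Set$, part (ii) of Theorem~\ref{thm:correspondence weak distributive law with weak extensions-liftings} turns the weak distributive law $\delta$ of Theorem~\ref{thm:delta for positive refinable semifields} into a weak lifting $(\tilde\P,\eta^{\tilde\P},\mu^{\tilde\P})$ of $\P$ to $\EM{\S}$, and by Theorem~\ref{thm:weaklift} this is precisely the \emph{genuine} monad on $\EM{\S}$ whose data are $\eqref{eq:alphaa}$, $\eqref{eq:f}$ and $\eqref{eq:etamu}$. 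Being a monad on $\EM{\S}$, it carries an Eilenberg--Moore adjunction $\F{\tilde\P}\dashv\U{\tilde\P}$; pre-composing it with $\F{\S}\dashv\U{\S}$ yields an adjunction whose induced monad on $\Set$ is, by the construction recalled at the end of Section~1, the composite monad $\convpowS$. Hence the monad laws come for free, and it only remains to identify its underlying functor, unit and multiplication with the maps displayed in $\eqref{composite endo}$ and $\eqref{composite unit mult}$.

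For the underlying functor $\U{\S}\tilde\P\F{\S}$ and the unit I would simply unfold the definitions. On a set $X$ we have $\F{\S}X=(\S X,\muS_X)$, then $\tilde\P(\S X,\muS_X)=(\ConvPow{(\S X)}{\muS_X},\alpha_{\muS_X})$ by Theorem~\ref{thm:weaklift}, and $\U{\S}$ discards the algebra structure, leaving the carrier $\ConvPow{(\S X)}{\muS_X}=\convpowS X$. On a function $f\colon X\to Y$ we get $\U{\S}\tilde\P(\S f)$; since $\S f$ is a morphism of $\S$-algebras, its image along $\P$ preserves convexity (as observed right after the statement of Theorem~\ref{thm:weaklift}), so the convex closure in $\eqref{eq:f}$ is redundant and $\tilde\P(\S f)(\mathcal A)=\{\S(f)(\Phi)\mid\Phi\in\mathcal A\}$, which is $\eqref{composite endo}$. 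The unit of the composite adjunction is $\U{\S}\eta^{\tilde\P}_{\F{\S}(-)}\circ\etaS$, so at $X$ it sends $x\mapsto\Delta_x\mapsto\{\Delta_x\}$ by $\eqref{eq:etamu}$, giving $\eqref{composite unit mult}$.

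The multiplication needs a short unwinding. Writing $\varepsilon^{\S}$ for the counit of $\F{\S}\dashv\U{\S}$, whose component at an $\S$-algebra $(Y,b)$ is the structure map $b$, the multiplication of the composite monad is
\[
\mu^{\convpowS}_X \;=\; \U{\S}\bigl(\mu^{\tilde\P}_{(\S X,\muS_X)}\bigr)\circ\U{\S}\tilde\P\bigl(\varepsilon^{\S}_{\tilde\P(\S X,\muS_X)}\bigr).
\]
Here $\varepsilon^{\S}_{\tilde\P(\S X,\muS_X)}$ is the structure map $\alpha_{\muS_X}$ of the $\S$-algebra $\tilde\P(\S X,\muS_X)$; applying $\tilde\P$ to it and then $\U{\S}$, and using once more that an algebra morphism sends convex subsets to convex subsets, yields the map $\mathscr A\mapsto\{\alpha_{\muS_X}(\Omega)\mid\Omega\in\mathscr A\}$, after which $\mu^{\tilde\P}_{(\S X,\muS_X)}$ is the union by $\eqref{eq:etamu}$. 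Composing the two gives $\mathscr A\mapsto\bigcup_{\Omega\in\mathscr A}\alpha_{\muS_X}(\Omega)$, which is exactly $\eqref{composite unit mult}$; that this set is convex in $(\S X,\muS_X)$ is nothing but the well-definedness of $\mu^{\tilde\P}$, already settled in Proposition~\ref{prop:union of convex family of convex subsets is convex}.

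I do not expect any genuine obstacle: all the mathematical content sits in Theorems~\ref{thm:delta for positive refinable semifields} and~\ref{thm:weaklift}. The only delicate point will be the bookkeeping — carefully tracking the whiskered units and counits through the composite adjunction and checking that the convex closures appearing in the action of $\tilde\P$ on morphisms and in the structure maps $\alpha_{\muS_X}$ genuinely collapse, so that the abstract formulas reduce to the clean set-theoretic ones in $\eqref{composite endo}$ and $\eqref{composite unit mult}$.
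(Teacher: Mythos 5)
Your proposal is correct and matches the paper's own treatment: the paper gives no separate proof of this theorem but derives it exactly as you do, by composing the adjunctions $\F{\S}\dashv\U{\S}$ and $\F{\tilde\P}\dashv\U{\tilde\P}$ and unfolding the resulting functor, unit and multiplication via Theorem~\ref{thm:weaklift}, with convexity of the union settled by Proposition~\ref{prop:union of convex family of convex subsets is convex}. The only cosmetic slip is your remark that ``the convex closure in \eqref{eq:f} is redundant'' --- \eqref{eq:f} is already the closure-free form; the closure you mean appears in the abstract splitting-based definition of $\tilde\P$ on morphisms, and your argument for why it collapses is the right one.
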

		
		\begin{exa}
			Let $S=\Bool$. Then the monad $\S$ is isomorphic to the finite powerset monad $\Pf$. In this case, a convex subset of $\Pf X$ with respect to the structure map $\mu^{\Pf}_X$ is a family $\mathcal A$ of finite subsets of $X$ that is closed under binary unions (see Example~\ref{ex:S=Bool, convex subsets are directed}), in other words, it is a sub-semilattice of $\Pf X$:
			\[
			\ConvPow {(\S X)} {\muS_X} = \{ \mathcal A \in \P(\Pf X) \mid \forall A,B \in \mathcal A \ldotp A \cup B \in \mathcal A \}.
			\]
			Here a $\Phi \in \mon S \bigl( \ConvPow{(\mon S X)}{\muS_X} \bigr) $ is a finite set of the form $\{ \mathcal A_1,\dots,\mathcal A_n \}$ where each $\mathcal A_i$ is a (possibly infinite) subset of $\Pf X$ that is closed under binary unions: if $A,B \in \mathcal A_i$ then $A \cup B \in \mathcal A_i$. The set $\choice \Phi$ is now
			$
			\choice \Phi = \{\; \{ A_1,\dots,A_n \} \mid A_i \in \mathcal A_i  \; \} \subseteq \Pf (\Pf X)
			$
			hence 
			\[
			\alpha_{\muS_X} (\{ \mathcal A_1,\dots,\mathcal A_n \}) = \bigl\{ A_1 \cup \dots \cup A_n \mid A_i \in \mathcal A_i \bigr\} \subseteq \Pf X.
			\] 
			Thus an $\mathscr A \in \convpow \Pf \convpow \Pf X$ is a (possibly infinite) collections of $\Phi$'s as above, closed under binary unions, in the sense that if $\{\mathcal A_1,\dots,\mathcal A_n\}$ and $\{\mathcal B_1,\dots,\mathcal B_m\}$ are in $\mathscr A$, then $\{ \mathcal A_1,\dots,\mathcal A_n,\mathcal B_1,\dots,\mathcal B_m \} \in \mathscr A$. $\mu(\mathscr A$) is then the union of all the $\alpha_{\muS_X}(\Phi)$ with $\Phi \in \mathscr A$: it is a subset of $\Pf X$. The fact that it is actually a \emph{convex} subset of $\Pf X$ (\ie it is closed under binary unions) is ensured by Theorem~\ref{thm:composite monad} and it is a consequence of the fact that $\mathscr A$ is convex itself.
		\end{exa}
		
		Recall from Remark~\ref{remarkJacobs} that the monad $\mathcal C \colon \EM{\mon S} \to \EM{\mon S}$ from~\cite{jacobs_coalgebraic_2008} coincides with our lifting $\tilde \P$ modulo the absence of the empty set. The same happens for the composite monad, which is named $\mathcal{CM}$ in~\cite{jacobs_coalgebraic_2008}. 
		The absence of $\emptyset$ in $\mathcal{CM}$ turns out to be rather problematic for Jacobs. Indeed, in order to use the standard framework of coalgebraic trace semantics~\cite{hasuo_generic_2006}, one would need the Kleisli category $\Kl{\mathcal{CM}}$ to be enriched over $\Cppo$, the category of $\omega$-complete partial orders with \emph{bottom} and continuous functions. $\Kl{\mathcal{CM}}$ is not $\Cppo$-enriched since there is no bottom element in $\mathcal{CM}(X)$. Instead, in $\convpowS{X}$ the bottom is exactly the empty set; moreover, $\Kl{\convpowS{}}$ enjoys the properties required by~\cite{hasuo_generic_2006}.
		
		\begin{thm}\label{thm:Kleisli is CPPO enriched}
			The category $\Kl{\convpowS}$ is enriched over $\Cppo$ and satisfies the left-strictness condition: for all $f \colon X \to \convpowS Y$ and $Z$ set, $\bot_{Y,Z} \circ f = \bot_{X,Z}$.
		\end{thm}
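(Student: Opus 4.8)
The plan is to order each hom-set $\Kl{\convpowS}(X,Y)$ --- the set of maps $X \to \convpowS Y = \ConvPow{(\S Y)}{\muS_Y}$ --- by \emph{pointwise inclusion}, $f \sqsubseteq g$ iff $f(x) \subseteq g(x)$ for all $x$, with least element the constant map $\bot_{X,Y}\colon x \mapsto \emptyset$; this is legitimate because $\emptyset$ is a convex subset (noted after Definition~\ref{def:convex closure S-ALGEBRA}), and it is precisely the bottom absent from Jacobs's monad (Remark~\ref{remarkJacobs}). First I would check that each hom-set is an $\omega$-cpo, the supremum of a chain $f_0 \sqsubseteq f_1 \sqsubseteq \cdots$ being the pointwise union $x \mapsto \bigcup_n f_n(x)$. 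The only non-routine point is that this union is again $\convpowS Y$-valued, i.e.\ that a directed union of convex subsets of $\S Y$ is convex. This holds because convexity is a closure condition witnessed by \emph{finitary} data: an element of $\convclos{A}{\muS_Y}$ has the form $\muS_Y(\Psi)$ with $\sum_\phi \Psi(\phi)=1$ and $\supp\Psi$ finite, so it involves only finitely many members of $A$; if these all lie in $\bigcup_n f_n(x)$ then, by directedness, they lie in a single $f_N(x)$, hence so does the combination. (The same argument shows each hom-set is in fact a dcpo.)

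Next I would establish that composition is monotone and Scott-continuous in both arguments, which --- together with the preceding --- gives the $\Cppo$-enrichment. Unwinding \eqref{composite endo} and \eqref{composite unit mult}, Kleisli composition reads
\[
(g \circ f)(x) \;=\; \bigcup_{\phi \in f(x)} \alpha_{\muS_Z}\!\bigl(\S(g)(\phi)\bigr), \qquad f\colon X \to \convpowS Y,\quad g\colon Y \to \convpowS Z.
\]
Monotonicity and continuity in $f$ are immediate since the right-hand side is a union indexed by $f(x)$. For $g$: enlarging $g$ enlarges each convex set $g(y)$, hence the set of choice functions over $\supp(\S(g)(\phi))$ that enters the definition of $\alpha_{\muS_Z}$ via $\choice{\cdot}$ (equation~\eqref{eq:alphaa}), hence $\alpha_{\muS_Z}(\S(g)(\phi))$ --- this gives monotonicity; and since such a choice function picks one element from each of the \emph{finitely many} sets $g(y)$, $y \in \supp\phi$, every choice is already available at some finite stage of a chain $g_0 \sqsubseteq g_1 \sqsubseteq \cdots$, so $\choice{\S(g)(\phi)} = \bigcup_n \choice{\S(g_n)(\phi)}$ and therefore $\alpha_{\muS_Z}(\S(g)(\phi)) = \bigcup_n \alpha_{\muS_Z}(\S(g_n)(\phi))$; unioning over $\phi \in f(x)$ yields continuity in $g$. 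The associativity and unit laws of the enrichment are then straightforward.

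For left-strictness I would compute $\bot_{Y,Z}\circ f$ with the displayed formula. As $\bot_{Y,Z}$ is constant at $\emptyset \in \convpowS Z$, each $\S(\bot_{Y,Z})(\phi)$ is the formal combination carrying all the mass of $\phi$ onto the single element $\emptyset$; using positivity of $S$, this mass is nonzero exactly when $\phi \neq \zero$, in which case $\emptyset \in \supp(\S(\bot_{Y,Z})(\phi))$ and $\choice{\cdot}$ can produce nothing, because there is no element of $\emptyset$ to choose, so $\alpha_{\muS_Z}(\S(\bot_{Y,Z})(\phi)) = \emptyset$. Handling the remaining degenerate input separately, the whole union collapses to $\emptyset$, giving $\bot_{Y,Z}\circ f = \bot_{X,Z}$; the analogous fact for composition on the other side is even easier, since $\convpowS(-)$ and $\mu^{\convpowS}$ both send $\emptyset$ to $\emptyset$. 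This verifies the hypotheses of~\cite{hasuo_generic_2006}.

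I expect the main obstacle to be the Scott-continuity of composition in the post-composed argument $g$: monotonicity is formal, but continuity genuinely requires the observation that $\alpha_{\muS_Z}$, and hence $\mu^{\convpowS}$, commutes with directed joins of convex sets --- which rests, once more, on the fact that $\choice\Phi$ only inspects the finite support of $\Phi$. The same "finitariness" underlies both the $\omega$-completeness of the hom-sets and the continuity of composition, so it is really the single point on which the whole theorem turns.
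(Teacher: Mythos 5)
Your treatment of the $\Cppo$-enrichment is correct, and it takes a genuinely more elementary route than the paper's. You argue directly with the concrete sets: directed unions of convex subsets of $\S Y$ are convex because convex combinations have finite support, and composition is continuous in the post-composed argument because $\choice{\cdot}$ only inspects the finitely many sets $g(y)$, $y \in \supp\phi$, so every choice is already available at a finite stage of a chain. The paper instead routes everything through the algebraic presentation $(\Sigma,E)$ of Theorem~\ref{thm:pres}: continuity of $({-})\circ f$ follows from the distributivity axioms $(D1)$, $(D2)$ plus directedness, and continuity of $g\circ({-})$ from the fact that $\Sharp g$ is a morphism of $\convpowS$-algebras (which even gives preservation of \emph{arbitrary} joins there). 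Both work; yours is self-contained, the paper's is shorter once the presentation is available. One imprecision worth fixing: the literal equation $\choice{\S(g)(\phi)}=\bigcup_n\choice{\S(g_n)(\phi)}$ can fail, since when $g_n(y_1)\ne g_n(y_2)$ for all $n$ but $\bigcup_n g_n(y_1)=\bigcup_n g_n(y_2)$ the supports of $\S(g_n)(\phi)$ and $\S(g)(\phi)$ collapse differently and the choice functions live over different index sets. What is true, and what you actually need, is $\alpha_{\muS_Z}(\S(g)(\phi))=\bigcup_n\alpha_{\muS_Z}(\S(g_n)(\phi))$; this follows from describing $\alpha_{\muS_Z}(\S(g)(\phi))$ as $\{\sum_{y\in\supp\phi}\phi(y)\cdot\psi_y \mid \psi_y\in g(y)\}$ and using convexity of each $g(y)$ together with the semifield structure to absorb the collapsing. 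The same caveat applies to your monotonicity claim.

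The genuine gap is in left-strictness, at exactly the ``degenerate input'' you set aside. Take $\phi=\zero$, the zero function of $\S Y$, which \emph{can} belong to $f(x)$: the singleton $\{0^{\muS_Y}\}$ is a convex subset of $\S Y$ (it is $\bb{0}$), so $f(x)=\{0^{\muS_Y}\}$ is a legitimate Kleisli map. Then $\S(\bot_{Y,Z})(\zero)$ is the zero function of $\S(\convpowS Z)$, whose support is empty; there is exactly one (empty) choice function, so $\choice{\zero}=\{\zero\}$ and $\alpha_{\muS_Z}(\zero)=\{0^{\muS_Z}\}\neq\emptyset$. This is precisely the convention $0\cdot\emptyset=\{0^a\}$ and $0^{\alpha_a}=\{0^a\}$ of \eqref{eq:lambdaP} and \eqref{eq:sumP} that Remark~\ref{remarkJacobs} insists on: the empty sum in the semimodule $\convpowS Z$ is $\{0^{\muS_Z}\}$, not $\emptyset$. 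Consequently $(\bot_{Y,Z}\circ f)(x)\supseteq\{0^{\muS_Z}\}$ whenever $\zero\in f(x)$, so the union does not collapse to $\emptyset$ and the degenerate case cannot simply be ``handled separately'' --- it is an obstruction to the computation. Your argument therefore only establishes left-strictness for those $f$ with $\zero\notin f(x)$ for all $x$. You should be aware that the paper's own displayed computation has the same blind spot: the step $\sum_{y\in\supp\phi}\emptyset=\emptyset$ is valid only when $\supp\phi\neq\emptyset$. So this is not a defect specific to your write-up, but it does mean the case $\zero\in f(x)$ needs to be either excluded or treated honestly rather than dismissed.
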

		It is immediate that every homset in $\Kl{\convpowS}$ carries a complete partial order. Showing that composition of arrows in $\Kl{\convpowS}$ preserves joins (of $\omega$-chains) requires more work: the proof, which can be found in \sect~\ref{subsec:proof of Kleisli category is CPPO enriched}, crucially relies on the algebraic theory presenting the monad $\convpowS$, illustrated next.

		\subsection{An Algebraic Presentation.}\label{subsec:algebraic presentation}
		Recall that an \emph{algebraic theory} is a pair $\mathcal T=(\Sigma, E)$ where $\Sigma$ is a \emph{signature}, whose elements are called \emph{operations}, to each of which is assigned a cardinal number called its \emph{arity}, while $E$ is a class of \emph{equations} between $\Sigma$-terms. An \emph{algebra} for the theory $\mathcal T$ is a set $A$ together with, for each operation $o$ of arity $\kappa$ in $\Sigma$, a function $o_A \colon A^\kappa \to A$ satisfying the equations of $E$. A \emph{homomorphism} of algebras is a function $f \colon A \to B$ respecting the operations of $\Sigma$ in their realisations in $A$ and $B$. Algebras and homomorphisms of an algebraic theory $\mathcal T$ form a category $\Alg(\mathcal T)$.
		
		\begin{defi}
			Let $M$ be a monad on $\Set$, and $\mathcal T$ an algebraic theory. We say that $\mathcal T$ \emph{presents} $M$ if and only if $\EM{M}$ and $\Alg(\mathcal T)$ are isomorphic.
		\end{defi}
		
		Left $S$-semimodules are algebras for the theory $\LSM= (\Sigma_{\LSM}, E_{\LSM})$ where
		$\Sigma_{\LSM} = \{ +, 0 \} \cup \{ \lambda \cdot {} \mid \lambda \in S \}$ and $E_{\LSM}$ is the set of axioms in Table~\ref{tab:axiomsinitial}, top right. As already mentioned in Section~\ref{sec:monad},  left $S$-semimodules are exactly $\mon S$-algebras and morphisms of $S$-semimodules coincide with those of $\mon S$-algebras. Thus, the theory $\LSM$ presents the monad~$\mon S$.
		
		Similarly, semilattices are algebras for the theory $\SL=(\Sigma_{\SL}, E_{\SL})$ where $\Sigma_{\SL} = \{\sqcup, \bot\}$ and $E_{\SL}$ is the set of axioms in Table~\ref{tab:axiomsinitial}, top left. It is well known that semilattices are algebras for the \emph{finite} powerset monad. Actually, this monad is presented by $\SL$. 
		In order to present the full powerset monad $\P$ we need to take joins of arbitrary arity. A \emph{complete semilattice} is a set $X$ equipped with joins $\Sup_{x\in A} x$ for all--not necessarily finite--$A\subseteq X$. Formally the (infinitary) theory of \emph{complete semilattices} is given as $\CSL = (\Sigma_{\CSL}, E_{\CSL})$ where 
		$\Sigma_{\CSL} = \{ \Sup_{I} \mid I \text{ set} \}$
		and $E_{\CSL}$ is the set of axioms displayed in the top of Table~\ref{table:otheraxioms} (for a detailed treatment of infinitary algebraic theories see, for example,~\cite{manes_algebraic_1976}).

		We can now illustrate the theory $(\Sigma,E)$ presenting the composite monad $\convpowS$: the operations in $\Sigma$ are exactly those of complete semilattices and $S$-semimodules, while the axioms are those of complete semilattices and $S$-semi\-modules together with the set $E_{\D}$ of \emph{distributivity} axioms consisting of $(D1)$ and $(D2)$ in the bottom of Table \ref{table:otheraxioms}.
		In short, $\Sigma = \Sigma_{\CSL} \cup \Sigma_{\LSM} \text{ and } E= E_{\CSL} \cup E_{\LSM} \cup E_{\D}$. \begin{thm}\label{thm:pres}
			The monad $\convpowS$ is presented by the algebraic theory $(\Sigma,E)$.
		\end{thm}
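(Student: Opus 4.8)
The plan is to exhibit an isomorphism of categories over $\Set$ between $\EM{\convpowS}$ and $\Alg(\Sigma,E)$. First I would replace $\EM{\convpowS}$ by a workable description. Since $\convpowS$ is the monad arising from the composite adjunction $\Set\to\EM{\mon S}\to\EM{\tilde\P}$, we have $\EM{\convpowS}\cong\EM{\tilde\P}$, and — following the general theory of weak distributive laws, as recalled in the introduction — this category is in turn isomorphic to the category of \emph{$\delta$-algebras}: triples $(X,a,k)$ with $(X,a)\in\EM{\mon S}$, $(X,k)\in\EM\P$, subject to the pentagonal law $a\circ\mon S k=k\circ\P a\circ\delta_X$, the morphisms being the functions that are simultaneously $\mon S$- and $\P$-algebra maps. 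Using~\eqref{eq:S-algebra associated to semimodule} and the identification of $\EM\P$ with complete semilattices, the underlying datum of a $\delta$-algebra is exactly a set carrying an $S$-semimodule structure $(0,+,\lambda\cdot{})$ together with a complete-semilattice structure $\Sup$ — that is, precisely a $\Sigma$-algebra — and $\delta$-algebra morphisms are precisely $\Sigma$-homomorphisms. It therefore suffices to prove that a $\Sigma$-algebra satisfies the pentagonal law if and only if it satisfies every equation of $E=E_{\CSL}\cup E_{\LSM}\cup E_{\D}$.

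Since $E_{\CSL}$ and $E_{\LSM}$ hold by construction in any $\delta$-algebra, the real content is the equivalence, over $E_{\CSL}\cup E_{\LSM}$, between the pentagonal law and the pair $(D1),(D2)$. I would pivot the argument on the auxiliary identity
\begin{equation*}
\Sup\convclos{A}{}=\Sup A\qquad\text{for all }A\subseteq X,\tag{$\star$}
\end{equation*}
with $\convclos A{}$ the convex closure of $A$ in the semimodule $X$. For \emph{pentagon $\Rightarrow (D1),(D2)$}: evaluating the pentagonal law at the Dirac function $\Phi=\etaS_{\P X}(A)$ and combining Lemma~\ref{lemma:delta of Dirac}, Theorem~\ref{thm:delta for positive refinable semifields} and Proposition~\ref{prop:image along a of convex closures} with the trivial identity $\P a(\choice{\etaS_{\P X}(A)})=\P a(\{\Delta_x\mid x\in A\})=A$, the left-hand side of the pentagon becomes $\Sup\convclos A{}$ and the right-hand side becomes $\Sup A$, which is exactly $(\star)$. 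Evaluating next at $\Phi=(A\mapsto\lambda)$ with $\lambda\neq0$ gives $\lambda\cdot\Sup A=\Sup\convclos{\{\lambda\cdot x\mid x\in A\}}{}$, which collapses to $(D1)$ by $(\star)$; and evaluating at $\Phi=(A\mapsto1,\,B\mapsto1)$ with $A\neq B$ — so that $\P a(\choice\Phi)=\{x+y\mid x\in A,\,y\in B\}$ — gives $\Sup A+\Sup B=\Sup\convclos{\{x+y\mid x\in A,\,y\in B\}}{}$, which collapses to $(D2)$ by $(\star)$. (A short additional step, using that $(D2)$ for distinct subsets already forces monotonicity of $+$, recovers $(D2)$ for families whose underlying sets coincide — a case the pentagon cannot by itself separate.)

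For the converse I would first note that specialising $(D1)$ and $(D2)$ to singleton and empty index sets recovers the finitary axioms $(D1)$–$(D4)$ of Table~\ref{tab:axiomsinitial}, so the convexity derivation of Table~\ref{tab:derivation} applies; iterating it, and using that every element of $\convclos A{}$ is a finite convex combination of elements of $A$, yields $(\star)$. Then, given $\Phi\in\mon S\P X$, after dispatching the degenerate cases $\Phi(\emptyset)\neq0$, $X=\emptyset$, $\Phi=\zero$ — in which both sides of the pentagon reduce to $\bot$ or $0$ (using the finitary distributivity axioms where needed) — write $\supp\Phi=\{A_1,\dots,A_n\}$. Using $(D_s2)$ one computes the right-hand side of the pentagon as $\sum_{i=1}^n\Phi(A_i)\cdot\Sup A_i$; using Theorem~\ref{thm:delta for positive refinable semifields}, Proposition~\ref{prop:image along a of convex closures}, $(\star)$ and the explicit form of $\choice\Phi$ one computes the left-hand side as $\Sup\bigl\{\sum_{i=1}^n\Phi(A_i)\cdot a_i\mid a_i\in A_i\bigr\}$. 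These agree: by $(D1)$ each summand $\Phi(A_i)\cdot\Sup A_i$ rewrites as $\Sup_{a_i\in A_i}\Phi(A_i)\cdot a_i$, and then $n-1$ applications of $(D2)$ distribute the outer sum over the joins — a generalised distributivity entirely parallel to Lemma~\ref{lemma:generalised distributivity}. This establishes the pentagonal law, hence the equivalence, hence the theorem.

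The step I expect to be the main obstacle is the bookkeeping in this last computation: moving correctly between $\choice\Phi$, $\muS_X$-convex closures and $a$-convex closures (Proposition~\ref{prop:image along a of convex closures} is the indispensable lever here), and then checking that the object-and-morphism bijection is compatible with the forgetful functors to $\Set$, so that it upgrades to a genuine presentation rather than a mere equivalence. Conceptually, the identity $(\star)$ is the linchpin that makes both directions short, so the real work lies in establishing it cleanly — as a single instance of the pentagon on one side, and as a consequence of the Table~\ref{tab:derivation} derivation on the other.
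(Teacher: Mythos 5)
Your proposal is correct and follows essentially the same route as the paper: reduce to $\delta$-algebras via Proposition~\ref{prop:delta algebras isomorphic to algebras composite monad}, prove the pentagon equivalent to $(D1),(D2)$ over $E_{\CSL}\cup E_{\LSM}$, and pivot both directions on the identity $\Sup A=\Sup\convclos A{}$ (the paper's Proposition~\ref{prop:sup A = sup convclos A in delta algebras} in one direction and its lemma on $(\Sigma,E)$-algebras in the other), with the pentagon instantiated at exactly the same witnesses $\Delta_A$, $(A\mapsto\lambda)$ and $(A\mapsto1,\,B\mapsto1)$. Your remark about the case where the two families in $(D2)$ have the same underlying set is a detail the paper silently subsumes under ``a similar argument,'' so flagging it is a small but legitimate refinement rather than a divergence.
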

		
		The proof of Theorem~\ref{thm:pres} crucially relies on the fact that $\convpowS$ is obtained by composing $\P$ and $\mon S$ via $\delta$, as we show in the rest of this current Section~\ref{subsec:algebraic presentation}. We shall use the notion of $\delta$-algebra below, originally defined by~\cite{beck_distributive_1969} in the context of ordinary distributive laws but that can be adapted with no changes in the weak setting as well. Here we report the definition as phrased in~\cite{garner_vietoris_2020}.
		
		\begin{defi}
			Let $T$ and $S$ be monads on $\C$ and $\delta \colon TS \to ST$ be a (weak) distributive law. A $\delta$-\emph{algebra} is an object of $\C$ endowed with a $T$-algebra structure $t \colon TX \to X$ and an $S$-algebra structure $s \colon SX \to X$ such that the following pentagon
			\[
			\begin{tikzcd}
				TS(X) \ar[rr,"\delta_X"] \ar[d,"Ts"'] & & ST(X) \ar[d,"St"] \\
				T(X) \ar[dr,"t"'] &  & S(X) \ar[dl,"s"] \\
				& X
			\end{tikzcd}
			\]
			commutes. The category $\EM{\delta}$ is the full subcategory of $\EM{S} \times_\C \EM{T}$, result of the pullback
			\[
			\begin{tikzcd}
				\EM{S} \times_\C \EM{T} \ar[d] \ar[r] & \EM{T} \ar[d,"\U{T}"] \\
				\EM{S} \ar[r,"\U{S}"] & \C
			\end{tikzcd}
			\]
			whose objects are $\delta$-algebras and morphisms are thus maps in $\C$ which are simultaneously morphisms of $S$- and $T$-algebras.
		\end{defi}
		
		The following Proposition, relating $\delta$-{algebras} with the Eilenberg-Moore algebras for $\convpowS$, appears in~\cite[Lemma 14]{garner_vietoris_2020} and was proved in a more general context in~\cite[Proposition 3.7]{bohm_weak_2010}.
		
		\begin{prop}\label{prop:delta algebras isomorphic to algebras composite monad}
			Given a weak distributive law $\delta \colon TS \to ST$, the categories $\EM{\delta}$ of $\delta$-algebras and $\EM{\widetilde{ST}}$ of Eilenberg-Moore algebras of the composite monad $\widetilde{ST}$ (arising from $\delta$) are canonically isomorphic.
		\end{prop}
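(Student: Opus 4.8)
The proof follows the template of Beck's correspondence between $\delta$-algebras and algebras for a composite monad, adapted to the weak setting. Since idempotents split in $\C$ (otherwise $\widetilde{ST}$ is not even defined), Theorem~\ref{thm:correspondence weak distributive law with weak extensions-liftings}(ii) yields the weak lifting $\tilde S$ of $S$ to $\EM T$ associated with $\delta$, and by construction $\widetilde{ST}$ is the monad induced on $\C$ by the composite adjunction $\C \to \EM T \to \EM{\tilde S}$. The plan is: first, identify $\EM{\widetilde{ST}}$ with $\EM{\tilde S}$ — this is the standard fact that the Eilenberg--Moore category of the monad induced by such a composite adjunction is (isomorphic to) the top category, a consequence of the monadicity of $\U{\tilde S}$, the careful statement in the weak setting being part of~\cite{bohm_weak_2010}; second, exhibit mutually inverse functors between $\EM{\tilde S}$ and $\EM\delta$.

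For the second part, unwind the definitions: an object of $\EM{\tilde S}$ is a $T$-algebra $(X,t)$ together with a morphism $\beta \colon \tilde S(X,t) \to (X,t)$ of $\EM T$ obeying the $\tilde S$-algebra unit and associativity laws. Write $\tilde S(X,t) = (Y,b)$, with $SX \xrightarrow{\pi_{(X,t)}} Y \xrightarrow{\iota_{(X,t)}} SX$ the splitting of the idempotent $e_{(X,t)} = St \circ \delta_X \circ \eta^T_{SX}$ and $b = \pi_{(X,t)} \circ St \circ \delta_X \circ T\iota_{(X,t)}$ as in the proof of Theorem~\ref{thm:correspondence weak distributive law with weak extensions-liftings}. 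Define $G(X,t,\beta) = (X,t,s)$ with $s := \beta \circ \pi_{(X,t)} \colon SX \to X$. Conversely, for a $\delta$-algebra $(X,t,s)$, the pentagon together with the unit law of $(X,t)$ forces $s \circ e_{(X,t)} = s$, since $s \circ St \circ \delta_X \circ \eta^T_{SX} = t \circ Ts \circ \eta^T_{SX} = t \circ \eta^T_X \circ s = s$; hence $s$ factors through $\pi_{(X,t)}$, and we set $F(X,t,s) = (X,t,\bar s)$ with $\bar s := s \circ \iota_{(X,t)}$, so that $\bar s \circ \pi_{(X,t)} = s \circ e_{(X,t)} = s$. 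Both $F$ and $G$ act as the identity on underlying morphisms; the identities $\pi_{(X,t)} \circ \iota_{(X,t)} = \id{Y}$ and $s \circ e_{(X,t)} = s$ make them mutually inverse on objects, and a short chase with $\pi$ and $\iota$ shows that a $\C$-map is a morphism of $\delta$-algebras precisely when it is a morphism of the matching $\tilde S$-algebras, so, once we verify well-definedness, $F$ and $G$ are mutually inverse functors.

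It then remains to verify that $F$ and $G$ take values in the claimed categories. The $T$-algebra-morphism property of $\bar s$ is the chain $\bar s \circ b = s \circ (\iota_{(X,t)} \circ \pi_{(X,t)}) \circ St \circ \delta_X \circ T\iota_{(X,t)} = s \circ e_{(X,t)} \circ St \circ \delta_X \circ T\iota_{(X,t)} = s \circ St \circ \delta_X \circ T\iota_{(X,t)} = t \circ Ts \circ T\iota_{(X,t)} = t \circ T\bar s$; its $\tilde S$-algebra unit law reduces, via the triangle $\iota_{(X,t)} \circ \eta^{\tilde S}_{(X,t)} = \eta^S_X$ of~(\ref{eqn:weak lifting diagrams iota}), to the $S$-algebra unit law of $(X,s)$; symmetrically, the $S$-algebra unit law of $s = \beta \circ \pi_{(X,t)}$ reduces, via the triangle $\pi_{(X,t)} \circ \eta^S_X = \eta^{\tilde S}_{(X,t)}$ of~(\ref{eqn:weak lifting diagrams pi}), to the $\tilde S$-algebra unit law of $\beta$. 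The remaining obligations — the $S$-algebra, respectively $\tilde S$-algebra, \emph{multiplication} laws, and the \emph{pentagon} — are diagram chases that have to juggle the defining formula of $b$, the expression~(\ref{eqn:Garner 3.1}) of $\delta$ in terms of the splittings, the two coherence rectangles and two triangles of Definition~\ref{def:weak lifting}, the splitting identities $\iota_{(X,t)} \circ \pi_{(X,t)} = e_{(X,t)}$ and $\pi_{(X,t)} \circ \iota_{(X,t)} = \id{Y}$, and the multiplication law of $\beta$ (respectively of $s$).

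I expect this last step to be the main obstacle: the bookkeeping required to propagate the per-algebra idempotent splittings through $\mu^{\tilde S}$ and to reconcile the single natural transformation $\delta$ with the family $(\iota_{(X,t)}, \pi_{(X,t)}, b)$ via~(\ref{eqn:Garner 3.1}) and the naturality of $\iota$ and $\pi$. There should be no conceptual difficulty beyond this — it is exactly because only the $\eta^T$-triangle of Definition~\ref{def:distributive law} was dropped that each of these chases still closes up — and, as noted in the statement, the result is proved in full generality in~\cite[Proposition 3.7]{bohm_weak_2010}; the scheme above is the concrete form such an argument takes for a weak distributive law over a category with split idempotents.
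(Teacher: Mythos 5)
The paper does not prove this proposition at all: it is imported verbatim from the literature (Garner~\cite[Lemma 14]{garner_vietoris_2020}, and in greater generality B\"ohm~\cite[Proposition 3.7]{bohm_weak_2010}), so there is no in-paper argument to compare against. Your sketch is the standard one and the steps you actually carry out are correct: the pentagon plus the unit law of $(X,t)$ does give $s \circ e_{(X,t)} = s$, the assignments $s \mapsto \bar s = s \circ \iota_{(X,t)}$ and $\beta \mapsto \beta \circ \pi_{(X,t)}$ are mutually inverse by the splitting identities, and your verification that $\bar s$ is a $T$-algebra morphism and of the two unit laws is a correct use of the triangles in~(\ref{eqn:weak lifting diagrams iota}) and~(\ref{eqn:weak lifting diagrams pi}). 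The two points you leave open are genuinely the nontrivial ones, though: (a) the identification $\EM{\widetilde{ST}} \cong \EM{\tilde S}$ is not an automatic consequence of monadicity of $\U{\tilde S}$ alone --- composites of monadic functors need not be monadic, and this is precisely part of what B\"ohm's result supplies in the weak setting; and (b) the associativity laws and the pentagon are where the interaction between $\mu^{\tilde S}$, the per-object splittings, and formula~(\ref{eqn:Garner 3.1}) has to be checked. Since the paper itself defers the entire statement to the same references, deferring exactly these two points is a defensible stopping place, but be aware that your text is a proof outline rather than a complete proof; as written it establishes the object-level bijection and the unit laws, not the full isomorphism of categories.
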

		
		We can therefore use to our advantage Proposition~\ref{prop:delta algebras isomorphic to algebras composite monad} to individuate the algebraic theory presenting the monad $\widetilde{\P \mon S}$, which we called $\convpowS$: rather than studying the Eilenberg-Moore algebras for the quite complicated composite monad, we analyse the isomorphic category of $\delta$-algebras instead. In order to prove Theorem~\ref{thm:pres}, we show that $\EM{\delta}$ is isomorphic to the category $\Alg(\Sigma,E)$ of $(\Sigma,E)$-algebras (Theorem~\ref{thm:delta algebras isomorphic to (Sigma,E)-algebras}) and then use Proposition~\ref{prop:delta algebras isomorphic to algebras composite monad} to conclude.
		
		First we prove that every $\delta$-algebra gives rise to an algebra for the theory $(\Sigma,E)$, \ie a  complete semilattice and $S$-semimodule satisfying the equations in $E_\D$. For $\delta \colon \mon S \P \to \P \mon S$ given in~\eqref{eqn:delta for positive refinable semifields}, we have that a $\delta$-algebra is a set $X$ together with a $\mon S$-algebra structure $a \colon \mon S X \to X$ and a $\P$-algebra-structure $b \colon \pow X \to X$ such that pentagon
		\begin{equation}\label{eqn:distributivity pentagon for delta-algebras}
			\begin{tikzcd}
				\mon S \pow X \ar[rr,"\delta_X"] \ar[d,"\mon S b"'] & & \pow \mon S X \ar[d,"\pow a"] \\
				\mon S X \ar[dr,"a"'] & & \pow X \ar[dl,"b"] \\
				& X 
			\end{tikzcd}
		\end{equation}
		commutes. A morphism of $\delta$-algebras is a morphism of $\P$- and $\S$-algebras simultaneously by definition. Hence, by defining $+^a$, $\lambda \cdot^a$ and $0^a$ as in~\eqref{eqn:semimodule associated to S-algebra}, we have that $X$ is a $S$-left-semimodule; moreover, if we define for each set $I$
		\begin{equation}\label{eq:defSup}
			\Sup_{i \in I}^b x_i = b(\{x_i \mid i \in I \})
		\end{equation}
		then $X$ is also a complete semilattice. This means that $X$ satisfies all the equations in $E_\CSL$ and $E_\LSM$; moreover, every morphism of $\delta$-algebras preserves all the operations in $\Sigma$. The commutativity of pentagon~\eqref{eqn:distributivity pentagon for delta-algebras} will instead ensure that the equations in $E_\D$, that is $(D1)$ and $(D2)$ in Table~\ref{table:otheraxioms}, are satisfied, as we prove in the following. For $A \subseteq X$, we shall sometimes write $\Sup^b A$ for $\Sup_{x \in A}^b x$. 
		
		First we need a technical result that shows how joins and convex closures interact in $\delta$-algebras. 
		
		\begin{prop}\label{prop:sup A = sup convclos A in delta algebras}
			Let $(X,a \colon \mon S X \to X, b \colon \pow X \to X)$ be a $\delta$-algebra.
			For each $A \subseteq X$, 
			\[
			\Sup_{x \in A}^b x = \Sup_{x \in \convclos A a}^b x.
			\]	
		\end{prop}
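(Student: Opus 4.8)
The plan is to extract the identity directly from the commutativity of the distributivity pentagon~\eqref{eqn:distributivity pentagon for delta-algebras}, by evaluating its two legs at a single, carefully chosen element of $\mon S \pow X$, namely the Dirac function $\etaS_{\pow X}(A) = \Delta_A$ centred at the subset $A$. This element lies in $\mon S \pow X$ because it has finite (singleton) support, so the pentagon applies to it, and unwinding both legs will yield the two sides of the claimed equation almost for free.

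First I would compute the left-hand leg $b \mapsto a \circ \mon S (b)$ at $\Delta_A$. By naturality of $\etaS$ we have $\mon S (b)(\etaS_{\pow X}(A)) = \etaS_X(b(A)) = \Delta_{b(A)}$, and the unit law $a \circ \etaS_X = \id X$ of the $\mon S$-algebra $(X,a)$ then gives $a(\Delta_{b(A)}) = b(A)$, which by~\eqref{eq:defSup} is exactly $\Sup_{x \in A}^b x$. Next I would compute the right-hand leg $b \circ \pow a \circ \delta_X$ at $\Delta_A = \etaS_{\pow X}(A)$. The essential ingredient here is Lemma~\ref{lemma:delta of Dirac}, which identifies $\delta_X(\etaS_{\pow X}(A))$ with $\{\phi \in \mon S X \mid \supp \phi \subseteq A,\ \sum_{x \in X}\phi(x) = 1\}$; applying $\pow a$ turns this into $\{a(\phi) \mid \phi \in \mon S X,\ \supp \phi \subseteq A,\ \sum_{x \in X}\phi(x)=1\}$, which is precisely $\convclos A a$ by Definition~\ref{def:convex closure S-ALGEBRA}. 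Hence the right leg evaluates to $b(\convclos A a) = \Sup_{x \in \convclos A a}^b x$.

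Since the pentagon commutes, these two values coincide, and that is exactly the asserted equality. I do not anticipate any genuine obstacle: the only non-routine input is Lemma~\ref{lemma:delta of Dirac}, and everything else is bookkeeping with the $\mon S$- and $\P$-algebra laws (the empty-set case $A = \emptyset$, where both sides are $b(\emptyset)$, is subsumed by the same computation). It is worth noting that this argument uses nothing about $S$ beyond what makes $\delta$ a weak distributive law, since Lemma~\ref{lemma:delta of Dirac} itself holds under the minimal hypotheses of Theorem~\ref{thm:existence of delta}; cf.\ Remark~\ref{rem:convexity still present even if not positive semifield}.
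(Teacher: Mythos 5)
Your proposal is correct and follows exactly the paper's own proof: both evaluate the two legs of the distributivity pentagon at $\Delta_A = \etaS_{\pow X}(A)$, identify the left leg with $\Sup^b A$ via the unit law, and use Lemma~\ref{lemma:delta of Dirac} to identify the right leg with $\Sup^b \convclos A a$. Your remark that only the minimal hypotheses of Theorem~\ref{thm:existence of delta} are needed is a correct and welcome observation, consistent with Remark~\ref{rem:convexity still present even if not positive semifield}.
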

		\begin{proof}
			Let $\Phi = \etaS_{\pow X}(A) = \Delta_A \in \mon S \pow X$. We show that the two legs of the commutative diagram~\eqref{eqn:distributivity pentagon for delta-algebras}, that is $\S b (\Phi)$ and $b(\pow a (\delta_X(\Phi)))$, coincide with the two sides of the equation in the statement.
			Indeed we have
			\[
			a\bigl( \mon S b (\Phi) \bigr) = a\bigl( \mon S b (\Delta_A) \bigr) = a\bigl( \Delta_{\Sup^b A} \bigr) = \Sup^b A 
			\]
			while
			\[
			\delta_X(\Delta_A) = \{ \phi \in \mon S X \mid \supp \phi \subseteq A, \sum_{x \in X} \phi(x)=1 \}
			\]
			because of Lemma~\ref{lemma:delta of Dirac}, hence
			\[
			b \Bigl(\pow a \bigl( \delta_X(\Delta_A) \bigr)\Bigr) = b (\convclos A a) = \Sup^b \convclos A a
			\] 
			where recall that $\convclos A a = \{ a(\phi) \mid \phi \in \S X, \supp \phi \subseteq A, \sum\limits_{x \in X} \phi(x) = 1 \}$ from Definition~\ref{def:convex closure S-ALGEBRA}.
		\end{proof}

		We can then prove that every $\delta$-algebra satisfies the equations in $E_\D$.
		
		\begin{thm}\label{thm:distributivity property of delta-algebras}
			Let $(X,a \colon \mon S X \to X, b \colon \pow X \to X)$ be a $\delta$-algebra. Then for all $A,B \subseteq X$ and for all $\lambda \in S \setminus \{ 0_S\}$:
			\[
			\lambda \cdot^a \Sup_{x \in A}^b x = \Sup_{x \in A}^b \lambda \cdot^a x, \qquad \Sup_{x \in A}^b x +^a \Sup_{y \in B}^b y  = \Sup_{(x,y) \in A \times B}^b  x +^a  y.
			\]
		\end{thm}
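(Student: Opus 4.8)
The plan is to derive both identities by instantiating the commutative pentagon~\eqref{eqn:distributivity pentagon for delta-algebras} --- whose two legs are $a\circ\mon S b$ and $b\circ\pow a\circ\delta_X$ --- at suitably chosen elements of $\mon S\pow X$. For the scalar identity, fix $\lambda\in S\setminus\{0_S\}$ and take $\Phi\in\mon S\pow X$ with $\supp\Phi=\{A\}$ and $\Phi(A)=\lambda$. The left leg is immediate: $\mon S b(\Phi)=(\,b(A)\mapsto\lambda\,)=(\,\Sup^b_{x\in A}x\mapsto\lambda\,)$ by~\eqref{eq:defSup}, hence $a(\mon S b(\Phi))=\lambda\cdot^a\Sup^b_{x\in A}x$ by~\eqref{eqn:semimodule associated to S-algebra}. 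For the right leg I would first show $\delta_X(\Phi)=\{\phi\in\mon S X\mid\supp\phi\subseteq A,\ \sum_{x\in X}\phi(x)=\lambda\}$; this is a verbatim adaptation of the proof of Lemma~\ref{lemma:delta of Dirac}, positivity of $S$ forcing the witnessing $\psi\in\mon S(\ni_X)$ to be supported on pairs $(A,x)$. As $\lambda$ is invertible, $\phi\mapsto\lambda^{-1}\cdot^{\muS_X}\phi$ is a bijection identifying this set with $\lambda\cdot^{\muS_X}\{\phi'\mid\supp\phi'\subseteq A,\ \sum\phi'=1\}$; applying $\pow a$ (using that $a$ is a homomorphism of $S$-semimodules, being a morphism $(\mon S X,\muS_X)\to(X,a)$ in $\EM{\mon S}$) together with Definition~\ref{def:convex closure S-ALGEBRA} turns this into $\lambda\cdot^a\convclos Aa$, and one more application of $b$ gives $\Sup^b(\lambda\cdot^a\convclos Aa)$. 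Commutativity of the pentagon thus reads $\lambda\cdot^a\Sup^b_{x\in A}x=\Sup^b(\lambda\cdot^a\convclos Aa)$; since $\lambda\cdot^a\convclos Aa=\convclos{\lambda\cdot^aA}{a}$ (immediate from the semimodule axioms, writing $\lambda\cdot^aA:=\{\lambda\cdot^ax\mid x\in A\}$), Proposition~\ref{prop:sup A = sup convclos A in delta algebras} applied to $\lambda\cdot^aA$ gives $\Sup^b(\lambda\cdot^a\convclos Aa)=\Sup^b(\lambda\cdot^aA)=\Sup^b_{x\in A}\lambda\cdot^ax$, as required.

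The additive identity is proved in exactly the same way, now with $\Phi\in\mon S\pow X$ given by $\Phi=(A\mapsto 1_S,\,B\mapsto 1_S)$ when $A\neq B$. The left leg evaluates, via~\eqref{eq:defSup} and~\eqref{eqn:semimodule associated to S-algebra}, to $\Sup^b_{x\in A}x\ +^a\ \Sup^b_{y\in B}y$. For the right leg, a further adaptation of Lemma~\ref{lemma:delta of Dirac} gives, when $A\neq B$, $\delta_X(\Phi)=\{\phi_A+^{\muS_X}\phi_B\mid\supp\phi_A\subseteq A,\ \sum\phi_A=1,\ \supp\phi_B\subseteq B,\ \sum\phi_B=1\}$, so $\pow a(\delta_X(\Phi))=\convclos Aa+^a\convclos Ba$ (where $A+^aB:=\{x+^ay\mid x\in A,\ y\in B\}$) and $b(\pow a(\delta_X(\Phi)))=\Sup^b(\convclos Aa+^a\convclos Ba)$. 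The pentagon then yields $\Sup^bA+^a\Sup^bB=\Sup^b(\convclos Aa+^a\convclos Ba)$; a short bilinearity computation shows $\convclos Aa+^a\convclos Ba=\convclos{A+^aB}{a}$, so by Proposition~\ref{prop:sup A = sup convclos A in delta algebras} the right-hand side equals $\Sup^b(A+^aB)$, which equals $\Sup^b_{(x,y)\in A\times B}x+^ay$ by re-indexing along the surjection $A\times B\to A+^aB$ using axiom $(CI)$ of $E_\CSL$. The case $A=B$ is handled by a small separate computation: there $\Phi=(A\mapsto 1_S+1_S)$ with $1_S+1_S\neq 0_S$ by positivity, and the argument reduces to the scalar identity with $\lambda=1_S+1_S$. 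The degenerate cases $A=\emptyset$ or $B=\emptyset$ come out automatically, since then $\delta_X(\Phi)=\emptyset$ and both legs of the pentagon equal $b(\emptyset)=\bot$, which is also the empty join $\Sup^b_{(x,y)\in A\times B}x+^ay$.

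The only genuinely new work is the computation of $\delta_X$ on the non-Dirac inputs $\Phi$ above, but this is a routine variant of the computation already carried out for Lemma~\ref{lemma:delta of Dirac}, with positivity of $S$ again pinning down the support of the auxiliary $\psi\in\mon S(\ni_X)$. The two closure identities $\lambda\cdot^a\convclos Aa=\convclos{\lambda\cdot^aA}{a}$ and $\convclos Aa+^a\convclos Ba=\convclos{A+^aB}{a}$ are elementary consequences of the $S$-semimodule axioms, and everything else is assembled from the pentagon~\eqref{eqn:distributivity pentagon for delta-algebras}, Proposition~\ref{prop:sup A = sup convclos A in delta algebras}, and the definitions~\eqref{eqn:semimodule associated to S-algebra},~\eqref{eq:defSup} and Definition~\ref{def:convex closure S-ALGEBRA}.
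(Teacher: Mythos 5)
Your proof is correct, and its skeleton is the same as the paper's: both instantiate the pentagon~\eqref{eqn:distributivity pentagon for delta-algebras} at $\Phi=(A\mapsto\lambda)$ and $\Phi'=(A\mapsto 1,B\mapsto 1)$, evaluate the $a\circ\mon S b$ leg identically, and finish by discarding a convex closure under $\Sup^b$ via Proposition~\ref{prop:sup A = sup convclos A in delta algebras}. Where you genuinely diverge is in the evaluation of the other leg: the paper invokes the heavy characterisation $\delta_X(\Phi)=\convclos{\choice\Phi}{\muS_X}$ of Theorem~\ref{thm:delta for positive refinable semifields} together with Proposition~\ref{prop:image along a of convex closures}, whereas you compute $\delta_X(\Phi)$ directly from the defining formula~\eqref{eqn:def of delta} for these two special inputs (a routine extension of Lemma~\ref{lemma:delta of Dirac}, with positivity pinning down the support of $\psi$) and then push the result through $\pow a$ using the elementary closure identities $\lambda\cdot^a\convclos Aa=\convclos{\lambda\cdot^aA}{a}$ and $\convclos Aa+^a\convclos Ba=\convclos{A+^aB}{a}$. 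Your route is more self-contained and in particular does not depend on the laborious proof of Theorem~\ref{thm:delta for positive refinable semifields} (only on the definition of $\delta$ and on positivity/invertibility), at the cost of two ad hoc support computations; the paper's route is shorter on the page because it reuses machinery already established. Two further points in your favour: you explicitly treat the cases $A=\emptyset$, $B=\emptyset$ and $A=B$ (the latter genuinely needs a separate word, since $(A\mapsto 1,B\mapsto 1)$ collapses to $(A\mapsto 1+1)$ and one must still pass from $\Sup^b\{x+^ax\mid x\in A\}$ to $\Sup^b\{x+^ay\mid x,y\in A\}$ via a convex-closure argument using $(1+1)^{-1}$), whereas the paper leaves these implicit; and the final reindexing over $A\times B$ you mention is already built into the definition~\eqref{eq:defSup} of $\Sup^b$, so axiom $(CI)$ is not actually needed there.
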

		\begin{proof} 
			Let
			$
			\Phi = ( 
			A \mapsto \lambda) \in \S\P X
			$.
			We prove that the images along the two legs of pentagon~\eqref{eqn:distributivity pentagon for delta-algebras} of $\Phi$ coincide with the two sides of the first equation in the statement. 
			It holds that 
			\begin{align*}
				a(\mon S(b)(\Phi)) &= a(x \mapsto \sum_{U \in b^{-1}\{x\}} \Phi(U) ) \\ &= a(b(A) \mapsto \lambda) \\ &= a(\Sup_{x\in A}^b x \mapsto \lambda) & \text{Definition of $\Sup^b$ \eqref{eq:defSup}} \\
				&= \lambda \cdot^a \Sup_{x\in A}^b x & \text{Definition of $\lambda \cdot^a$ \eqref{eqn:semimodule associated to S-algebra}}
			\end{align*}

			On the other hand we have that 
			\begin{align*}
				b\Bigl(\pow a \bigl(\delta_X (\Phi)\bigr)\Bigr) &= \Sup^b \pow a \bigl( \convclos {\choice \Phi}{\muS_X} \bigr) & \text{Theorem~\ref{thm:delta for positive refinable semifields}}\\
				&= \Sup^b \convclos {\pow a \bigl( \choice \Phi \bigr)} {a} & \text{Proposition~\ref{prop:image along a of convex closures}}\\
				&= \Sup^b \pow a \bigl( \choice \Phi \bigr) & \text{Proposition~\ref{prop:sup A = sup convclos A in delta algebras}}
			\end{align*}
			
			Following the discussion after Theorem \ref{thm:weaklift}, $\choice{\Phi} = \{(x \mapsto \lambda) \mid x\in A \}$ if $\lambda \neq 0$. Therefore, if $\lambda \neq 0$, then $\pow a \bigl( \choice \Phi \bigr) = 	\{a(x \mapsto \lambda)  \, \mid \; x \in A\}$ which by \eqref{eqn:semimodule associated to S-algebra} is exactly $\{ \lambda \cdot^a x  \, \mid \; x \in A\}$.

			Therefore 
			\[	b\Bigl(\pow a \bigl(\delta_X (\Phi)\bigr)\Bigr) = 
			\Sup^b_{x\in A} \lambda \cdot^a x. \]
			We then conclude by using the commutativity of diagram~(\ref{eqn:distributivity pentagon for delta-algebras}). Using the function $\Phi' = (A \mapsto 1,\, B \mapsto 1) \in \S\P X$ and a similar argument, one shows the second equation as well.
		\end{proof}
		
		Vice versa, we want to prove that every algebra for the theory $(\Sigma,E)$ is also a $\delta$-algebra. To this end, let $(X,+,\lambda \cdot {},0_X,\Sup_I)$ be a $(\Sigma,E)$-algebra. Then $(X,+,\lambda \cdot {}, 0_X)$ is a $S$-left-semimodule, $(X,\Sup_I)$ is a complete sup-semilattice. By defining
		\[
		\begin{tikzcd}[row sep=0em]
			\mon S X \ar[r,"a"] & X \\
			f \ar[r,|->] & \sum\limits_{x \in \supp f} f(x) \cdot x
		\end{tikzcd}
		\qquad
		\begin{tikzcd}[row sep=0em]
			\pow X \ar[r,"b"] & X \\
			A \ar[r,|->] & \Sup\limits_{x \in A} x
		\end{tikzcd}
		\]
		we have that $(X,a) \in \EM{\mon S}$ and $(X,b) \in \EM\P$. We now have to check that pentagon~(\ref{eqn:distributivity pentagon for delta-algebras}) commutes. Given $A \subseteq X$, we define its convex closure $\convclos A{}$ in the usual way: 
		\[
		\convclos{A}{} =\left\{ \sum_{i=1}^n \lambda_i \cdot a_i \mid n \in \N,\, a_i \in A,\,  \sum_{i=1}^n \lambda_i =1 \right\} \subseteq X.
		\]
		Also, we shall denote by $\Sup A$ the element $b(A) \in X$ for any $A \subseteq X$. Then pentagon~(\ref{eqn:distributivity pentagon for delta-algebras}) commutes if and only if for all $\Phi \in \mon S \pow X$:
		\[
		\sum_{A \in \supp \Phi} \!\! \Phi(A) \cdot \Sup A = \Sup \convclos{ \Bigl\{ \sum_{A \in \supp \Phi}\!\! \Phi(A) \cdot u(A) \mid u \colon \supp \Phi \to X \ldotp \forall A  \ldotp u(A) \in A \Bigr\} } {}
		\]
		where the left-hand side is $a(\S(b)(\Phi))$ and the right-hand side is $b(\P(a)(\delta_X(\Phi)))$.
		
		In the following lemma we prove that if $X$ is a $(\Sigma,E)$-algebra and $A \subseteq X$, then $\Sup A = \Sup \convclos A {}$. (This is a generalisation of the derivation in Table~\ref{tab:derivation} for arbitrary joins.) Using this fact and the axioms $(D1)$, $(D2)$ of Table~\ref{table:otheraxioms} we will have shown the commutativity of pentagon~(\ref{eqn:distributivity pentagon for delta-algebras}).
		
		\begin{lem}
			Let $(X,+,\lambda \cdot {},0_X,\Sup_I)$ be a $(\Sigma,E)$-algebra. Then for all $A \subseteq X$
			\[
			\Sup A = \Sup \convclos A {}.
			\]
		\end{lem}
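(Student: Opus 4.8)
The plan is to work with the partial order $x \le y :\iff x \sqcup y = y$ that every complete semilattice carries, for which $\Sup$ is the least upper bound and which satisfies the usual monotonicity and infinitary absorption laws, all routinely derivable from $E_{\CSL}$; the statement then amounts to the two inequalities $\Sup A \le \Sup \convclos A {}$ and $\Sup \convclos A {} \le \Sup A$. The first is immediate from $A \subseteq \convclos A {}$ and monotonicity of $\Sup$ (the degenerate case $A = \emptyset$, where $\convclos{\emptyset}{}=\emptyset$ and both sides equal $\bot$, is handled separately and trivially). The whole content is therefore the reverse inequality $\Sup \convclos A {} \le \Sup A$, which is exactly the infinitary version of the derivation in Table~\ref{tab:derivation}: it suffices to show that every element of $\convclos A {}$ lies below $\Sup A$.

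To that end I would fix $c = \sum_{i=1}^{n}\lambda_i \cdot a_i \in \convclos A {}$, with $a_i \in A$ and $\sum_{i=1}^n \lambda_i = 1$, and first show $\lambda_i \cdot a_i \le \lambda_i \cdot \Sup A$ for every $i$. If $\lambda_i \ne 0_S$ this follows from axiom $(D1)$ of Table~\ref{table:otheraxioms}, since $\lambda_i \cdot \Sup_{x \in A} x = \Sup_{x \in A} \lambda_i \cdot x$ and $a_i \in A$; if $\lambda_i = 0_S$ then both sides equal $0_X$ by the annihilation axiom $(An_s)$. Adding these inequalities via monotonicity of $+$ (derivable from the binary instance of $(D2)$), and then using $(D_s2)$ and $(U_s)$, I obtain
\[
c = \sum_{i=1}^{n}\lambda_i \cdot a_i \;\le\; \sum_{i=1}^{n}\lambda_i \cdot \Sup A \;=\; \Bigl(\sum_{i=1}^{n}\lambda_i\Bigr)\cdot \Sup A \;=\; 1 \cdot \Sup A \;=\; \Sup A,
\]
so $c \le \Sup A$ for every $c \in \convclos A {}$.

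Finally I would conclude: when $A \ne \emptyset$ the set $\convclos A {}$ is nonempty (it contains $1\cdot a = a$ for $a \in A$), so the absorption law ``a $\Sup$ over a family all of whose members are below $t$ is itself below $t$'' gives $\Sup \convclos A {} \le \Sup A$; together with the first inequality and antisymmetry this yields $\Sup A = \Sup \convclos A {}$. I do not anticipate a genuine obstacle, since the argument is purely equational and does not invoke the semifield hypothesis at all; the only care needed is the bookkeeping of the degenerate cases ($A = \emptyset$, and the zero coefficients, where $(An_s)$ rather than $(D1)$ must be used) and making explicit which elementary complete-semilattice facts --- monotonicity of $\Sup$, $\Sup_{x \in \{x_0\}} x = x_0$, and the absorption law above --- are being taken from $E_{\CSL}$.
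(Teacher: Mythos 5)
Your proof is correct and follows essentially the same route as the paper's: both reduce the statement to showing $\Sup \convclos A {} \le \Sup A$ via the equational chain $1\cdot \Sup A = (\sum_i\lambda_i)\cdot\Sup A = \sum_i \lambda_i\cdot\Sup A$ together with the distributivity axioms $(D1)$, $(D2)$, the only cosmetic difference being that the paper first establishes the equality $\Sup A = \Sup\{\sum_i \lambda_i\cdot a_i \mid (a_1,\dots,a_n)\in A^n\}$ and reads the inequality off it, whereas you derive $\lambda_i\cdot a_i \le \lambda_i\cdot\Sup A$ termwise and sum. If anything, your write-up is slightly more scrupulous about the degenerate cases ($A=\emptyset$ and $\lambda_i=0_S$, where $(An_s)$ must replace $(D1)$), which the paper passes over silently.
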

		\begin{proof}
			Let $\lambda_1,\dots,\lambda_n \in S$ with $\sum_{i=1}^n \lambda_i = 1$. Then
			\[
			\Sup A = 1 \cdot \Sup A = (\sum_{i=1}^n \lambda_i) \cdot \Sup A = \sum_{i=1}^n \bigl( \lambda_i \cdot \Sup A \bigr) = \Sup \Bigl\{ \sum_{i=1}^n \lambda_i \cdot a_i \mid (a_1,\dots,a_n) \in A^n \Bigr\}
			\]
			because of the axioms of $S$-semimodule and the distributivity $(D1)$, $(D2)$ in Table~\ref{table:otheraxioms}.
			Let $\le$ be the partial order determined by the complete semilattice structure of $X$. Then we have that for all $n \in \N$, for all $\lambda_1,\dots,\lambda_n \in S$ such that $\sum_{i=1}^n \lambda_i = 1$ and for all $(a_1,\dots,a_n) \in A^n$:
			\[
			\sum_{i=1}^n \lambda_i \cdot a_i \le \Sup \Bigl\{ \sum_{i=1}^n \lambda_i \cdot b_i \mid (b_1,\dots,b_n) \in A^n \Bigr\} = \Sup A
			\]
			hence
			\[
			\Sup \convclos A {} = \Sup \Bigl\{ \sum_{i=1}^n \lambda_i \cdot a_i \mid n \in \N,\, a_i \in A, \sum_{i=1}^n \lambda_i = 1 \Bigr\} \le \Sup A
			\]
			while the other inequality is trivial because $A \subseteq \convclos A {}$.	\end{proof}
		
		Finally, again a morphism of $(\Sigma,E)$-algebras is a morphism respecting all the operations of $\Sigma$, which means of $\Sigma_{\LSM}$ (thus is a morphism of $\EM{\mon S}$) and of $\Sigma_{\CSL}$ (thus is a morphism of $\EM{\P}$) at the same time. We have therefore proved the following theorem.
		
		\begin{thm}\label{thm:delta algebras isomorphic to (Sigma,E)-algebras}
			The category $\EM{\delta}$ of $\delta$-algebras is isomorphic to the category  $\Alg(\Sigma,E)$ of $(\Sigma,E)$-algebras.
		\end{thm}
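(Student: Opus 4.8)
The plan is to produce a pair of functors $\EM{\delta}\to\Alg(\Sigma,E)$ and $\Alg(\Sigma,E)\to\EM{\delta}$ that are the identity on underlying sets and on morphisms and are mutually inverse; since a morphism on either side is exactly a function preserving all of the relevant structure, once the object-level constructions are set up functoriality is automatic. For the first functor, given a $\delta$-algebra $(X,a\colon\mon S X\to X,\,b\colon\P X\to X)$ I would equip $X$ with the semimodule operations $+^a,\lambda\cdot^a,0^a$ of~\eqref{eqn:semimodule associated to S-algebra} and the joins $\Sup^b_I$ of~\eqref{eq:defSup}. Then $E_{\LSM}$ holds because $(X,a)$ is an $\mon S$-algebra, $E_{\CSL}$ holds because $(X,b)$ is a $\P$-algebra, and commutativity of the pentagon~\eqref{eqn:distributivity pentagon for delta-algebras} gives precisely the axioms $(D1)$, $(D2)$ of $E_{\D}$ --- this is exactly Theorem~\ref{thm:distributivity property of delta-algebras}. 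A $\delta$-algebra morphism, being at once a morphism of $\P$- and of $\mon S$-algebras, preserves every operation of $\Sigma=\Sigma_{\CSL}\cup\Sigma_{\LSM}$, so we do obtain a functor into $\Alg(\Sigma,E)$.

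Conversely, from a $(\Sigma,E)$-algebra $(X,+,\lambda\cdot{},0_X,\Sup_I)$ I would define $a\colon\mon S X\to X$ by $a(f)=\sum_{x\in\supp f}f(x)\cdot x$ and $b\colon\P X\to X$ by $b(A)=\Sup_{x\in A}x$; the $E_{\LSM}$ axioms make $(X,a)\in\EM{\mon S}$ and the $E_{\CSL}$ axioms make $(X,b)\in\EM{\P}$. The only substantive step is the pentagon: unfolding its two legs on an arbitrary $\Phi\in\mon S\P X$, using Theorem~\ref{thm:delta for positive refinable semifields} to rewrite $\delta_X(\Phi)$ as $\convclos{\choice\Phi}{\muS_X}$ and Proposition~\ref{prop:image along a of convex closures} to push $\P a$ through the convex closure, the desired equality reduces to $\Sup A=\Sup\convclos A{}$ for every $A\subseteq X$ --- the arbitrary-arity version of the derivation in Table~\ref{tab:derivation}, which is the Lemma just proved and rests on $(D1)$, $(D2)$. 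Hence $(X,a,b)$ is a $\delta$-algebra, and morphisms are handled as before.

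Finally, the correspondences $\EM{\mon S}\cong\Alg(\LSM)$ and $\EM{\P}\cong\Alg(\CSL)$ recalled in Section~\ref{subsec:algebraic presentation} are isomorphisms of categories that are the identity on morphisms, so the two functors above are mutually inverse on objects and the identity on morphisms; together they form an isomorphism $\EM{\delta}\cong\Alg(\Sigma,E)$. I expect the main obstacle to be the ``pentagon $\Longleftrightarrow E_{\D}$'' equivalence in both directions, but this has already been isolated: one direction is Theorem~\ref{thm:distributivity property of delta-algebras}, and the other rests on the preceding Lemma together with Propositions~\ref{prop:image along a of convex closures} and~\ref{prop:sup A = sup convclos A in delta algebras}; what is left for the proof itself is bookkeeping.
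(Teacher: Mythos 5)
Your proposal is correct and follows essentially the same route as the paper: one direction is exactly Theorem~\ref{thm:distributivity property of delta-algebras} (pentagon $\Rightarrow$ $E_{\D}$, via Proposition~\ref{prop:sup A = sup convclos A in delta algebras}), and the converse reduces the pentagon, via Theorem~\ref{thm:delta for positive refinable semifields} and Proposition~\ref{prop:image along a of convex closures}, to the identity $\Sup A = \Sup \convclos{A}{}$ established in the preceding Lemma together with $(D1)$, $(D2)$. The remaining observations about morphisms and mutual inverseness match the paper's (largely implicit) bookkeeping.
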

		
		Since $\EM{\delta}$ is canonically isomorphic to $\EM{\convpowS}$ by Proposition~\ref{prop:delta algebras isomorphic to algebras composite monad}, we have proved Theorem~\ref{thm:pres}.
		We conclude this section with a couple of examples.
		
		\begin{exa}[\cf~{\cite[\sect 4.3]{garner_vietoris_2020}}] 
			Let $S=\Bool=\{0,1\}$, where addition is $\vee$ and multiplication is $\wedge$. Then a $\Bool$-semimodule is a commutative monoid $(X,\ast,e)$ where $\ast$ is idempotent (see Example~\ref{ex:S=Bool, convex subsets are directed}). A $\delta$-algebra, in this case, is therefore a complete lattice $X$ with a commutative, associative, unitary and idempotent operation $\ast$ such that 
			\[
			{\Sup A} \ast {\Sup B} = \Sup \{ a \ast b \mid a \in A,\, b \in B \}.
			\]
			Thus $\delta$-algebras are exactly commutative, unitary and idempotent quantales.
		\end{exa}

		\begin{exa}\label{ex:intervals}
			Let $S$ be $\Rp$ and let $[a,b]$ with $a,b\in \Rp$ denote the set $\{x\in \Rp \mid a\leq x \leq b\}$ and $[a,\infty)$ the set $\{x\in \Rp \mid a\leq x \}$.
			For $1=\{x\}$, $\convpowS(1) = \{\emptyset\} \cup \{[a,b] \,\mid \, a,b\in \Rp\} \cup \{[a,+\infty) \, \mid \, a\in \Rp\}$. The $\convpowS$-algebra $\mu^{\convpowS}_1 \colon \convpowS \convpowS 1 \to \convpowS 1$ induces a $\delta$-algebra where the structure of complete lattice is given as\footnote{For the sake of brevity, we are ignoring the case where some $A_i=\emptyset$.}
			\[\Sup_{i\in I}A_i = \begin{cases}
				[\inf_{i\in I}, a_i, \sup_{i\in I} b_i] & \text{if, for all }i\in I, \; A_i=[a_i, b_i] \wedge \sup_{i\in I} b_i\in \Rp\\
				[\inf_{i\in I}a_i, \infty)  & \text{otherwise}
			\end{cases} \\
			\]
			The $\Rp$-semimodule is as expected, \eg $[a_1,b_1]+[a_2,b_2]=[a_1+a_2,b_1+b_2]$.
		\end{exa}
		
		\subsection{Proof of Theorem~\ref{thm:Kleisli is CPPO enriched}} \label{subsec:proof of Kleisli category is CPPO enriched} 
		
		We need to show that the Kleisli category $\convpowS$ is $\Cppo$-enriched and satisfies the left-strictness condition. The algebraic presentation of the composite monad $\convpowS$ provided by Theorem~\ref{thm:pres} is going to be useful. Let us begin with an analysis of some basic properties of $\Kl{\convpowS}$.

		First of all, since $(\convpowS X,\mu^{\convpowS}_X)$ is a $\convpowS$-algebra, we have that $\convpowS X$ has a structure of semimodule and complete semilattice given, via the canonical isomorphism $\EM{\convpowS} \to \EM{\delta}$, by:
		\begin{gather*}
			\mathcal A + \mathcal B = \{ \phi + \psi \mid \phi \in \mathcal A,\, \psi \in \mathcal B \} \\
			\lambda \cdot \mathcal A = \{ \lambda \cdot \phi \mid \phi \in \mathcal A \} \\
			\Sup_{i \in I} \mathcal A_i = \convclos { \bigcup_{i \in I} \mathcal A_i } {}
		\end{gather*}
		for all $\mathcal A, \mathcal B \in \convpowS X$ and $\lambda \in S$, $\lambda \ne 0$.
		
		This means that each homset $\Kl{\convpowS}(X,Y)$ of functions $f \colon X \to \convpowS(Y)$, which is partially ordered point-wise, inherits the structure of complete semilattice from $\convpowS Y$, in particular its bottom element $\bot_{X,Y}$ is the constant function mapping $x$ to $\emptyset$ for all $x \in X$. Given a function $g \colon Y \to \convpowS Z$, its Kleisli extension $\Sharp g \colon \convpowS Y \to \convpowS Z$ is given by $\mu^\convpowS_Z \circ \convpowS g$, which for all $\mathcal A \in \convpowS Y$ computes:
		\[
		\Sharp g (\mathcal A) = \Sup_{\phi \in \mathcal A} \, \sum_{y \in \supp \phi} \phi(y) \cdot g(y).
		\]
		Composition of a function $f \colon X \to \convpowS Y$ with $g \colon Y \to \convpowS Z$ is therefore given as
		\[
		(g \circ f) (x) = \Sharp g (f(x))= \Sup_{\phi \in f(x)} \sum_{y \in \supp \phi } \phi(y) \cdot g(y) \in \convpowS Z.
		\]
		
		We can now prove that $\Kl{\convpowS}$ is actually enriched over \emph{directed}-complete partial orders and that it satisfies the left-strictness condition, which implies Theorem~\ref{thm:Kleisli is CPPO enriched}.
		
		\begin{thm}
			The category $\Kl{\convpowS}$ is enriched over the category of directed-complete partial orders and satisfies the left-strictness condition:
			\[
			\bot_{Y,Z} \circ f = \bot_{X,Z}
			\]
			for all $f \colon X \to \convpowS Y$ and $Z$ set.
		\end{thm}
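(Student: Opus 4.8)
The plan is to establish the two enrichment ingredients separately — that each homset of $\Kl{\convpowS}$ carries a directed-complete partial order, and that Kleisli composition is Scott-continuous — and then to read off left-strictness from the explicit composition formula. First I would record that $\convpowS Y$, ordered by inclusion, is a dcpo with least element $\emptyset$: if $(\mathcal A_i)_i$ is a \emph{directed} family of convex subsets of $\S Y$, then $\bigcup_i\mathcal A_i$ is again convex, since a convex combination $\muS_Y(\Psi)$ with $\supp\Psi\subseteq\bigcup_i\mathcal A_i$ involves only finitely many functions, and directedness places all of them inside a single $\mathcal A_j$, which is convex; hence $\Sup_i\mathcal A_i=\bigcup_i\mathcal A_i$ for directed families. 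It follows that each homset $\Kl{\convpowS}(X,Y)$ — the set of functions $X\to\convpowS Y$, ordered pointwise — is a dcpo in which directed joins are computed pointwise, with least element the constant function $\bot_{X,Y}$ sending every $x$ to $\emptyset$.

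The substantial part is the continuity of composition. I would invoke the standard domain-theoretic fact that a map out of a product of dcpos is Scott-continuous precisely when it is monotone and Scott-continuous in each variable separately; thus it suffices to show that, for $f\colon X\to\convpowS Y$ and $g\colon Y\to\convpowS Z$, the assignments $g\mapsto g\circ f$ and $f\mapsto g\circ f$ are monotone and preserve directed joins. Monotonicity is immediate from $(g\circ f)(x)=\Sup_{\phi\in f(x)}\sum_{y\in\supp\phi}\phi(y)\cdot g(y)$, since on $\convpowS Z$ both the semimodule operations (described by \eqref{eq:lambdaP} and \eqref{eq:sumP}) and $\Sup$ are monotone. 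For directed joins the crucial lemma is that, for a \emph{directed} family $(T_i)_i$ of subsets of a semimodule, $\convclos{\bigcup_i T_i}{}=\bigcup_i\convclos{T_i}{}$ — again because a convex combination of points of $\bigcup_i T_i$ meets only finitely many of the $T_i$ and hence lies in some $\convclos{T_j}{}$. Combined with the observation that $\lambda\cdot(-)$ and binary $+$ on $\convpowS Z$ are computed by taking images and pointwise sums of member functions — so that they, and therefore the finite composite $\sum_{y\in\supp\phi}\phi(y)\cdot(-)$, send directed unions to directed unions — this lets me commute a directed supremum of $f$'s, respectively of $g$'s, through the inner finite sum and the outer $\Sup$ (which on a directed family is just union), yielding $g\circ(\Sup_i f_i)=\Sup_i(g\circ f_i)$ and $(\Sup_i g_i)\circ f=\Sup_i(g_i\circ f)$.

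Left-strictness is then a short computation: taking $g=\bot_{Y,Z}$ in the composition formula, each inner summand $\phi(y)\cdot\emptyset$ has $\phi(y)\neq 0$ because $y\in\supp\phi$, so $\phi(y)\cdot\emptyset=\emptyset$ by \eqref{eq:lambdaP}; as $\emptyset$ absorbs the semimodule sum on $\convpowS Z$ by \eqref{eq:sumP}, every inner sum collapses to $\emptyset$, whence $(\bot_{Y,Z}\circ f)(x)=\Sup_{\phi\in f(x)}\emptyset=\emptyset$, that is $\bot_{Y,Z}\circ f=\bot_{X,Z}$. Since a dcpo with a least element is in particular an $\omega$-cpo with bottom, this argument simultaneously proves the $\Cppo$-enrichment asserted in Theorem~\ref{thm:Kleisli is CPPO enriched}. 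I expect the main obstacle to be precisely the continuity of composition, and within it the bookkeeping in the double supremum $\Sup_{\phi\in f(x)}\sum_{y\in\supp\phi}\phi(y)\cdot g(y)$: controlling the convex closure concealed in $\Sup$ while pushing directed joins past the semimodule operations. The algebraic presentation of $\convpowS$ from Theorem~\ref{thm:pres} is what makes these manipulations tractable, since it lets one compute directly with the semimodule-and-complete-semilattice structure on $\convpowS Z$ instead of unwinding $\delta$, $\alpha$ and $\mu^{\convpowS}$.
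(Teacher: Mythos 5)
Your proposal is correct and reaches the same conclusions, but the mechanism you use for continuity of composition is genuinely different from the paper's. The paper stays inside the algebraic theory $(\Sigma,E)$: it keeps $\Sup$ as the complete-semilattice join (convex closure of the union), pushes the directed join of the $g_i$ through scalar multiplication via $(D1)$ and through the finite sum via the $n$-ary distributivity \eqref{eq:n-ary sums of joins}, arriving at a join indexed by tuples $(i_y)\in I^{\supp\phi}$, and only then invokes directedness to collapse the tuple-indexed join to a single-indexed one; for precomposition it does no computation at all, observing that $\Sharp{g}$ is a morphism of $\convpowS$-algebras and hence preserves \emph{arbitrary} suprema. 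You instead argue order-theoretically: directed suprema in $\convpowS Z$ are plain unions (a directed union of convex sets is convex), $\convclos{\bigcup_i T_i}{}=\bigcup_i\convclos{T_i}{}$ for directed families, and $\lambda\cdot(-)$ and $+$ carry directed unions to directed unions. Both proofs spend their directedness hypothesis at exactly the same place, namely where the finite sum over $\supp\phi$ meets the infinite index set $I$, so the two arguments are equivalent in substance; the paper's route buys the stronger fact that precomposition preserves arbitrary joins, while yours is more self-contained and makes explicit the dcpo structure (directed sups are unions) that the paper leaves implicit. One caveat, which applies to your left-strictness computation exactly as it does to the paper's: the step $\sum_{y\in\supp\phi}\phi(y)\cdot\emptyset=\emptyset$ presupposes $\supp\phi\neq\emptyset$. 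If $\zero\in f(x)$, the corresponding inner sum is the empty sum in the semimodule $\convpowS Z$, which by \eqref{eq:sumP} is $\{0^{\muS_Z}\}$ rather than $\emptyset$, so the displayed chain of equalities silently excludes functions with empty support; this degenerate case deserves at least a remark.
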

		\begin{proof}
			In this proof we will use a generalisation of (D2) in Table~\ref{table:otheraxioms} for $n$-ary sums of arbitrary joins in $(\Sigma,E)$-algebras, given by:
			\begin{equation}\label{eq:n-ary sums of joins}
				\sum_{j=1}^n \, \Sup_{i_j \in I_j} x_{i_j} = \Sup_{(i_1,\dots,i_n) \in \prod_{j=1}^n I_j} \, \sum_{j=1}^n x_{i_j}.
			\end{equation}
			Let $f \colon X \to \convpowS Y$ and $\{ g_i \mid i \in I \}$ a directed subset of $\Kl{\convpowS}(Y,Z)$. Then:
			\begin{align*}
				\Bigl( (\Sup_{i \in I} g_i ) \circ f \Bigr) (x) &= \Sup_{\phi \in f(x)} \sum_{y \in \supp \phi} \phi(y) \cdot (\Sup_{i \in I} g_i) (y)\displaybreak[0] & \text{definition of composition} \\
				&= \Sup_{\phi \in f(x)} \sum_{y \in \supp \phi} \phi(y) \cdot \Sup_{i \in I} (g_i(y)) & \Sup_{i \in I} g_i \text{ is pointwise} \\
				&= \Sup_{\phi \in f(x)} \sum_{y \in \supp \phi} \Sup_{i \in I} \phi(y) \cdot g_i(y) & (D1) \displaybreak[0] \\
				&= \Sup_{\phi \in f(x)} \, \Sup_{(i_y) \in I^{\supp \phi}} \, \sum_{y \in \supp \phi} \phi(y) \cdot g_{i_y}(y) & \eqref{eq:n-ary sums of joins}  \displaybreak[0] \\
				&= \Sup_{\phi \in f(x)} \Sup_{i \in I} \sum_{y \in \supp \phi} \phi(y) \cdot  g_i(y) & \{g_i \mid i \in I\} \text{ is directed} \displaybreak[0] \\
				&= \Sup_{i \in I} \Sup_{\phi \in f(x)} \sum_{y \in \supp \phi} \phi(y) \cdot g_i(y) & \text{joins commute} \\ 
				&= \Sup_{i \in I} (g_i \circ f) (x) & \text{definition of composition.}
			\end{align*}

			Given, instead, an \emph{arbitrary} subset $\{f_i \mid i \in I\}$ of $\Kl{\convpowS} (X,Y)$ and a $g \colon Y \to \convpowS Z$, we have
			\begin{align*}
				\Bigl( g \circ (\Sup_{i \in I} f_i) \Bigr) (x) &= \Sharp g \Bigl(\Sup_{i \in I} f_i(x)\Bigr) & \text{definition of composition} \\
				&= \Sup_{i \in I} \Sharp g (f_i(x)) & \text{ $\Sharp g$ is a morphism of $\convpowS$-algebras } \\
				&= \Sup_{i \in I} (g \circ f) (x) & \text{definition of composition}
			\end{align*}
			($\Sharp g$ is a morphism of $\convpowS$-algebras as it is given by the universal property of the free algebra $\convpowS Y$, hence it preserves arbitrary suprema). Finally, 
			\[
			(\bot_{Y,Z} \circ f)(x) = \Sup_{\phi \in f(x)} \sum_{y \in \supp \phi} \phi(y) \cdot \emptyset = \Sup_{\phi \in f(x)} \sum_{y \in \supp \phi} \emptyset = \Sup_{\phi \in f(x)} \emptyset = \emptyset
			\]
			where $\phi(y) \cdot \emptyset = \emptyset$ because $\phi(y) \ne 0$ when $y \in \supp \phi$.
		\end{proof}

		\section{Finite Joins and Finitely Generated Convex Sets}\label{sec:finite joins and finitely generated convex sets}

		In the previous section we have used the algebraic theory $(\Sigma, E)$ to show that the category $\Kl{\convpowS}$ is enriched over $\Cppo$. However, in various situations it is convenient to consider semilattices which are not complete, but have just finitary joins. In this section we show that by restricting the theory  $(\Sigma, E)$ to finite joins we obtain a presentation for the monad of \emph{finitely generated} convex sets. We will introduce several monads and maps between them, which all appear in the following diagram, to be used as a road map through this section: \[
		\begin{tikzcd}
			T_{\Sigma,E}  \ar[r,"\phi"]  &       \convpowS \\
			T_{\Sigma',E'} \ar[u,"\theta"] \ar[r,"\bb{\cdot}'"'] \ar[ru,"\bb{\cdot}"description] & \convpowfS  \ar[u,"\iota"']
		\end{tikzcd}
		\]
		Here we recall the definition of monad map.
		\begin{defi}
			Let $(S,\eta^S,\mu^S)$ and $(T,\eta^T,\mu^T)$ be monads on a category $\C$. A monad map $\lambda \colon T \to S$ is a natural transformation such that
			\[
			\begin{tikzcd}
				\id\C \ar[r,"\eta^T"] \ar[dr,"\eta^S"'] & T \ar[d,"\lambda"] & T^2 \ar[l,"\mu^T"'] \ar[d,"\lambda \lambda"] \\
				& S & S^2 \ar[l,"\mu^S"]
			\end{tikzcd}
			\]
			commutes, where $\lambda \lambda$ is $\lambda S \circ T \lambda = S\lambda \circ \lambda T$.
		\end{defi}

		The algebraic theory $(\Sigma,E)$ described in Theorem~\ref{thm:pres} determines a monad $T_{\Sigma,E}\colon \Set \to \Set$ where, for any set $X$, $T_{\Sigma,E}(X)$ is the set of all $\Sigma$-terms with variables in $X$ quotiented by the equations in $E$. Recall that a $\Sigma$-term with variables in $X$ is defined inductively as:
		\begin{itemize}
			\item every variable $x$ is a $\Sigma$-term,
			\item if $o$ is an operation in $\Sigma$ with arity $\kappa$ and $t_1,\dots,t_\kappa$ are $\Sigma$-terms, then $o(t_1,\dots,t_\kappa)$ is a $\Sigma$-term.
		\end{itemize}
		If $f \colon X \to Y$ is a function, $T_{\Sigma,E}(f) \colon T_{\Sigma,E}(X) \to T_{\Sigma,E} (Y)$ sends a term $t$ in $t[f(x) / x]$, where every variable $x$ is substituted by its image $f(x)$. The unit $\eta^T$ is simply defined as $\eta^T_X (x) = x$, while the multiplication is defined by induction as:
		\[
		\begin{tikzcd}[row sep=0em]
			T_{\Sigma,E} \bigl( T_{\Sigma,E} (X) \bigr) \ar[r,"\mu^T_X"] & T_{\Sigma,E} (X) \\
			t \in T_{\Sigma,E} (X) \ar[r,|->] & t  \\
			o_\kappa(t_1,\dots,t_\kappa) \ar[r,|->] & o_\kappa (\mu^T_X(t_1),\dots,\mu^T_X(t_\kappa))
		\end{tikzcd}
		\]
		This construction is standard for \emph{finitary} algebraic theories, where every operation in $\Sigma$ has finite arity. The fact that it makes sense also for our case, where we have an operation for every cardinal, is ensured by the fact that our $(\Sigma,E)$ is \emph{tractable} in the sense of~\cite[Definition 1.5.44]{manes_algebraic_1976}, because we proved in Theorem~\ref{thm:pres} that it presents a monad on $\Set$, namely $\convpowS$. Tractability ensures that the \emph{class} of $\Sigma$-terms, once quotiented by $E$, is forced to be a set. 
		
		Now, the category of Eilenberg-Moore algebras for the monad $T_{\Sigma,E}$ is, in fact, isomorphic to $\Alg{(\Sigma,E)}$, hence also to $\EM{\convpowS}$, via a functor $F$ such that
		\[
		\begin{tikzcd}
			\EM{\convpowS} \ar[d,"\U {\convpowS}"'] \ar[r,"F"] &       \EM{T_{\Sigma,E}}  \ar[d,"\U{T}"] \\
			\Set \ar[r,"\id{\Set}"] & \Set
		\end{tikzcd}
		\]
		commutes. This generates an isomorphism of monads $\phi \colon T_{\Sigma,E} \to \convpowS$ where, for all sets $X$, $\phi_X = F(\mu^{\convpowS}_X) \circ T_{\Sigma,E} (\eta^{\convpowS}_X)$, thanks to the following general result:
		
		\begin{thm}
			Let $(S,\eta^S,\mu^S)$ and $(T,\eta^T,\mu^T)$ be monads on a category $\C$. Suppose $\EM S$ and $\EM T$ are isomorphic via a functor $F \colon \EM S \to \EM T$ such that $\U T F = \U S$. Then $T$ and $S$ are isomorphic as monads, that is, there is an isomorphism of monads $\phi \colon T \to S$. Specifically, $\phi_X$ is given as the unique morphism of $T$-algebras (hence, a morphism in $\C$) granted by the universal property of $(TX,\mu^T_X)$ as the free $T$-algebra on $X$, induced by $\eta^S_X$:
			\[
			\begin{tikzcd}
				TTX \ar[d,"\mu^T_X"'] \ar[r,"T(\phi_X)"] & TSX \ar[d,"F(\mu^S_X)"] \\
				TX \ar[r,dotted,"\exists ! \phi_X"] & SX \\
				X \ar[u,"\eta^T_X"] \ar[ur,"\eta^S_X"']
			\end{tikzcd}
			\qquad
			\phi_X = F(\mu^S_X) \circ T(\eta^S_X).
			\]
		\end{thm}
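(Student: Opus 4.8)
The plan is to prove this by a direct computation resting on one structural observation: since $\U T F = \U S$, the functor $F$ fixes underlying objects and underlying arrows, so it carries every $S$-algebra morphism to a $T$-algebra morphism with the \emph{same} underlying $\C$-arrow (and $F^{-1}$ does the reverse). First I would fix notation: for an object $X$ of $\C$, write $\sigma_X \colon TSX \to SX$ for the structure map of the $T$-algebra $F(SX,\mu^S_X)$ — this is the arrow labelled $F(\mu^S_X)$ in the diagram of the statement — and set $\phi_X = \sigma_X \circ T(\eta^S_X)$. By the universal property of the free $T$-algebra $(TX,\mu^T_X)$ on $X$, this $\phi_X$ is precisely the unique $T$-algebra morphism $(TX,\mu^T_X) \to F(SX,\mu^S_X)$ extending $\eta^S_X$, so in particular $\phi_X \circ \eta^T_X = \eta^S_X$, which is the promised characterisation and also the unit axiom for a monad map.

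Next I would check naturality and the multiplication axiom. For $f \colon X \to Y$, naturality of $\mu^S$ makes $Sf$ an $S$-algebra morphism $(SX,\mu^S_X) \to (SY,\mu^S_Y)$, hence $F$ makes it a $T$-algebra morphism $(SX,\sigma_X) \to (SY,\sigma_Y)$; combining $Sf \circ \sigma_X = \sigma_Y \circ T(Sf)$ with naturality of $\eta^S$ yields $Sf \circ \phi_X = \phi_Y \circ Tf$ by a short chase. For the multiplication axiom, associativity of $S$ exhibits $\mu^S_X$ as an $S$-algebra morphism $(SSX,\mu^S_{SX}) \to (SX,\mu^S_X)$, so $F$ makes it a $T$-algebra morphism $(SSX,\sigma_{SX}) \to (SX,\sigma_X)$; then
\[
\mu^S_X \circ \phi_{SX} = \mu^S_X \circ \sigma_{SX} \circ T(\eta^S_{SX}) = \sigma_X \circ T(\mu^S_X) \circ T(\eta^S_{SX}) = \sigma_X \circ T(\mu^S_X \circ \eta^S_{SX}) = \sigma_X,
\]
and since $\phi_X$ is a $T$-algebra morphism out of the free algebra, $\phi_X \circ \mu^T_X = \sigma_X \circ T(\phi_X) = \mu^S_X \circ \phi_{SX} \circ T(\phi_X)$, i.e.\ $\phi \circ \mu^T = \mu^S \circ \phi\phi$. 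Thus $\phi \colon T \to S$ is a monad map.

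Finally, for invertibility I would apply the same recipe to $F^{-1} \colon \EM T \to \EM S$ (which satisfies $\U S F^{-1} = \U T$), obtaining a monad map $\psi \colon S \to T$ with $\psi_X$ the unique $S$-algebra morphism $(SX,\mu^S_X) \to F^{-1}(TX,\mu^T_X)$ extending $\eta^T_X$. Since $F^{-1}$ turns the $T$-algebra morphism $\phi_X$ into an $S$-algebra morphism with the same underlying map, and $F$ turns the $S$-algebra morphism $\psi_X$ into a $T$-algebra morphism, the composites $\psi_X \circ \phi_X$ and $\phi_X \circ \psi_X$ are respectively a $T$-algebra endomorphism of $(TX,\mu^T_X)$ and an $S$-algebra endomorphism of $(SX,\mu^S_X)$; as both fix the generators — using $\phi_X \circ \eta^T_X = \eta^S_X$ and $\psi_X \circ \eta^S_X = \eta^T_X$ — the universal property of free algebras forces $\psi_X \circ \phi_X = \id{TX}$ and $\phi_X \circ \psi_X = \id{SX}$, so $\phi$ is an isomorphism. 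I expect the only real obstacle to be the bookkeeping in these ``transport through $F$'' steps — correctly tracking which $\mu$, which structure map, and which of $S$- versus $T$-algebra morphism is in play — since that single device drives both the multiplication axiom and the invertibility argument, and everything else is routine diagram chasing.
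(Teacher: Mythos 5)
Your proof is correct and complete; the paper itself states this theorem without proof (it is invoked as a known general fact), so your argument supplies the missing standard derivation rather than diverging from one given in the text. You correctly resolved the one genuinely delicate point of the statement: $F(\mu^S_X)$ in the diagram must be read as the structure map $\sigma_X \colon TSX \to SX$ of the $T$-algebra $F(SX,\mu^S_X)$, not as the $F$-image of the morphism $\mu^S_X$ (which would have the wrong domain). The three computations you perform — that $\phi_X = \sigma_X \circ T(\eta^S_X)$ is the unique $T$-algebra map out of the free algebra extending $\eta^S_X$, that transporting $Sf$ and $\mu^S_X$ through $F$ gives naturality and the identity $\mu^S_X \circ \phi_{SX} = \sigma_X$ (whence the multiplication axiom via $\phi_X \circ \mu^T_X = \sigma_X \circ T(\phi_X)$), and that the two composites $\psi_X \circ \phi_X$ and $\phi_X \circ \psi_X$ are algebra endomorphisms of free algebras fixing generators, hence identities — all check out, and the invertibility step correctly uses $FF^{-1} = \id{\EM T}$ and $F^{-1}F = \id{\EM S}$ to place each composite in the right category of algebras.
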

		
		Direct calculations show that our $\phi_X \colon T_{\Sigma,E}(X) \to \convpowS(X) = \ConvPow {\S X} {\muS_X}$ acts as follows:
		\begin{align*}
			\phi_X(x) &= \{ \Delta_x \} \quad \text{for $x \in X$} \\
			\phi_X(0) &= \{ 0 \colon X \to S\} \\
			\phi_X(t_1 + t_2) &= \phi_X(t_1) +^{\alpha_{\muS_X}} \phi_X(t_2) \\
			\phi_X (\lambda \cdot t) &= \lambda \cdot^{\alpha_{\muS_X}} \phi_X (t) \\
			\phi_X(\Sup_I \{t_i \mid i \in I \}) &= \convclos {\bigcup_{i \in I} \phi(t_i)} {\muS_X}
		\end{align*}
		where $\phi_X(t_1) +^{\alpha_{\muS_X}} \phi_X(t_2)$ is the result of adding up in the $\S$-algebra $(\ConvPow {\S X} {\muS_X}, \alpha_{\muS_X})$ (recall from~\eqref{eq:alphaa} the definition of the $\S$-algebra structure map $\alpha_a$ on $\ConvPow X a$ for a given $\S$-algebra $(X,a)$), seen as a $S$-semimodule, the convex subsets $\phi_X(t_1)$ and $\phi_X(t_2)$ of $\S X$, as explained in~\eqref{eq:sumP}. Similarly $\lambda \cdot^{\alpha_{\muS_X}} \phi_X (t)$ is the scalar-product in $\ConvPow {\S X} {\muS_X}$. Hence, for $\lambda \ne 0_S$:
		\begin{align*}
			\phi_X(t_1 + t_2) &= \Bigl\{ \muS_X \Bigl(  
			\begin{tikzcd}[row sep=0em,ampersand replacement=\&]
				f_1 \ar[r,|->] \& 1 \\
				f_2 \ar[r,|->] \& 1
			\end{tikzcd}
			\Bigr) \mid f_1 \in \phi_X (t_1), \, f_2 \in \phi_X (t_2) \Bigr\} \\
			\phi_X (\lambda \cdot t) &= \{ \lambda \cdot f \mid f \in \phi_X (t) \}.
		\end{align*}

		We now consider the algebraic theory $(\Sigma', E')$ obtained by restricting $(\Sigma,E)$ to finitary joins. More precisely, we fix
		\[
		\Sigma' = \Sigma_{\SL}\cup \Sigma_{\LSM} \qquad E'= E_{\SL} \cup E_{\LSM} \cup E_{\D'}
		\]
		where $(\Sigma_{\SL}, E_{\SL})$ is the algebraic theory for semilattices, $(\Sigma_{\LSM},E_{\LSM})$ is the one for $S$-semimodules, and $E_{\D'}$ is the set consisting of the four distributivity axioms illustrated in the bottom of Table \ref{tab:axiomsinitial}.
		It is clear, then, that if we restrict the action of $\phi_X$ to all those terms involving only finite suprema, we obtain a function translating $\Sigma'$-terms into convex subsets.
		\begin{prop}\label{prop:injectivemonadmap}
			Let $T_{\Sigma',E'}(X)$ be the set of $\Sigma'$-terms with variables in $X$ quotiented by $E'$. Let $\bb{\cdot}_X \colon T_{\Sigma',E'}(X) \to \convpowS(X)$ be the function defined as
			\[
			\begin{array}{rcl}
				\bb{x} &=& \{\Delta_x\} \text{ for }x\in X\\
				\bb{0} &=& \{0^{\muS_X}\} \\
				\bb{\bot} &=& \emptyset \\
			\end{array}\qquad
			\begin{array}{rcl}
				\bb{\lambda \cdot t} &=& \begin{cases}
					\{\lambda \cdot^{\muS_X} f \, \mid \; f \in \bb{t}\} & \text{if }\lambda \neq 0\\
					\{0^{\muS_X}\}  & \text{otherwise}
				\end{cases} \\
				\bb{t_1 + t_2} &=& \{f_1 +^{\muS_X} f_2 \, \mid \, f_1 \in \bb{t_1}, \; f_2 \in \bb{t_2} \} \\
				\bb{t_1 \sqcup t_2} &=& \convclos{\bb{t_1}\cup \bb{t_2}}{{\muS_X}} \\
			\end{array}
			\]
			Let $\bb{\cdot} \colon T_{\Sigma',E'} \to \convpowS$ be the family $\{\bb{\cdot}_X\}_{X \in |\Set|}$.
			Then $\bb{\cdot}\colon T_{\Sigma',E'} \to \convpowS$ is a map of monads and, moreover, each  $\bb{\cdot}_X \colon T_{\Sigma',E'}(X) \to \convpowS(X)$ is injective.
		\end{prop}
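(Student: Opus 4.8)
The plan is to first recognise $\bb{\cdot}$ as the canonical homomorphism induced by a restricted algebra structure on $\convpowS$, which settles the monad-map part, and then to prove injectivity by reducing $\Sigma'$-terms to a normal form and showing that this normal form is determined by the convex set it denotes, using a finite-arity strengthening of the convexity derivation in Table~\ref{tab:derivation}.

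For the monad map, recall from Theorem~\ref{thm:pres} (equivalently Theorem~\ref{thm:delta algebras isomorphic to (Sigma,E)-algebras}) that the free algebra $\convpowS X = \ConvPow{\S X}{\muS_X}$ is a $(\Sigma, E)$-algebra, carrying the $S$-semimodule and complete-semilattice operations described in \sect~\ref{subsec:proof of Kleisli category is CPPO enriched}. Retaining the semimodule operations and only the \emph{finite} joins turns it into a $(\Sigma', E')$-algebra: $E_{\SL}$ and $E_{\LSM}$ hold trivially, while the axioms $E_{\D'}$ at the bottom of Table~\ref{tab:axiomsinitial} are exactly $(D1)$ and $(D2)$ of Table~\ref{table:otheraxioms} instantiated at index sets of cardinality $0$, $1$ and $2$. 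This structure is natural in $X$ (each $\convpowS f$ is a homomorphism for it), so the unique $(\Sigma',E')$-homomorphisms $T_{\Sigma',E'}(X) \to \convpowS(X)$ extending $x \mapsto \{\Delta_x\}$ assemble into a natural transformation; it satisfies the monad-map axioms by the usual freeness argument, since both composites $T_{\Sigma',E'}^2 \to \convpowS$ are $(\Sigma',E')$-homomorphisms agreeing on generators. Computing the interpretations of the operations in the target shows that this homomorphism is precisely the $\bb{\cdot}_X$ of the statement; in particular $\bb{\cdot}_X$ is well defined.

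For injectivity, write $t_\phi = \sum_{x \in \supp \phi} \phi(x)\cdot x$ for $\phi \in \S X$, so that $\bb{t_\phi}_X = \{\phi\}$, and write $s =_{E'} t$ for equality in $T_{\Sigma',E'}(X)$. A structural induction on $\Sigma'$-terms --- using $E_{\SL}$ to flatten nested joins and absorb $\bot$, and $E_{\LSM} \cup E_{\D'}$ to distribute $+$ and $\lambda \cdot {}$ over joins --- shows that every term is $=_{E'}$ to one in \emph{normal form} $\Sup_{\phi \in G} t_\phi$ with $G \subseteq \S X$ finite ($G = \emptyset$ giving $\bot$). As $\bb{\cdot}_X$ sends $\sqcup$ to the convex closure of the union, $\bot$ to $\emptyset$, and convex closure is idempotent, we get $\bb{\Sup_{\phi\in G} t_\phi}_X = \convclos{G}{\muS_X}$. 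Hence if $\bb{t}_X = \bb{t'}_X$ then $t$ and $t'$ have normal forms whose generating sets $G$, $H$ satisfy $\convclos{G}{\muS_X} = \convclos{H}{\muS_X}$, and it remains to show this forces $\Sup_{\phi\in G} t_\phi =_{E'} \Sup_{\psi\in H} t_\psi$.

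It is enough to prove the \emph{finitary convexity lemma}: for finite $G \subseteq \S X$ and $\phi \in \convclos{G}{\muS_X}$ one has $\Sup_{\chi\in G} t_\chi =_{E'} t_\phi \sqcup \Sup_{\chi\in G} t_\chi$; indeed, iterating it over $G \setminus H$ and $H \setminus G$ gives $\Sup_{\chi \in G} t_\chi =_{E'} \Sup_{\chi \in G\cup H} t_\chi =_{E'} \Sup_{\chi \in H} t_\chi$. Writing $\phi = \sum_{i=1}^n \lambda_i \phi_i$ with $\phi_i \in G$ and $\sum_i \lambda_i = 1$, and noting $t_\phi =_{E_{\LSM}} \sum_i \lambda_i t_{\phi_i}$, one decomposes this $n$-ary convex combination into nested binary ones: if $\lambda_1 + \dots + \lambda_{n-1} = 0$ then positivity forces $\phi = \phi_n \in G$; otherwise the semifield structure lets one write $\phi = (\lambda_1+\dots+\lambda_{n-1}) \cdot \bigl( \sum_{i<n} (\lambda_i (\lambda_1+\dots+\lambda_{n-1})^{-1}) \phi_i \bigr) + \lambda_n \cdot \phi_n$, whose inner coefficients again sum to $1$. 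Induction on $n$, with the binary case supplied verbatim by the derivation of Table~\ref{tab:derivation} (which uses only axioms in $E'$), then yields the lemma and finishes the proof. I expect this lemma to be the only real obstacle: passing from the two-variable derivation of Table~\ref{tab:derivation} to arbitrary finite convex combinations is precisely where the positive-semifield hypothesis is genuinely used (both invertibility of nonzero scalars and positivity, to treat the degenerate cases of a vanishing partial sum, a singleton combination, and $\lambda = 0_S$), everything else being routine bookkeeping.
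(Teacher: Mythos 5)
Your proof is correct, and for the injectivity half it takes a genuinely different --- and more self-contained --- route than the paper. The paper's own argument is three lines: it factors $\bb{\cdot}$ as $\phi \circ \theta$, where $\theta_X \colon T_{\Sigma',E'}(X) \to T_{\Sigma,E}(X)$ is the evident map induced by reading finite joins as instances of $\Sup_I$; this factorisation immediately makes $\bb{\cdot}$ a monad map (your freeness argument for the monad-map axioms is essentially the same computation, just not routed through $T_{\Sigma,E}$), and injectivity of $\bb{\cdot}_X$ is then deduced from the isomorphism $\phi_X$ of Theorem~\ref{thm:pres} together with the assertion that $\theta_X$ is injective. That assertion is a conservativity claim --- two $\Sigma'$-terms provably equal in the infinitary theory $E$ must already be provably equal in $E'$ --- which the paper treats as obvious and does not argue. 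Your normal-form reduction to $\Sup_{\phi\in G} t_\phi$ together with the finitary convexity lemma (the $n$-ary extension of Table~\ref{tab:derivation}, with positivity disposing of vanishing partial sums and semifield inverses driving the recursive binary splitting) supplies exactly that missing content, as a direct syntactic completeness proof of $(\Sigma',E')$ for the model $\convpowS(X)$. The trade-off is clear: the paper's route is shorter and leans on the already-established Theorem~\ref{thm:pres}, while yours is longer but makes the key derivation explicit and identifies precisely where positivity and invertibility enter. One small caution: the rewriting $\phi = \sigma\cdot\bigl(\sum_{i<n}(\lambda_i\sigma^{-1})\phi_i\bigr) + \lambda_n\phi_n$ with $\sigma=\lambda_1+\dots+\lambda_{n-1}$ tacitly uses $\sigma\lambda_i\sigma^{-1}=\lambda_i$, i.e.\ commutativity of $S$; the paper makes the same tacit assumption elsewhere (see the remark preceding Lemma~\ref{lemma:generalised distributivity}), so this is consistent with its conventions rather than a gap in your argument.
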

		\begin{proof}
			The obvious restriction $\theta_X \colon T_{\Sigma',E'}(X) \to T_{\Sigma,E}(X) $ induces a monad map $\theta \colon T_{\Sigma',E'} \to T_{\Sigma,E}$. Observing that $\bb{\cdot}$ is exactly $ \phi \circ \theta$ is enough to conclude that  $\bb{\cdot}$ is a monad map. Each $\bb{\cdot}_X \colon T_{\Sigma',E'}(X) \to \convpowS(X)$ is injective since $\theta_X $ is injective and $\phi_X$ is an isomorphism.
		\end{proof}

		We say that a set $\mathcal A \in \convpowS(X)$ is \emph{finitely generated} if there exists a finite set $\mathcal B \subseteq \mon{S}(X) $ such that $\convclos{\mathcal B}{}= \mathcal A$. 
		We write $\convpowfS(X)$ for the set of all $\mathcal A \in \convpowS(X)$ that are finitely generated.
		
		\begin{prop}\label{prop:convpowfS is a monad}
			The assignment $X \mapsto \convpowfS(X)$ gives rise to a monad $\convpowfS \colon \Set \to \Set$ where the action on functions, the unit and the multiplication are defined as for $\convpowS$.
		\end{prop}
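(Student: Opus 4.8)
The plan is to exhibit $\convpowfS$ as a \emph{submonad} of $\convpowS$. Since $\convpowS$ is a monad by Theorem~\ref{thm:composite monad} and $\convpowfS X\subseteq\convpowS X$ for every set $X$ by definition, it is enough to verify: (1) $\convpowS$ restricts to a subfunctor on finitely generated convex sets, i.e.\ $\convpowS(f)$ maps $\convpowfS X$ into $\convpowfS Y$ for every $f\colon X\to Y$; (2) $\eta^{\convpowS}_X$ factors through $\convpowfS X$; (3) $\mu^{\convpowS}_X$ restricts to a map $\convpowfS\convpowfS X\to\convpowfS X$. Granted these, $\convpowfS$ with the restricted data is a monad, the inclusion $\convpowfS\hookrightarrow\convpowS$ is a monad map, and all monad laws are inherited from $\convpowS$; in particular the action on functions, the unit and the multiplication are literally those of $\convpowS$ cut down to the finitely generated convex subsets.

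For (1), let $f\colon X\to Y$ and $\mathcal A=\convclos{\mathcal B}{}\in\convpowfS X$ with $\mathcal B\subseteq\S X$ finite. The map $\S(f)\colon(\S X,\muS_X)\to(\S Y,\muS_Y)$ is a morphism of $\S$-algebras (the unique one extending $\etaS_Y\circ f$), and morphisms of $\S$-algebras send convex closures onto convex closures — this is Proposition~\ref{prop:image along a of convex closures}, whose proof works verbatim for an arbitrary $\S$-algebra morphism. Since $\convpowS(f)(\mathcal A)=\S(f)[\mathcal A]$ by~\eqref{composite endo}, we get $\convpowS(f)(\mathcal A)=\convclos{\S(f)[\mathcal B]}{}$, which is finitely generated because $\S(f)[\mathcal B]$ is finite. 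Claim (2) is immediate: $\eta^{\convpowS}_X(x)=\{\Delta_x\}$ is the convex closure of the finite set $\{\Delta_x\}$.

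Claim (3) is the heart of the argument. Write $\mathscr A=\convclos{\{\Omega_1,\dots,\Omega_m\}}{}\in\convpowfS\convpowfS X$ with each $\Omega_i\in\S\convpowfS X$, and view $\mathscr A$ inside $\convpowS\convpowS X$ along the injective $\S$-algebra morphism $\S(j)\colon\S\convpowfS X\to\S\convpowS X$ induced by the inclusion $j\colon\convpowfS X\hookrightarrow\convpowS X$. By~\eqref{composite unit mult},
\[
\mu^{\convpowS}_X(\mathscr A)=\bigcup_{\Omega\in\mathscr A}\alpha_{\muS_X}\bigl(\S(j)(\Omega)\bigr)=\convclos{\;\bigcup_{i=1}^m\alpha_{\muS_X}\bigl(\S(j)(\Omega_i)\bigr)\;}{}\,.
\]
For the second equality, $\mu^{\convpowS}_X(\mathscr A)$ is convex by Theorem~\ref{thm:composite monad}, it contains every $\alpha_{\muS_X}(\S(j)(\Omega_i))$, and for a general $\Omega\in\convclos{\{\Omega_1,\dots,\Omega_m\}}{}$ the fact that $\alpha_{\muS_X}$ is an $\S$-algebra structure map (hence preserves convex combinations) shows that $\alpha_{\muS_X}(\S(j)(\Omega))$ is a convex combination of the sets $\alpha_{\muS_X}(\S(j)(\Omega_i))$ and thus lies in the convex closure of their union. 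It then suffices to see that each $\alpha_{\muS_X}(\S(j)(\Omega_i))$ is finitely generated: for then $\mu^{\convpowS}_X(\mathscr A)$ is the convex closure of a finite union of finitely generated convex sets, hence equals $\convclos{\bigcup_l\mathcal B_l}{}$ for the corresponding finite generating sets $\mathcal B_l$, and is finitely generated. And indeed, unfolding~\eqref{eq:alphaa}, \eqref{eq:lambdaP} and~\eqref{eq:sumP}, $\alpha_{\muS_X}(\S(j)(\Omega_i))$ is the convex combination, in the $\S$-semimodule $\ConvPow{\S X}{\muS_X}$, of the finitely many finitely generated convex sets in $\supp\Omega_i$ with the coefficients $\Omega_i$ assigns to them; since scalar multiplication by a nonzero element and finite Minkowski sums of convex sets both commute with convex closure (a direct consequence of the distributivity axioms $(D_s1)$ and $(D_s2)$ for $\S$-semimodules), this combination is again the convex closure of a finite set.

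The main obstacle is exactly this last verification: one must carry the explicit descriptions of the composite multiplication and of the structure map $\alpha_{\muS_X}$ through, checking along the way that the semimodule distributivity axioms let every convex closure that appears be rewritten as the convex closure of a \emph{finite} set. A more economical route is also available: Proposition~\ref{prop:injectivemonadmap} provides a pointwise injective monad map $\bb{\cdot}\colon T_{\Sigma',E'}\to\convpowS$, and a short structural induction on $\Sigma'$-terms — using the displayed formulas for $\bb{\cdot}$, together with the observation that every finitely generated convex set is $\bb{t}$ for a suitable $t$ — identifies its image at each $X$ with $\convpowfS X$; the image of a pointwise injective monad map is automatically closed under the target's unit and multiplication, which again makes $\convpowfS$ a submonad of $\convpowS$. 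Either way, once (1)–(3) are in place the monad axioms for $\convpowfS$ follow by restriction from those of $\convpowS$.
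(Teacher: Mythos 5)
Your proof is correct and follows the same skeleton as the paper's (Appendix~\ref{app6}): restrict the functor action, the unit and the multiplication, with the multiplication handled by (i) reducing $\mu^{\convpowS}_X(\mathscr A)$ to the convex closure of a \emph{finite} union $\bigcup_i\alpha_{\muS_X}(\Omega_i)$ over a finite generating set of $\mathscr A$, (ii) showing each $\alpha_{\muS_X}(\Omega_i)$ is finitely generated, and (iii) using that the convex closure of a finite union of finitely generated convex sets is finitely generated. Where you differ is in how (i) and (ii) are discharged: the paper carries out explicit element-wise computations with $\choice{\cdot}$, constructing the witnesses $\Xi_l$ and $\Psi'$ by hand and invoking Lemma~\ref{lemma:generalised distributivity}, whereas you argue structurally — for (i) that $\mu^{\convpowS}_X(\mathscr A)$ is convex and that $\alpha_{\muS_X}$, being an $\S$-algebra structure map, sends convex combinations of the $\Omega_i$ into the convex closure of $\bigcup_i\alpha_{\muS_X}(\Omega_i)$; for (ii) that scalar multiplication and Minkowski sum in $\ConvPow{\S X}{\muS_X}$ commute with convex closure. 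This is cleaner and avoids the combinatorics, at the price of leaving a few facts asserted rather than proved (the commutation of Minkowski sums with convex closure, the identity $\convclos{\convclos A{}\cup\convclos B{}}{}=\convclos{A\cup B}{}$, and the extension of Proposition~\ref{prop:image along a of convex closures} to arbitrary $\S$-algebra morphisms); all of these are true and are exactly the lemmas the paper proves explicitly. Two small points: the coefficients of $\Omega_i$ need not sum to $1$, so $\alpha_{\muS_X}(\Omega_i)$ is a general semimodule linear combination rather than a ``convex combination'' — your argument is unaffected since it only uses closure under scalars and finite sums — and your alternative route via the image of $\bb{\cdot}$ is only a sketch (one must still verify that the pointwise image is a subfunctor closed under $\mu^{\convpowS}$, which needs that $T_{\Sigma',E'}$ preserves surjections), so the direct route should be regarded as the actual proof.
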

		To prove Proposition~\ref{prop:convpowfS is a monad} we only have to show that: for all functions $f \colon X \to Y$, $\convpowS(f) \colon \convpowS(X) \to \convpowS(Y)$ restricts and corestricts to $\convpowfS (X) \to \convpowfS (Y)$; the unit $\eta^\convpowS_X$ can be corestricted to $\convpowfS (X)$, and finally the multiplication $\mu^\convpowS_X$ restricts and corestricts to $\convpowfS \convpowfS X \to \convpowfS X$. These facts are illustrated in Appendix \ref{app6}.

		Next we are going to use Proposition~\ref{prop:injectivemonadmap} to give a presentation to the monad $\convpowfS$.
		
		\begin{thm}\label{thm:convpowfS is presented by (Sigma', E')}
			The monads $T_{\Sigma',E'}$ and $\convpowfS$ are isomorphic. Therefore $(\Sigma',E')$ is a presentation for the monad $\convpowfS$.
		\end{thm}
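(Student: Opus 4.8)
By Proposition~\ref{prop:injectivemonadmap} we already have an injective monad map $\bb{\cdot}\colon T_{\Sigma',E'}\to\convpowS$, and by Proposition~\ref{prop:convpowfS is a monad} the assignment $X\mapsto\convpowfS(X)$ is a submonad of $\convpowS$. So the whole theorem reduces to showing that, for every set $X$, the image of $\bb{\cdot}_X\colon T_{\Sigma',E'}(X)\to\convpowS(X)$ is exactly $\convpowfS(X)$: then $\bb{\cdot}$ corestricts to a componentwise bijective monad map $T_{\Sigma',E'}\to\convpowfS$, hence a monad isomorphism, and the presentation claim follows because $\EM{T_{\Sigma',E'}}\cong\Alg(\Sigma',E')$ (the standard correspondence between a finitary algebraic theory and its free-algebra monad, exactly as used earlier in the section for $(\Sigma,E)$) and $\EM{T_{\Sigma',E'}}\cong\EM{\convpowfS}$ via the monad isomorphism just obtained.

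\textbf{The image lies in $\convpowfS(X)$.} I would prove by induction on a $\Sigma'$-term $t$ with variables in $X$ that $\bb{t}_X$ is finitely generated. The leaves $t=x$, $t=0$, $t=\bot$ are immediate, since $\{\Delta_x\}$ and $\{0^{\muS_X}\}$ are singletons (hence convex, as $1\cdot v=v$ makes $\convclos{\{v\}}{}=\{v\}$) and $\emptyset=\convclos{\emptyset}{}$. For $t=\lambda\cdot t'$ with $\bb{t'}=\convclos{\mathcal B}{}$, $\mathcal B$ finite: if $\lambda\neq 0$ then $\bb{t}=\{\lambda\cdot f\mid f\in\convclos{\mathcal B}{}\}=\convclos{\{\lambda\cdot b\mid b\in\mathcal B\}}{}$, because scalar multiplication commutes with convex combinations ($\lambda\cdot\sum_i\mu_i b_i=\sum_i\mu_i(\lambda\cdot b_i)$); if $\lambda=0$ it is a singleton. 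For $t=t_1+t_2$ with $\bb{t_i}=\convclos{\mathcal B_i}{}$, I claim $\bb{t}=\convclos{\{b_1+b_2\mid b_1\in\mathcal B_1,\,b_2\in\mathcal B_2\}}{}$: the inclusion $\subseteq$ expands $(\sum_i\lambda_i b^1_i)+(\sum_j\mu_j b^2_j)$ using $\sum_i\lambda_i=\sum_j\mu_j=1$ and distributivity into $\sum_{i,j}\lambda_i\mu_j(b^1_i+b^2_j)$, a convex combination; the inclusion $\supseteq$ rewrites a convex combination $\sum_{i,j}\nu_{ij}(b^1_i+b^2_j)$ as $f_1+f_2$ with $f_1=\sum_i(\sum_j\nu_{ij})b^1_i\in\convclos{\mathcal B_1}{}$ and $f_2=\sum_j(\sum_i\nu_{ij})b^2_j\in\convclos{\mathcal B_2}{}$. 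For $t=t_1\sqcup t_2$ with $\bb{t_i}=\convclos{\mathcal B_i}{}$ we get $\bb{t}=\convclos{\convclos{\mathcal B_1}{}\cup\convclos{\mathcal B_2}{}}{}=\convclos{\mathcal B_1\cup\mathcal B_2}{}$, using that $\convclos{\cdot}{}$ is a closure operator (it is the $\pi$-part of the idempotent split analysed in the proof of Theorem~\ref{thm:weaklift}). In each case the generating set is finite, so $\bb{t}_X\in\convpowfS(X)$.

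\textbf{Surjectivity onto $\convpowfS(X)$.} Given $\mathcal A\in\convpowfS(X)$, pick a finite $\mathcal B=\{b_1,\dots,b_k\}\subseteq\S X$ with $\convclos{\mathcal B}{}=\mathcal A$. Each $b_i$ is represented by the $\Sigma_{\LSM}$-term $s_i:=\sum_{x\in\supp b_i}b_i(x)\cdot x$ (and $s_i:=0$ when $\supp b_i=\emptyset$); unwinding the definitions of $+^{\muS_X}$ and $\lambda\cdot^{\muS_X}$ on $\S X$ gives $\bb{s_i}_X=\{b_i\}$. Then for $t:=s_1\sqcup\cdots\sqcup s_k$ (or $t:=\bot$ if $k=0$), using $\bb{t_1\sqcup t_2}=\convclos{\bb{t_1}\cup\bb{t_2}}{}$ and idempotency of convex closure, $\bb{t}_X=\convclos{\{b_1,\dots,b_k\}}{}=\mathcal A$. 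Hence $\bb{\cdot}_X$ is a bijection onto $\convpowfS(X)$, which completes the argument. I expect the only mildly delicate point to be the inductive verification for the $+$ case above, where one has to invoke commutativity of the semiring $S$ and the semimodule/distributivity axioms to turn bilinear combinations of generators into genuine convex combinations of their sums; everything else is bookkeeping, since the substantial work — the characterisation of $\delta$, the weak lifting, the presentation of $\convpowS$, and the fact that $\convpowfS$ is a submonad — is already in place.
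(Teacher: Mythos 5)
Your proposal is correct and follows essentially the same route as the paper: factor $\bb{\cdot}_X$ through $\convpowfS(X)$ (you spell out the term induction that the paper leaves as "easily checked by induction"), inherit injectivity from Proposition~\ref{prop:injectivemonadmap}, and prove surjectivity by representing each generator $b_i$ as a semimodule term and joining them with $\sqcup$ — exactly the paper's argument. The only differences are presentational: you make explicit the closure identity $\convclos{\convclos{\mathcal B_1}{}\cup\convclos{\mathcal B_2}{}}{}=\convclos{\mathcal B_1\cup\mathcal B_2}{}$ and the bilinear-to-convex rewriting for $+$, both of which the paper also uses (the former appears as a lemma in its appendix).
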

		\begin{proof}
			We first observe that the function 
			\[
			\bb{\cdot}_X \colon T_{\Sigma',E'}(X) \to \convpowS(X)
			\]
			factors as 
			\[\begin{tikzcd}
				T_{\Sigma',E'}(X) \ar[r,"\bb{\cdot}'_X"] &  \convpowfS(X) \ar[r,"\iota_X"] & \convpowS(X)
			\end{tikzcd}\]
			where $\iota_X \colon \convpowfS(X) \to\convpowS(X)$ is the obvious set-inclusion. This can be easily checked by induction on $T_{\Sigma',E'}(X)$.
			
			Observe that, since $\bb{\cdot}_X$ is injective by Proposition \ref{prop:injectivemonadmap}, then also $\bb{\cdot}'_X$ is injective. We conclude by showing that it is also surjective.
			
			Let $\mathcal A \in \convpowfS(X)$. Since $\mathcal A$ is finitely generated there exists a finite set $\mathcal B \subseteq \mon{S}(X)$ such that $\convclos{\mathcal B}{}=\mathcal A$. If $\mathcal A=\emptyset$, then $\bb{\bot}'_X = \mathcal A$. If $\mathcal B=\{\phi_1, \dots, \phi_n\}$ with $\phi_i \in \mon{S}(X)$ then, for all $i$, we take the term 
			$$t_i=\phi_i(x_1)\cdot x_1 + \dots + \phi_i(x_m)\cdot x_m$$ 
			where $\{x_1, \dots, x_m\}$ is the support of $\phi_i$.
			It is easy to check that $\bb{t_i}_X'= \{\phi_i\}$. Then by the inductive definition of $\bb{\cdot}'_X$, one can easily verify that $\bb{t_1 \sqcup \dots \sqcup t_n}'_X = \convclos{\{\phi_1, \dots \phi_n\}}{}= \mathcal A$.
		\end{proof}
		
		\begin{exa}\label{ex:segments and polytopes}
			Recall  $\convpowS(1)$ for $S=\Rp$ from Example~\ref{ex:intervals}.
			By restricting to the finitely generated convex sets, one obtains  $\convpowfS(1) = \{\emptyset\} \cup \{[a,b] \,\mid \, a,b\in \Rp\} $, that is the sets of the form $[a,\infty)$ are not finitely generated. Table \ref{tab:defbb1} illustrates the morphism $\bb{\cdot}\colon T_{\Sigma',E'}(1) \to \convpowS(1)$. 
			It is worth observing that every closed interval $[a,b]$
			is denoted by a term in $T_{\Sigma',E'}(1)$ for $1=\{x\}$: indeed, $\bb{(a \cdot x) \sqcup (b\cdot x)}= [a,b]$.
			For $2=\{x,y\}$, $\convpowfS(2)$ is the set containing all convex polygons: for instance the term $(r_1\cdot x + s_1\cdot y)\sqcup (r_2\cdot x + s_2\cdot y) \sqcup (r_3\cdot x + s_3\cdot y)$ denote a triangle with vertexes $(r_i,s_i)$. For $n=\{x_0,\dots x_{n-1}\}$, it is easy to see that $\convpowfS(n)$ contains all convex $n$-polytopes.
		\end{exa}

		\begin{table}[t]
			\[
			\begin{array}{rcl}
				\bb{x} &=& [1,1]\\
				\bb{0} &=& [0,0] \\
				\bb{\bot} &=& \emptyset \\
			\end{array}\qquad
			\begin{array}{rcl}
				\bb{\lambda \cdot t} &=& \begin{cases}
					[\lambda \cdot a, \lambda \cdot b ] & \text{if }\lambda \neq 0 ,\; \bb{t}=[a,b]\\
					\emptyset & \text{if }\lambda \neq 0 ,\; \bb{t}=\emptyset\\
					[0,0]  & \text{otherwise}
				\end{cases} \\
				\bb{t_1 + t_2} &=& \begin{cases}
					[a_1+a_2, b_1+b_2] & \text{ if } \bb{t_i}=[a_i,b_i]\\
					\emptyset & \text{ otherwise}
				\end{cases}\\
				\bb{t_1 \sqcup t_2} &=& \begin{cases}
					[min \;a_i, \, max \; b_i] & \text{ if } \bb{t_i}=[a_i,b_i]\\
					[a_1,b_1] & \text{ if } \bb{t_1}=[a_1,b_1],\; \bb{t_2}=\emptyset \\
					[a_2,b_2] & \text{ if } \bb{t_2}=[a_2,b_2],\; \bb{t_1}=\emptyset \\
					\emptyset & \text{ otherwise}
				\end{cases}  \\
			\end{array}
			\]
			\caption{The inductive definition of the function $\bb{\cdot}_1 \colon T_{\Sigma',E'}(1) \to \convpowS (1)$ for $1=\{x\}$.}
			\label{tab:defbb1}
		\end{table}
		
		\paragraph{On combining $\S$ with $\Pf$ directly}
		One might wonder whether $\convpowfS$ arises from a weak distributive law of type $\S\Pf \to \Pf \S$. By Theorem~\ref{thm:correspondence weak distributive law with weak extensions-liftings}, to give a weak distributive law $\delta \colon \S\Pf \to \Pf \S$ is equivalent to give a weak lifting of $\Pf$ to $\EM \S$. This requires, at the very least, a functor $\tilde{\Pf} \colon \EM\S \to \EM\S$, which we can presume would compute,  for a  $\S$-algebra $(X,a)$, the set of its convex and finitely generated subsets, in an analogous way of $\tilde{\P}$ computing the set of convex subsets. Moreover, we would also need a natural transformation $\iota \colon \U\S \tilde{\Pf} \to \Pf \U\S$ which, given a $\S$-algebra $(X,a)$, must assign to each convex and finitely generated subset $A$ of $X$ a finite subset of $X$ itself. A natural choice would be the set of generators of $A$, 
		but how does one find them?

		In general, when we take the convex closure of $A \subset X$, with $X$ an $S$-semimodule, we are adding to $A$ first all the segments between the elements of $A$ (the generators), then all the segments between all the new points that we have just added, and so on. The generators are exactly the \emph{extreme points} of its closure $\convclos A {}$, to use the terminology of~\cite{grunbaum_convex_2003}. For an arbitrary $U \subseteq X$, the extreme points of $U$ are defined to be those points that are not properly inside any segment of $U$:
		\[
		\ext U = \{x \in U \mid \forall y,z \in U \ldotp \forall \lambda,\mu \in S \ldotp \left(
		\begin{cases}
			\lambda + \mu =1 \\
			x = \lambda y + \mu z
		\end{cases}\right) \implies y = z \}.
		\]
		If in a semimodule $X$ we have $A = \convclos B {}$ with $B$ finite, then it is not difficult to see that $\ext A \subseteq B$. Hence a natural candidate for $\iota_{(X,a)}(A)$, with $A$ convex and finitely generated, is $\ext A$. However, this $\iota$ fails to be natural, and we show here why. The naturality square of $\iota$, for $f \colon (X,a) \to (Y,b)$ in $\EM\S$, computes $f(\ext A)$ and $\ext {f(A)}$, which do not coincide if we take $S=\Rp$, $X=\S(\{x,y,z\})$, $Y = \S(\{u,v\})$, $f = \S \left( 
		x \mapsto u,\,
		y \mapsto u,\,
		z \mapsto v
		\right) $
		and finally 
		\[
		A = \convclos{ \left\{ \frac 1 2 x + \frac 1 2 y, \, \frac 1 2 x + \frac 1 2 z,\, z \right\}}{}.
		\]
		In this case, $\ext A = \left\{ \frac 1 2 x + \frac 1 2 y, \, \frac 1 2 x + \frac 1 2 z,\, z \right\}$, so we have
		\[
		f(\ext A) = \left\{ u,\, \frac 1 2 u + \frac 1 2 v,\, v \right\}.
		\]
		On the other hand, since $f$ is a morphism of $\S$-algebras, 
		\[
		f(A) = \convclos{f\left( \left\{ \frac 1 2 x + \frac 1 2 y, \, \frac 1 2 x + \frac 1 2 z,\, z \right\} \right)}{} = \convclos{\left\{ u,\, \frac 1 2 u + \frac 1 2 v,\, v \right\}}{} 
		\]

		hence
		\[
		\ext{f(A)} = \{u,v\} \neq f(\ext A).
		\]
		
		This is, of course, saying less than saying that ``there does not exist a weak lifting of $\Pf$ to $\EM\S$'', but it shows that a very natural candidate does not work. It remains an open problem whether a weak distributive law $\S\Pf \to \Pf\S$ exists.

		\section{Conclusions: Related and Future Work}
		Our work was inspired by~\cite{goy_combining_2020} where Goy and Petrisan compose the monads of powerset  and probability distributions by means of a weak distributive law in the sense of Garner~\cite{garner_vietoris_2020}. Our results also heavily rely on the work of Clementino et al.~\cite{clementino_monads_2014} that illustrates sufficient conditions on a semiring $S$ for the existence of a weak distributive law $\delta \colon \S \P \to \P \S$. However, to the best of our knowledge, the alternative characterisation of $\delta$ provided by Theorem~\ref{thm:delta for positive refinable semifields} was never shown.
		
		This characterisation is essential to give a handy description of the lifting $\tilde\P\colon \EM{\S} \to \EM{\S}$ (Theorem~\ref{thm:weaklift}) as well as to observe the strong relationships with the work of Jacobs (Remark~\ref{remarkJacobs}) and of Klin and Rot (Remark~\ref{remarkKlinRot}). The weak distributive law $\delta$ also plays a key role in providing the algebraic theories presenting the composed monad $\convpowS$ (Theorem~\ref{thm:weaklift}) and its finitary restriction $\convpowfS$ (Theorem \ref{thm:convpowfS is presented by (Sigma', E')}). These two theories resemble those appearing in, respectively,~\cite{goy_combining_2020} and~\cite{bonchi2019theory} where the monad of probability distributions plays the role of the monad $\S$ in our work.
		
		Theorem~\ref{thm:Kleisli is CPPO enriched} allows to reuse the framework of coalgebraic trace semantics~\cite{hasuo_generic_2006} for modelling over $\Kl{\convpowS}$ systems with both nondeterminism and quantitative features. The alternative framework based on coalgebras over $\EM{\convpowS}$ directly leads to \emph{nondeterministic weighted automata}. A proper comparison with those in~\cite{droste2009handbook} is left as future work. Thanks to the abstract results in~\cite{DBLP:conf/csl/BonchiPPR14}, language equivalence for such coalgebras could be checked by means of coinductive up-to techniques. It is worth remarking that, since $\delta$ is a weak distributive law, then thanks to the work in~\cite{DBLP:journals/corr/abs-2010-00811} up-to techniques are also sound for ``convex-bisimilarity'' (in coalgebraic terms, behavioural equivalence for the lifted functor $\tilde\P \colon \EM{\S} \to \EM{\S}$).

		We conclude by recalling that we have two main examples of positive semifields: $\Bool$ and $\mathbb{R}^+$. Booleans could lead to a coalgebraic modal logic and trace semantics for \emph{alternating automata} in the style of \cite{DBLP:conf/fossacs/KlinR15}.
		For $\mathbb{R}^+$, we hope that exploiting the ideas in \cite{DBLP:journals/tcs/Rutten05} our monad could shed some light on the behaviour of linear dynamical systems featuring some sort of nondeterminism.

		\section*{Acknowledgment}
		\noindent This work was supported by the Ministero dell'Universit\`a e della Ricerca of Italy under Grant No.\ 201784YSZ5, PRIN2017 -- ASPRA (\emph{Analysis of Program Analyses}) and partially supported by Universit\`a di Pisa Project PRA-2018-66 DECLWARE: Metodologie dichiarative per la progettazione e il deployment di applicazioni.
		
		
%
%

		\bibliographystyle{alphaurl}
		\bibliography{biblio}

		\appendix
		\section{Another weak distributive law of \texorpdfstring{$\pow$}{\emph{P}} over \texorpdfstring{$\pow$}{\emph{P}}}\label{app:another weak distributive law} In Remark~\ref{rem:Kleisli vs Barr extensions} we have seen a Kleisli extension of the functor $\pow$ to $\Rel$ given by $E \colon \Rel \to \Rel$, where for $R \subseteq X \times Y$ a relation, $E(R)$ is the function
		\[
		\begin{tikzcd}[row sep=0em]
			\pow(X) \ar[r,"E(R)"] & \pow(Y) \\
			A \ar[r,|->] & \{ y \in Y \mid \exists a \in A \ldotp a R y \}.
		\end{tikzcd}
		\] 
		Now we ask: do the unit $\eta$ and the multiplication $\mu$ of the powerset monad also extend to natural transformations $\tilde\eta \colon \id{\Rel} \to E$ and $\tilde\mu \colon EE \to E$?
		
		As observed in~\cite{hyland_category_2006}, if a natural transformation $\psi \colon T_1 \to T_2$ has an extension $\tilde \psi \colon \tilde T_1 \to \tilde T_2$ to $\Kl S$, then it is unique: Definition~\ref{def:liftings and extensions}.\ref{extension natural transformation to Kleisli} requires that the components of $\tilde\psi$ be exactly $\tilde\psi_X = F_S(\psi_X)$, with $F_S \colon \C \to \Kl S $ the free functor. Hence all that one has to do to see whether a $\psi$ has an extension to $\Kl S $ is to check if the family of morphisms $(F_S(\psi_X))_{X \in \C}$ is natural with respect to the functors $\tilde T_1, \tilde T_2$. (In other words, although the functors $T_1$ and $T_2$ may have many extensions to $\Kl S $, once we fix an extension we can only have one $\tilde\psi$ between them.) In our case then, we have
		\[
		\tilde\eta_X = \{ (x, \{x\}) \mid x \in X  \} \subseteq X \times \pow (X), \quad \tilde\mu_X = \{ (\mathcal A, \bigcup_{A \in \mathcal A} A  ) \mid \mathcal A \in \pow\pow (X) \} \subseteq \pow\pow (X) \times \pow(X)
		\]
		and the question is: are they natural transformations with respect to the functor $E$?
		
		The naturality condition for $\tilde\eta$ states that for any $R \subseteq X \times Y$ the following square commutes in $\Rel$:
		\[
		\begin{tikzcd}
			X \ar[r,"\tilde\eta_X"] \ar[d,"R"'] & \pow X \ar[d,"E(R)"] \\
			Y \ar[r,"\tilde\eta_Y"] & \pow Y
		\end{tikzcd}
		\]
		It is not difficult to see that the upper leg is
		\[
		\{ (x, \{ y \in Y \mid x R y \}) \mid x \in X  \}
		\]
		while the lower leg is
		\[
		\{ (x, \{ y \} ) \mid (x,y) \in R  \}
		\]
		and the two coincide if and only if $R$ is functional and total, that is if it is a function of sets. Hence $\eta$ does not extend to a natural transformation $\tilde\eta \colon \id\Rel \to E$.
		
		Regarding the multiplication, instead, we have that $\tilde\mu$ is natural as a transformation $EE \to E$ if and only if for all $R \subseteq X \times Y$ the following square commutes in $\Rel$:
		\[
		\begin{tikzcd}
			\pow\pow X \ar[r,"\tilde\mu_X"] \ar[d,"EE(R)"'] & \pow X \ar[d,"E(R)"] \\
			\pow \pow Y \ar[r,"\tilde\mu_Y"] & \pow Y
		\end{tikzcd}
		\]
		All the relations involved in the square above are actually functions, hence both legs are functions and, given $\mathcal A \in \pow\pow X$:
		\begin{gather*}
			E(R) \circ \tilde\mu_X (\mathcal A) = \{ y \in Y \mid \exists A \in \mathcal A \ldotp \exists x \in A \ldotp x R y \} \\
			\tilde\mu_Y \circ EE(R) = \bigcup_{A \in \mathcal A} \{ y \in Y \mid \exists x \in A \ldotp x R y \}
		\end{gather*}
		and the two clearly are the same. Therefore $\mu$ does indeed extend to a natural transformation $\tilde\mu \colon EE \to E$.
		
		This means the monad $\pow$ does not extend to $\Rel$ on a whole via the endofunctor $E$, but it does \emph{weakly} extend. Hence there is an alternative weak distributive law $\pow \pow \to \pow\pow$ induced by $E$ and $\tilde \mu$. To find it we use the proof of Theorem~\ref{thm:correspondence weak distributive law with weak extensions-liftings}: we need to compute $E(\ni_X)$, which by definition is
		\[
		\begin{tikzcd}[row sep=0em]
			\pow\pow X \ar[r,"E(\ni_X)"] & \pow X \\
			\mathcal A \ar[r,|->] & \{x \in X \mid \exists A \in \mathcal A \ldotp A \ni x \}.
		\end{tikzcd}
		\]
		This means that 
		$
		E(\ni_X)(\mathcal A) = \bigcup_{A \in \mathcal A} A = \mu_X(\mathcal A):
		$
		we have re-obtained $\mu$! Hence the weak distributive law $\delta \colon \pow\pow \to \pow\pow$ is 
		\[
		\delta_X (\mathcal A) = \{ \mu_X(\mathcal A) \}.
		\]
		In other words, $\delta$ is the horizontal composition of $\mu$ and $\eta$, as in the following situation:
		\[
		\begin{tikzcd}
			\Set \ar[r,bend left,"\pow\pow"{name=F}] \ar[r,bend right,swap,"\pow"{name=G}] & \Set \ar[r,bend left,"\id{}"{name=H}] \ar[r,bend right,swap,"\pow"{name=K}] & \Set.
			\arrow[Rightarrow,from=F,to=G,"\mu",shorten >=2pt, shorten <=2pt]
			\arrow[Rightarrow,from=H,to=K,"\eta",shorten >=2pt, shorten <=2pt]
		\end{tikzcd}
		\]
		Now, what is the induced weak lifting of $\pow$ to $\EM\pow$? Again the proof of Theorem~\ref{thm:correspondence weak distributive law with weak extensions-liftings} tells us how to find it. Given $(X,a \colon \pow X \to X)$ a $\pow$-algebra (which means that $X$ is a complete semilattice), the weak lifting $\tilde\pow$ computes a new $\pow$-algebra $\tilde\pow (X,a) = (Y,b)$ where $Y$ is obtained by splitting the idempotent
		\[
		\begin{tikzcd}[row sep=0em]
			\pow X \ar[r,"\eta_X"] & \pow\pow X \ar[r,"\delta_X"] & \pow \pow X \ar[r,"\pow a"] & \pow X \\
			A \ar[r,|->] & \{ A \} \ar[r,|->] & \{ \underbrace{\mu_X(\{ A \} )}_{A} \} \ar[r,|->] & \{ a(A)\}
		\end{tikzcd}
		\]
		whose fixed points are the singletons. Hence the carrier set of $\tilde\pow(X,a)$ is $\pow_1 X = \{ \{x\} \mid x \in X \}$. The two natural transformations $\pi$ and $\iota$ required by the definition of weak lifting are
		\[
		\begin{tikzcd}[row sep=0em]
			\pow X \ar[r,"{\pi_{(X,a)}}"] & \pow_1 X \\
			A \ar[r,|->] & \{ a(A) \}
		\end{tikzcd}
		\qquad \text{and} \qquad
		\begin{tikzcd}[row sep=0em]
			\pow_1 X \ar[r,"{\iota_{(X,a)}}"] & \pow X \\
			\{ x \} \ar[r,|->] & \{ x \}
		\end{tikzcd}
		\]
		while the $\pow$-algebra structure map $b \colon \pow \pow_1 X \to \pow_1 X$ is given by:
		\[
		\begin{tikzcd}[row sep=0em]
			\pow \pow_1 X \ar[r,"{\pow \iota_{(X,a)}}"] & \pow \pow X \ar[r,"\delta_X"] & \pow\pow X \ar[r,"\pow a"] & \pow X \ar[r,"{\pi_{(X,a)}}"] & \pow_1 X \\
			\mathcal A \ar[r,|->] & \mathcal A \ar[r,|->] & \{ \bigcup \mathcal A \} \ar[r,|->] & \{ a( \bigcup \mathcal A)\} \ar[r,|->] & \{ a( \bigcup \mathcal A)\}
		\end{tikzcd}
		\]
		In other words, it  takes $\mathcal A \subseteq \pow_1 X$ which, being a collection of singletons in $X$, is in bijective correspondence with $\bigcup \mathcal A \subseteq X$, and returns the singleton of $a(\bigcup \mathcal A)$. $\pow_1 X$ is therefore a $\pow$-algebra essentially in the same way $X$ is.  \section{Further Details of the Proof of Theorem~\ref{thm:weaklift}}\label{app4}
		
		\paragraph{Proof of Proposition~\ref{prop:image along a of convex closures}} Let $(X,a)$ be a $\mon S$-algebra and let $\mathcal A \subseteq \mon S X$. We need to prove that:
		\[
		\pow a \Bigl( \convclos{\mathcal A}{\muS_X} \Bigr) = \convclos{\pow a (\mathcal A)}{a}.
		\]
		We have
		\[
		\pow a \Bigl( \convclos{\mathcal A}{\muS_X} \Bigr) = \{ a(\muS_X(\Psi)) \mid \Psi \in \mon S^2 X, \sum_{\psi \in \mon S X} \Psi(\psi) = 1, \supp \Psi \subseteq \mathcal A \}
		\]
		while
		\[
		\convclos{\pow a (\mathcal A)}{a} = \{ a(\phi) \mid \phi \in \mon S X, \sum_{x \in X} \phi(x)=1, \supp \phi \subseteq \{a(\psi) \mid \psi \in \mathcal A\}   \}.
		\]
		For the left-to-right inclusion: let $\Psi \in \mon S^2 X$ be such that $\sum_{\psi \in \mon S X} \Psi(\psi) = 1$ and with $\supp \Psi \subseteq \mathcal A$. Define $\phi=\mon S (a) (\Psi)$. Then $x \in \supp \phi$ if and only if there exists $\psi \in \supp \Psi$ such that $x = a(\psi)$. Hence, if $x \in \supp \phi$, then $x=a(\psi)$ for some $\psi \in \mathcal A$. Moreover,
		\[
		\sum_{x \in X} \phi(x) = \sum_{x \in X} \sum_{\substack{\psi \in \supp \Psi \\ a(\psi)=x}} \Psi(\psi) = \sum_{\psi \in \mon S X} \Psi(\psi) = 1.
		\]
		Since $a(\muS_X(\Psi)) = a(\mon S (a)(\Psi))$ by the properties of $(X,a)$ as $\mon S$-algebra, and $\mon S (a) (\Psi) = \phi$ by definition, we conclude.
		
		Vice versa, given $\phi \in \mon S X$ such that $\sum_{x \in X} \phi(x)=1$ and $\supp \phi = \{a(\psi_1),\dots,a(\psi_n)\}$ where $\psi_1,\dots,\psi_n \in \mathcal A$, define
		\[
		\Psi = \left( 
		\begin{tikzcd}[row sep=0em]
			\psi_1 \ar[r,|->] & \phi(a(\psi_1)) \\
			\vdots & \vdots \\
			\psi_n \ar[r,|->] & \phi(a(\psi_n))
		\end{tikzcd}
		\right)
		\]
		Then $\sum_{\psi \in \mon S X} \Psi(\psi) = 1$ and
		\[
		a(\muS_X(\Psi)) = a(\mon S (a) (\Psi)) = a(\phi)
		\]
		because $\mon S (a) (\Psi) = \phi$.\qed
		
		\begin{prop}\label{thm:convpow(X,a) is a S-algebra}
			Let $S$ be a positive semifield, $(X,a)$ a $\mon S$-algebra, $\Phi \in \S\ConvPow X a$. Then $\pow a (\delta_X (\mon S (\iota) (\Phi))) = \pow a (\choice{\mon S (\iota) (\Phi)})$.
		\end{prop}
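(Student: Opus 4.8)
The plan is to peel off the weak distributive law and Proposition~\ref{prop:image along a of convex closures} so that the statement reduces to a self-contained convexity claim, and then verify that claim by a direct semimodule computation. Concretely, I would write $\Psi = \mon S(\iota)(\Phi) \in \S\pow X$. By Theorem~\ref{thm:delta for positive refinable semifields}, $\delta_X(\Psi) = \convclos{\choice\Psi}{\mu^{\mathcal S}_X}$, and then Proposition~\ref{prop:image along a of convex closures} gives $\pow a(\delta_X(\Psi)) = \pow a\bigl(\convclos{\choice\Psi}{\mu^{\mathcal S}_X}\bigr) = \convclos{\pow a(\choice\Psi)}{a}$. Since $\pow a(\choice\Psi) \subseteq \convclos{\pow a(\choice\Psi)}{a}$ always holds, it remains only to prove that $\pow a(\choice\Psi)$ is convex in $(X,a)$, i.e.\ that $\convclos{\pow a(\choice\Psi)}{a} \subseteq \pow a(\choice\Psi)$.

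First I would dispose of the degenerate case: if some $A \in \supp\Psi$ is empty, then there is no $u$ with $u(A) \in A$ for all $A \in \supp\Psi$, so $\choice\Psi = \emptyset$ and $\pow a(\choice\Psi) = \emptyset$, which is convex. Otherwise write $\supp\Psi = \{A_1,\dots,A_n\}$ and $\lambda_i = \Psi(A_i)$. Two facts are then crucial and immediate: each $A_i$ is convex in $(X,a)$ — this is precisely where the hypothesis $\Phi \in \S\ConvPow X a$ (as opposed to merely $\Phi \in \S\pow X$) is used — and each $\lambda_i \ne 0$. Unfolding~\eqref{eqn:c(Phi)} and~\eqref{eq:S-algebra associated to semimodule} identifies $\pow a(\choice\Psi)$ with $\bigl\{ \sum_{i=1}^n \lambda_i \cdot^a u_i \bigm| u_i \in A_i \text{ for all } i \bigr\}$.

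To see this set is convex, take $\phi \in \S X$ with $\sum_{x \in X}\phi(x) = 1$ and $\supp\phi = \{c^1,\dots,c^m\} \subseteq \pow a(\choice\Psi)$, and for each $j$ choose $(u^j_1,\dots,u^j_n) \in \prod_i A_i$ with $c^j = \sum_i \lambda_i \cdot^a u^j_i$. Expanding $a(\phi) = \sum_j \phi(c^j)\cdot^a c^j$ and using associativity and commutativity of $+^a$, $(\mathit{Co}_s)$ and $(D_s1)$, one rewrites $a(\phi) = \sum_{i=1}^n \lambda_i \cdot^a w_i$ where $w_i = \sum_j \phi(c^j)\cdot^a u^j_i$ is a convex combination (the coefficients $\phi(c^j)$ sum to $1$) of elements of the convex set $A_i$, hence $w_i \in \convclos{A_i}{a} = A_i$. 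Therefore $a(\phi) \in \pow a(\choice\Psi)$, which completes the argument. The one genuinely delicate point is this rewriting step: pulling the scalar $\lambda_i$ back out in front of the inner sum $\sum_j \phi(c^j)\cdot^a u^j_i$ relies on $\lambda_i \ne 0$ — for a zero coefficient the corresponding ``column'' would collapse to $\{0^a\}$ and break the identification of $\pow a(\choice\Psi)$ above — and the double-sum reindexing must keep the scalar order fixed, so that only the commutative-monoid structure of $(X,+^a)$ is invoked.
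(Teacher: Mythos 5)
Your proof is correct, and at its heart it turns on the same idea as the paper's: transposing the double convex combination so that, for each $A_i \in \supp(\mon S(\iota)(\Phi))$, the ``column'' $\sum_j \phi(c^j)\cdot^a u^j_i$ is a convex combination of elements of $A_i$ and therefore lands back in $A_i$ — which is exactly where the hypothesis $\Phi\in\S\ConvPow X a$, rather than merely $\Phi\in\S\pow X$, is used in both arguments. The packaging, however, is genuinely different. The paper proves the set equality by two inclusions computed inside $\mon S^2 X$: from a given convex combination in $\mon S^2 X$ supported in $\choice{\mon S(\iota)(\Phi)}$ it builds a transposed one supported on the functions $\chi_{A}$, shows the two have the same image under $\muS_X$, and only then applies $a$; the easy inclusion is handled separately via Dirac functions. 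You instead invoke Proposition~\ref{prop:image along a of convex closures} a second time to rewrite $\pow a(\delta_X(\mon S(\iota)(\Phi)))$ as $\convclos{\pow a(\choice{\mon S(\iota)(\Phi)})}{a}$, which reduces everything to the self-contained claim that $\bigl\{\sum_{i}\lambda_i\cdot^a u_i \mid u_i\in A_i\bigr\}$ is convex in $(X,a)$, and the computation then takes place in the semimodule $X$ rather than in $\mon S^2 X$. Your route is shorter and isolates the convexity content more cleanly (the trivial inclusion comes for free from $B\subseteq\convclos{B}{a}$); the paper's route keeps explicit the bookkeeping about repeated representations that your Proposition-based reduction hides. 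One small caveat: your closing remark that the rewriting invokes ``only the commutative-monoid structure of $(X,+^a)$'' is slightly optimistic, since passing from $\sum_{i,j}(\phi(c^j)\lambda_i)\cdot^a u^j_i$ to $\sum_i \lambda_i\cdot^a\bigl(\sum_j\phi(c^j)\cdot^a u^j_i\bigr)$ also commutes the two scalars in $S$. This is harmless relative to the paper, whose own computation swaps $\Phi(A)\Psi(\phi)$ for $\Psi(\phi)\Phi(A)$ in exactly the same way, but it is an assumption on the semiring rather than a purely monoid-theoretic reindexing, and it is worth stating.
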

		\begin{proof}
			In this proof we shall simply write $\Phi$ instead of the more verbose ${\mon S (\iota) (\Phi)}$. We want to prove that 
			\begin{multline}\label{eqn:linear combination of convex subsets of X}
				\pow a \bigl(\delta_X(\Phi)\bigr) = \\
				\Bigl\{
				a(\psi) \mid \psi \in \mon S X \ldotp \exists u \colon \!\! \supp \Phi \to X \ldotp u(A) \!\in \! A ,\, \forall x \in X\ldotp \psi(x) = \!\!\sum_{\substack{A \in \supp \Phi \\ u(A)=x}} \!\!\!\Phi(A)
				\Bigr\}
			\end{multline}
			where we have, by Theorem~\ref{thm:delta for positive refinable semifields}, that
			\[
			\pow a \bigl(\delta_X( \Phi)\bigr) = \\
			\{
			a(\muS_X(\Psi)) \mid \Psi \in \mon S ^2 X, \sum_{\phi \in \mon S X} \Psi(\phi)=1, \supp \Psi \subseteq \choice{\Phi}
			\}.
			\]
			First of all, $\emptyset$ is \emph{not} a $\mon S$-algebra, because there is no map $\mon S (\emptyset) \to \emptyset$ given that $\mon S (\emptyset) = \{ \emptyset \colon \emptyset \to S\}$, hence $X \ne \emptyset$. Next, if $\Phi = \zero \colon \P X \to S$, namely the $0$-constant function, then $\choice{\Phi} = \{\zero \colon X \to S\}$  therefore one can easily see that the left-hand side of~\eqref{eqn:linear combination of convex subsets of X} is equal to $\{a(\zero \colon X \to S)  \}$. For the same reason, the right-hand side is also equal to $\{a(\zero \colon X \to S)\}$. Moreover, if $\Phi(\emptyset) \ne 0$, then there is no $u \colon \supp \Phi \to X$ such that $u(\emptyset) \in \emptyset$, 
			so $\choice{\Phi} = \emptyset$ and so is the left-hand side of~(\ref{eqn:linear combination of convex subsets of X}); for the same reason, also the right-hand side is empty. 
			
			Suppose then, for the rest of the proof, that $\Phi \ne 0$ and that $\Phi(\emptyset)=0$.
			
			For the right-to-left inclusion in~(\ref{eqn:linear combination of convex subsets of X}): given $\psi \in \choice\Phi$, consider $\Psi = \etaS_{\mon S X} (\psi) = \Delta_\psi \in \mon S ^2 X$. Then $\Psi$ clearly satisfies all the required properties and $\muS_X(\Psi)=\psi$.
			
			The left-to-right inclusion is more laborious. Let $\Psi \in \mon S ^2 X$ be such that $\sum_{\chi \in \S X} \Psi(\chi)=1$ and such that $\supp \Psi \subseteq \choice\Phi$, that is, for all $\phi \in \supp \Psi$ there is $u^\phi \colon \supp \Phi \to X$ such that $u^\phi(A) \in A$ for all $A \in \supp \Phi$ and 
			$\phi = \S(u^\phi)(\Phi)$. 
			We have to show that $a(\muS_X(\Psi))= a(\psi)$ for some $\psi \in \mon S X$ of the form $\sum_{A \in \supp \Phi} \Phi(A) \cdot u(A)$ for some choice function $u \colon \supp \Phi \to X$. Notice that the given $\Psi$ is a convex linear combination of functions $\phi$'s in $\mon S X$ like the one we have to produce: the trick will be to exploit the fact that each $A \in \supp \Phi$ is convex. We first give the idea of the proof, and then write the details of it.
			
			Suppose $\supp \Phi = \{A_1,\dots,A_n\}$ and $\supp \Psi = \{\phi^1,\dots,\phi^m\}$. Call $u^j$ the choice function that generates $\phi^j$. 
			Then $\Psi$ is of this form:
			\[
			\Psi= \Bigg(   
			\underbrace{\left(
				\parbox{2.9cm}{$u^1(A_1) \mapsto \Phi(A_1)$ \\ \vdots \\ $u^1(A_n) \mapsto \Phi(A_n)$}
				\right)}_{\phi^1} \mapsto \Psi(\phi^1), \,  \dots, \,  
			\underbrace{\left(
				\parbox{3cm}{$u^m(A_1) \mapsto \Phi(A_1)$ \\ \vdots \\ $u^m(A_n) \mapsto \Phi(A_n)$}
				\right)}_{\phi^m}
			\mapsto \Psi(\phi^m)
			\Bigg)
			\]
			Define the following element of $\S^2 X$:
			\[
			\Psi'= \Bigg(   
			\underbrace{\left(
				\parbox{3cm}{$u^1(A_1) \mapsto \Psi(\phi^1)$ \\ \vdots \\ $u^m(A_1) \mapsto \Psi(\phi^m)$}
				\right)}_{\chi^1}
			\mapsto \Phi(A_1), \, \dots, \,
			\underbrace{\left(
				\parbox{3.1cm}{$u^1(A_n) \mapsto \Psi(\phi^1)$ \\ \vdots \\ $u^m(A_n) \mapsto \Psi(\phi^m)$}
				\right)}_{\chi^n}
			\mapsto \Phi(A_n)
			\Bigg)
			\]
			Observe that $u^1(A_i), \dots, u^m(A_i) \in A_i$ by definition, and $A_i$ is convex by assumption: since $\sum_{j=1}^m \Psi(\phi^j)=1$, we have that $a(\chi^i) \in A_i$. Set then $u({A_i}) = a(\chi^i)$ and define $\psi= \S (a) (\Psi')$: we have $\psi \in \choice\Phi$ with $u$ as the generating choice function. We shall prove that $\muS_X(\Psi)=\muS_X(\Psi')$, thus obtaining
			\[
			a(\psi)=a\bigl(\mon S (a)(\Psi')  \bigr) = a\bigl(\muS_X (\Psi')\bigr) = a\bigl(\muS_X(\Psi)\bigr)
			\]
			as desired. We now set this down in detail.
			
			Let $A \in \supp \Phi$. Define for all $x \in X$:
			\[
			\chi_A(x) = \sum_{\substack{\phi \in \supp \Psi \\ x=u^\phi(A)}} \Psi(\phi).
			\]
			Then $\supp{\chi_A} = \{u^\phi(A) \mid \phi\in \supp \Psi\}$ is finite, hence $\chi_A \in \mon S X$. Let now 
			\[
			u(A) = a(\chi_A).
			\]
			We have that
			\[
			\sum_{x \in X} \chi_A(x) = \sum_{x \in X} \sum_{\substack{\phi \in \supp \Psi \\ x=u^\phi(A)}} \Psi(\phi) = \sum_{\phi \in \supp \Psi} \Psi(\phi)=1,
			\]
			hence $u(A) \in A$ because $A$ is convex. Define, for all $x \in X$:
			\[
			\psi(x) = \sum_{\substack{A \in \supp \phi \\ x=a(\chi_A)}} \Phi(A).
			\]
			Then $\supp \psi = \{ u(A) \mid A \in \supp \Phi\}$ so $\psi \in \mon S X$.
			To conclude, we have to prove that $a(\muS_X(\Psi)) = a(\psi)$. To that end, let for all $\chi \in \mon S X$
			\[
			\Psi'(\chi) = \sum_{\substack{ A \in \supp \phi \\ \chi_A = \chi}} \Phi(A).
			\]
			Then $\supp \Psi' = \{ \chi_A \mid A \in \supp \Phi\}$ is finite. Then we have for all $x \in X$ that
			\begin{align*}
				\muS_X(\Psi')(x) &= \sum_{\chi \in \supp \Psi'} \Psi'(\chi) \cdot \chi(x) \\
				&\overset{(\ast)}{=} \sum_{A \in \supp \Phi} \Phi(A) \cdot \sum_{\substack{\phi \in \supp \Psi \\ x=u^\phi(A)}} \Psi(\phi) \\
				&= \sum_{\phi \in \supp \Psi} \Psi(\phi) \cdot \sum_{\substack{A \in \supp \Phi \\ x=u^\phi(A)}} \Phi(A) \\
				&= \sum_{\phi \in \mon S X} \Psi(\phi) \cdot \phi(x) \\
				&= \muS_X(\Psi)(x)
			\end{align*}
			where equation $(\ast)$ is explained in a similar way than $(\ast_1)$ in the proof of Theorem~\ref{thm:delta for positive refinable semifields}. We also have that 
			\[
			\mon S (a)(\Psi')(x) = \sum_{\chi \in a^{-1}\{x\}} \Psi'(\chi) = \sum_{\substack{\chi \in \mon S X \\ a(\chi) = x}} \sum_{\substack{A \in \supp \Phi \\ \chi_A = \chi}} \Phi(A) = \sum_{\substack{A \in \supp \Phi \\ a(\chi_A)=x}} \Phi(A) = \psi(x).
			\]
			Therefore,
			\[
			a(\psi)=a\bigl(\mon S (a)(\Psi')  \bigr) = a\bigl(\muS_X (\Psi')\bigr) = a\bigl(\muS_X(\Psi)\bigr).\qedhere
			\]
		\end{proof}
		
		\begin{prop}\label{prop:rectangle lifting mu commutes}
			The following diagram commutes.
			\[
			\begin{tikzcd}
				\pow \pow X \ar[r,"\pow \convclos{(-)} a"] \ar[d,"\mu^{\P}_X"'] 
				& \pow(\ConvPow X a) \ar[r,"\convclos{(-)}{\alpha_a}"] & \ConvPow {(\ConvPow X a)} {\alpha_a}  \ar[d,"{\mu^{\tilde \P}_{(X,a)}}"] \\
				\pow X \ar[rr,"\convclos{(-)} a"] & & \ConvPow X a
			\end{tikzcd}
			\]	
		\end{prop}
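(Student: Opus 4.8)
The plan is to chase an arbitrary $\mathcal A\in\pow\pow X$ around both legs of the square. Write $\mathcal A_0 := \pow(\convclos{(-)}a)(\mathcal A) = \{\convclos A a\mid A\in\mathcal A\}\subseteq\ConvPow X a$. Then the top-and-right composite sends $\mathcal A$ to $\bigcup_{C\in\convclos{\mathcal A_0}{\alpha_a}}C$, while the left-and-bottom composite sends it to $\convclos{\bigcup_{A\in\mathcal A}A}a$, and these are the two sets to be shown equal.

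First I would record a purely order-theoretic reduction. The convex-closure operator $\convclos{(-)}a$ is the idempotent $e_{(X,a)}$ of~\eqref{eqn:idempotent e} corestricted and re-included, hence monotone and idempotent; and $A\subseteq\convclos A a\subseteq\convclos{\bigcup\mathcal A}a$ for every $A\in\mathcal A$, so $\bigcup\mathcal A\subseteq\bigcup\mathcal A_0\subseteq\convclos{\bigcup\mathcal A}a$, whence $\convclos{\bigcup\mathcal A}a=\convclos{\bigcup\mathcal A_0}a$ by sandwiching and idempotency. It therefore suffices to prove the identity
\[
\bigcup_{C\in\convclos{\mathcal B}{\alpha_a}}C \;=\; \convclos{\bigcup\mathcal B}a
\qquad\text{for every }\mathcal B\subseteq\ConvPow X a, \tag{$\star$}
\]
and then instantiate at $\mathcal B=\mathcal A_0$. (Incidentally $(\star)$ re-proves that the left-hand union is convex, as it must be for the square to type-check; alternatively that is Proposition~\ref{prop:union of convex family of convex subsets is convex} applied to the convex set $\convclos{\mathcal B}{\alpha_a}$.)

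For $(\star)$ I would use the description $\alpha_a(\Phi)=\{a(\phi)\mid\phi\in\choice\Phi\}$ from~\eqref{eq:alphaa} (justified by Proposition~\ref{thm:convpow(X,a) is a S-algebra}) together with the elementary computation $a(\S(u)(\Phi))=\sum_{D\in\supp\Phi}\Phi(D)\cdot^a u(D)$, valid since $a$ is a semimodule homomorphism $(\S X,\muS_X)\to(X,a)$ and by the distributivity axioms of Table~\ref{tab:axiomsinitial}. The inclusion ``$\subseteq$'' is then direct: an element of $\convclos{\mathcal B}{\alpha_a}$ is some $\alpha_a(\Phi)$ with $\supp\Phi\subseteq\mathcal B$ and $\sum_D\Phi(D)=1$, and each of its elements $\sum_D\Phi(D)\cdot^a u(D)$, with $u(D)\in D\subseteq\bigcup\mathcal B$, is a convex combination of points of $\bigcup\mathcal B$ and hence lies in $\convclos{\bigcup\mathcal B}a$.

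The inclusion ``$\supseteq$'' is the only real obstacle, and it is where the semifield hypothesis enters. Given $y=a(\phi)\in\convclos{\bigcup\mathcal B}a$ with $\supp\phi=\{x_1,\dots,x_n\}\subseteq\bigcup\mathcal B$ and $\sum_i\phi(x_i)=1$, pick $B_i\in\mathcal B$ with $x_i\in B_i$ and set $\Phi:=\bigl(B\mapsto\sum_{i:B_i=B}\phi(x_i)\bigr)\in\S\ConvPow X a$, so that $\supp\Phi\subseteq\mathcal B$, $\sum_B\Phi(B)=1$, and $\alpha_a(\Phi)\in\convclos{\mathcal B}{\alpha_a}$. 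The subtlety is that a choice function for $\Phi$ may select only one representative of each $B\in\supp\Phi$, carrying the \emph{aggregated} weight $\Phi(B)$, so whenever several $x_i$ fall in the same $B$ they must be merged into a single point of $B$. Using that $S$ is a semifield, and positive so $\Phi(B)\ne 0$, I would take $u(B):=\sum_{i:B_i=B}\tfrac{\phi(x_i)}{\Phi(B)}\cdot^a x_i$; its coefficients sum to $1$ and $B$ is convex, so $u(B)\in\convclos B a=B$, and a short computation with the distributivity axioms gives $a(\S(u)(\Phi))=\sum_B\Phi(B)\cdot^a u(B)=\sum_i\phi(x_i)\cdot^a x_i=y$. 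Hence $y\in\alpha_a(\Phi)\subseteq\bigcup_{C\in\convclos{\mathcal B}{\alpha_a}}C$, which finishes $(\star)$ and, combined with the reduction above, the proposition. The only delicate bookkeeping is the double indexing (the $B_i$ may coincide) and reindexing a sum over $X$ as a sum over $\supp\Phi$ along $u$; everything else is a direct appeal to results already established in the paper.
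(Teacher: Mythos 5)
Your proof is correct, and although it shares the overall skeleton of the paper's argument (an element chase through both inclusions, with the same assignment $\phi \mapsto \Phi$, $\Phi(B)=\sum_{i\colon B_i=B}\phi(x_i)$, in the hard direction), it takes a genuinely different route at the two points that matter. First, your order-theoretic reduction to the identity $(\star)$ for families $\mathcal B$ of \emph{convex} sets is not in the paper: the paper works with the original $\mathcal U \in \pow\pow X$ throughout, and consequently its right-to-left inclusion must decompose each chosen point $u_i \in \convclos{A_i}{a}$ as $a(\phi_i)$ with $\supp\phi_i \subseteq A_i$ and re-aggregate via $\muS_X$ --- a step your reduction renders unnecessary, since after passing to $\mathcal B=\mathcal A_0$ the membership $u(D)\in D\subseteq\bigcup\mathcal B$ is immediate. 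Second, for the hard inclusion $\convclos{\bigcup\mathcal B}{a}\subseteq\bigcup_C C$ the paper does not merge representatives by division: it appeals to the proof of Proposition~\ref{prop:union of convex family of convex subsets is convex}, which exhibits $\phi$ as an element of $\delta_X(\S(\iota)(\Phi))$ by constructing a witness $\psi\in\S(\ni_X)$ that may spread the weight $\Phi(B)$ over \emph{several} elements of $B$; you instead stay with the $\choice{\cdot}$ description \eqref{eq:alphaa} of $\alpha_a$ and use the semifield structure (division by $\Phi(B)\ne 0$, which needs positivity, together with convexity of $B$) to collapse the several $x_i$ lying in the same $B$ into a single point of $B$. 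Both work: the paper's route avoids explicit inverses at this stage (they are hidden in Theorem~\ref{thm:delta for positive refinable semifields}, on which \eqref{eq:alphaa} rests), while yours is more self-contained and makes the role of convexity of the members of $\mathcal B$ explicit. The only caveat, which the paper's own computations share, is that $\frac{\phi(x_i)}{\Phi(B)}$ should be read as $\Phi(B)^{-1}\phi(x_i)$ if one cares about non-commutative semifields.
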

		\begin{proof}
			In this proof we shall simply write $\overline A$ for $\convclos A a$ for any $A \subseteq X$, because there is no risk of confusion.
			We have to show that
			\[
			\{ a(\phi) \mid \phi \in \mon S X, \sum_{x \in X} \phi(x)=1, \supp \phi \subseteq \bigcup_{U \in \mathcal U} U \}
			=
			\bigcup_{\substack{\Phi \in \mon S \ConvPow X a \\
					\sum \Phi(C)=1 \\
					\supp \Phi \subseteq \{ \convclos U {} \mid U \in \mathcal U \}
			}}
			\alpha_a(\Phi)   .
			\]
			For the left-to-right inclusion: let $\phi \in \mon S X$ such that $\sum_{x \in X} \phi(x)=1$, and suppose $\supp \phi = \{x_1,\dots,x_n\}$. We have that for all $i \in \natset n$ there exists $A_i \in \mathcal U$ such that $x_i \in A_i$. While the $x_i$'s are distinct, the $A_i$'s need not be. Define, for all $B \in \ConvPow X a$,
			\[
			\Phi(B)=\sum_{ \substack{i \in \natset n \\ B=\overline{A_i} }} \phi(x_i).
			\]
			Then $\supp \Phi = \{ \overline {A_1},\dots,\overline{A_n} \}$ hence $\Phi \in \mon S \ConvPow X a$. One can then prove that $a(\phi) \in \alpha_a(\Phi)$ in the same way as showed in the proof of Proposition~\ref{prop:union of convex family of convex subsets is convex}.
			
			Vice versa, an element of the right-hand side is of the form $a(\psi)$ for some $\Phi \in \mon S \ConvPow X a$ such that $\sum_{B \in \ConvPow X a} \Phi(B) = 1$, $\supp \Phi = \{\overline {A_1},\dots,\overline{A_n}\}$ with $A_i \in \mathcal U$ for all $i \in \natset n$ and for some $\psi \in \mon S X$ and $u \in \prod_{i =1}^n \overline{A_i}$ such that 
			\[
			\psi(x)=\sum_{ \substack{i \in \natset n \\[.25em] x=u_i} } \Phi(\overline{A_i}).
			\]
			We want to prove that $a(\psi)= a(\phi)$ for an appropriate $\phi \in \mon S X$ such that $\sum_{x \in X} \phi(x) =1$ and $\supp \phi \subseteq \bigcup_{U \in \mathcal U} U$.
			
			Since $u_i \in \overline{A_i}$, we have that $u_i = a(\phi_i)$ for some $\phi_i \in \mon S X$ such that $\sum_{x \in X} \phi_i(x)=1$ and $\supp \phi_i \subseteq A_i$. These $\phi_i$'s are not necessarily distinct. Let then, for all $\phi \in \mon S X$,
			\[
			\Psi(\phi) = \sum_{ \substack{i \in \natset n \\ \phi=\phi_i} } \Phi(\overline{A_i}).
			\]
			Then $\supp \Psi = \{f_1,\dots,f_n\}$ so $\Psi \in \mon S ^2 X$ and it is easy to prove, by means of direct calculations, that $\psi=\mon S (a)(\Phi)$, $\sum_{x \in X} \muS_X(\Psi)=1$ and that $\supp{\muS_X(\Psi)} \subseteq \bigcup_{U \in \mathcal U} U$. Then we have that $a(\psi)=a(\mon S (a)(\Psi)) = a(\muS_X(\Psi))$, and $\muS_X(\Psi)$ is the $\phi$ we were looking for.
		\end{proof}
		
		\section{Proof of Proposition~\ref{prop:convpowfS is a monad}}\label{app6}
		
		In this appendix, we provide a proof of  Proposition~\ref{prop:convpowfS is a monad}. We need to show that:
		\begin{itemize}[align=left]
			\item[(a)] for all functions $f \colon X \to Y$, $\convpowS(f) \colon \convpowS(X) \to \convpowS(Y)$ restricts and corestricts to $\convpowfS (X) \to \convpowfS (Y)$; 
			\item[(b)] the unit $\eta^\convpowS_X$ can be corestricted to $\convpowfS (X)$ (trivial); 
			\item[(c)] the multiplication $\mu^\convpowS_X$ restricts and corestricts to $\convpowfS \convpowfS X \to \convpowfS X$. 
		\end{itemize}
		Hereafter, we shall simply write $\convclos {\mathcal B} {}$ in lieu of $\convclos {\mathcal B} {\muS_X}$, for some
		$\mathcal B \subseteq \S X$.

		\begin{prop}
			Let $f \colon X \to Y$, $\mathcal A \subseteq \S X$ such that $\mathcal A =  \convclos {\mathcal B} {} $ for some finite $\mathcal B \subseteq \mathcal A$. Then
			\[
			\{ \S f (\phi) \mid \phi \in \mathcal A  \} = \convclos { \{\S f (\psi) \mid \psi \in \mathcal B \}} {}.
			\]
		\end{prop}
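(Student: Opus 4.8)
The plan is to reduce the statement to the elementary fact that $\S f$ is a morphism of $\mon S$-algebras---it is precisely $\F{\mon S}(f)\colon(\mon S X,\muS_X)\to(\mon S Y,\muS_Y)$---and that any morphism in $\EM{\mon S}$ transports convex combinations along itself. Recalling Definition~\ref{def:convex closure S-ALGEBRA}, the convex closure on the left-hand side, $\convclos{\mathcal B}{}$, is formed in $(\mon S X,\muS_X)$, whereas the one on the right, $\convclos{\{\S f(\psi)\mid\psi\in\mathcal B\}}{}$, is formed in $(\mon S Y,\muS_Y)$; the whole proof is just a matter of carrying convex combinations across $\S f$ in both directions.

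First I would record the auxiliary observation that for every morphism $h\colon(Z,c)\to(W,d)$ in $\EM{\mon S}$, all $\phi_1,\dots,\phi_n\in Z$ and all $\lambda_1,\dots,\lambda_n\in S$ with $\sum_{i=1}^n\lambda_i=1$, one has $h\bigl(\sum_{i=1}^n\lambda_i\cdot^c\phi_i\bigr)=\sum_{i=1}^n\lambda_i\cdot^d h(\phi_i)$. This is immediate from the compatibility $h\circ c=d\circ\mon S h$ together with the description of the semimodule operations in terms of the structure map given in~\eqref{eqn:semimodule associated to S-algebra}; equivalently, writing an element of $\convclos{\mathcal B}{\muS_X}$ as $\muS_X(\xi)$ for some $\xi\in\mon S^2 X$ with $\supp\xi\subseteq\mathcal B$ and $\sum\xi=1$, the claim follows from naturality of $\muS$, since $\S f(\muS_X(\xi))=\muS_Y(\mon S^2 f(\xi))$ and $\mon S^2 f(\xi)$ has support contained in $\{\S f(\psi)\mid\psi\in\mathcal B\}$ and total mass $1$. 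Specialising to $h=\S f$, $c=\muS_X$, $d=\muS_Y$ yields the key identity $\S f\bigl(\sum_i\lambda_i\cdot\beta_i\bigr)=\sum_i\lambda_i\cdot\S f(\beta_i)$.

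The two inclusions are then routine. For $\subseteq$: if $\phi\in\mathcal A=\convclos{\mathcal B}{}$ then $\phi=\sum_{i=1}^n\lambda_i\cdot\beta_i$ with $\beta_i\in\mathcal B$ and $\sum_i\lambda_i=1$, so $\S f(\phi)=\sum_i\lambda_i\cdot\S f(\beta_i)$ exhibits $\S f(\phi)$ as a convex combination of elements of $\{\S f(\psi)\mid\psi\in\mathcal B\}$, hence $\S f(\phi)\in\convclos{\{\S f(\psi)\mid\psi\in\mathcal B\}}{}$. For $\supseteq$: any element of $\convclos{\{\S f(\psi)\mid\psi\in\mathcal B\}}{}$ has the form $\sum_{i=1}^n\lambda_i\cdot\S f(\beta_i)$ with $\beta_i\in\mathcal B$ and $\sum_i\lambda_i=1$ (choosing, for each generator, a preimage in $\mathcal B$), and by the key identity this equals $\S f\bigl(\sum_i\lambda_i\cdot\beta_i\bigr)$ with $\sum_i\lambda_i\cdot\beta_i\in\convclos{\mathcal B}{}=\mathcal A$, so it lies in $\{\S f(\phi)\mid\phi\in\mathcal A\}$.

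I do not expect a genuine obstacle here: the hypothesis that $\mathcal B$ be finite plays no role in the argument (it matters only because this proposition is used, in the proof of Proposition~\ref{prop:convpowfS is a monad}, to show that $\convpowS(f)$ restricts to a map $\convpowfS(X)\to\convpowfS(Y)$), and the only point demanding a little care is bookkeeping which semimodule each convex combination is taken in. In fact the statement is exactly the analogue of Proposition~\ref{prop:image along a of convex closures} with the structure map $a$ replaced by the $\mon S$-algebra morphism $\S f$, and it could alternatively be deduced from that proposition by factoring $\S f=\muS_Y\circ\mon S(\eta^{\mon S}_Y\circ f)$ and combining Proposition~\ref{prop:image along a of convex closures} (for $\muS_Y$) with the same homomorphism observation applied to $\mon S(\eta^{\mon S}_Y\circ f)$.
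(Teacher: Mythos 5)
Your proof is correct and follows essentially the same route as the paper's: the forward inclusion by pushing a convex combination forward along $\S f$ (which, as you note, is exactly naturality of $\muS$ applied in the form $\S f(\muS_X(\Psi))=\muS_Y(\S^2 f(\Psi))$), and the reverse inclusion by choosing a preimage in $\mathcal B$ for each generator and applying the same identity. The observations that finiteness of $\mathcal B$ is not needed and that the statement is an instance of the general fact that $\EM{\mon S}$-morphisms commute with convex closures are accurate, though the paper does not make them explicit.
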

		\begin{proof}
			For the left to right inclusion: let $\phi \in \mathcal A = \convclos {\mathcal B} {}$. Then there exists $\Psi \in \S^2 X$ such that $\sum_{\chi \in \S X} \Psi(\chi) =1$, $\supp \Psi \subseteq \mathcal B$, $\phi = \muS_X(\Psi)$. We have to prove that there is a $\Psi' \in \S^2 Y$ such that $\sum_{\chi \in \S Y} \Psi'(\chi) = 1$, $\supp \Psi' \subseteq \{\S f (\psi) \mid \psi \in \mathcal B \}$, $\muS_Y(\Psi')=\phi$.
			
			Now, because of the naturality of $\muS$, we have
			\[
			\S f (\phi) = \S f (\muS_X(\Psi)) = \muS_Y(\S^2 f (\Psi)).
			\]
			One can easily see that $\S^2 f (\Psi)$ works for our desired $\Psi'$.
			
			Vice versa, let $\Psi' \in \S^2 Y$ be such that $\sum_{\chi \in \S Y} \Psi'(\chi) = 1$ and $\supp \Psi' \subseteq \{\S f (\psi) \mid \psi \in \mathcal B \}$. We have to show that there is $\phi \in \mathcal A$ such that $\muS_Y (\Psi') = \S f (\phi)$.
			
			We have that $\Psi'$ is of the form
			\[
			\left(
			\begin{tikzcd}[row sep=0em]
				\S f (\psi_1) \ar[r,|->] & \Psi'(\S f (\psi_1)) \\
				\vdots & \vdots \\
				\S f (\psi_n) \ar[r,|->] & \Psi'(\S f (\psi_n)) 
			\end{tikzcd}
			\right)
			\]
			Then that means that $\Psi'=\S(\S f)(\Psi)$ where $\Psi$ is defined as
			\[
			\Psi = \left(
			\begin{tikzcd}[row sep=0em]
				\psi_1 \ar[r,|->] & \Psi(\S f (\psi_1)) \\
				\vdots & \vdots \\
				\psi_n \ar[r,|->] & \Psi(\S f (\psi_n))
			\end{tikzcd}
			\right) \in \S^2 X
			\]
			Then, again by naturality of $\muS$, we have
			\[
			\muS_Y(\Psi')=\muS_Y(\S^2 f (\Psi)) = \S f (\muS_X (\Psi))
			\]
			and $\phi=\muS_X (\Psi)$ is indeed in $\mathcal A$ because $\mathcal A = \convclos {\mathcal B} {}$.
		\end{proof}
		
		This tells us that $\convpowfS$ is an endofunctor on $\Set$. Next, $\eta^\convpowS_X (x) = \{ \Delta_x \}$ and $\{\Delta_x\}$ is obviously finitely generated, therefore $\eta^\convpowS$ corestricts to $\convpowfS$. How about $\mu^\convpowS$?
		
		Recall that $\mu^\convpowS_X \colon \convpowS \convpowS X \to \convpowS X$ is defined for every $\mathscr A$ convex subset of $\S \bigl( \convpow (\S X) \bigr)$ as
		\[
		\mu^{\convpowS}_X (\mathscr A) = \bigcup_{\Omega \in \mathscr A} \{ \muS_X (\Psi) \mid \Psi \in \choice \Omega \}
		\]
		where 
		\[
		\choice\Omega = \{ \Psi \in \S^2 X \mid \forall \mathcal A \in \supp \Omega \ldotp \exists u_{\mathcal A} \in \mathcal A \ldotp \forall \phi \in \S X \ldotp \Psi(\phi) = \sum_{\substack{\mathcal A \in \supp \Phi \\ \phi = u_{\mathcal A}}} \Omega(\mathcal A) \}
		\]
		We aim to prove that $\bigcup_{\Omega \in \mathscr A} \{ \muS_X (\Psi) \mid \Psi \in \choice \Omega \}$ is, in fact, finitely generated in the hypothesis that $\mathscr A \in \convpowfS \convpowfS X$. We will achieve this in three steps.
		\begin{itemize}
			\item \textbf{Step 1}: let $\mathscr B$ be a finite subset of $\mathscr A$ such that $\mathscr A = \convclos {\mathscr B} {}$. Then we prove that
			\[
			\bigcup_{\Omega \in \mathscr A} \{ \muS_X (\Psi) \mid \Psi \in \choice \Omega \} = 
			\convclos { \bigcup_{\Theta \in \mathscr B} \{ \muS_X (\Xi) \mid \Xi \in \choice \Theta \} } {\muS_X}
			\]
			showing therefore that we can reduce ourselves to a finite union.
			\item \textbf{Step 2}: we prove that each $\{ \muS_X (\Xi) \mid \Xi \in \choice \Theta \}$ as of Step 1 is convex and finitely generated.
			\item \textbf{Step 3}: we prove that the convex closure of a finite union of convex and finitely generated sets is in turn finitely generated.
		\end{itemize}
		The next three lemmas will perform each step.
		\begin{lem}
			Let $\mathscr A \in \convpowfS \convpowfS X$ and let $\mathscr B$ be a finite subset of $\mathscr A$ such that $\mathscr A = \convclos {\mathscr B} {}$. Then
			\[
			\bigcup_{\Omega \in \mathscr A} \{ \muS_X (\Psi) \mid \Psi \in \choice \Omega \} = 
			\convclos { \bigcup_{\Theta \in \mathscr B} \{ \muS_X (\Xi) \mid \Xi \in \choice \Theta \} } {\muS_X}.
			\]
		\end{lem}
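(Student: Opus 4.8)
The plan is to avoid unfolding $\choice{\cdot}$ altogether and instead exploit that $\mu^{\convpowS}_X$ preserves arbitrary joins. By the associativity law of the monad $\convpowS$, the map $\mu^{\convpowS}_X \colon \convpowS\convpowS X \to \convpowS X$ is a morphism of $\convpowS$-algebras from $(\convpowS\convpowS X,\mu^{\convpowS}_{\convpowS X})$ to $(\convpowS X,\mu^{\convpowS}_X)$. Through the isomorphism $\EM{\convpowS}\cong\Alg(\Sigma,E)$ obtained from Proposition~\ref{prop:delta algebras isomorphic to algebras composite monad} and Theorem~\ref{thm:delta algebras isomorphic to (Sigma,E)-algebras} (that is, via Theorem~\ref{thm:pres}), such a morphism is in particular a homomorphism for the infinitary operations $\Sup_I$; and on the free algebras $\convpowS\convpowS X$ and $\convpowS X$ these joins are computed, as recorded in Section~\ref{subsec:proof of Kleisli category is CPPO enriched}, by $\Sup_{i\in I}\mathcal A_i = \convclos{\bigcup_{i\in I}\mathcal A_i}{}$.

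With this in hand, I would first observe that for each $\Theta\in\mathscr B$ the singleton $\{\Theta\}\subseteq\S(\convpowS X)$ is a convex subset (indeed $\convclos{\{\Theta\}}{\muS_{\convpowS X}}=\{\Theta\}$), hence an element of $\convpowS\convpowS X$, so that the hypothesis $\mathscr A=\convclos{\mathscr B}{}$ rewrites as the finite join
\[
\mathscr A=\convclos{\textstyle\bigcup_{\Theta\in\mathscr B}\{\Theta\}}{\muS_{\convpowS X}}=\Sup_{\Theta\in\mathscr B}\{\Theta\}
\]
in $\convpowS\convpowS X$. Applying $\mu^{\convpowS}_X$, using that it preserves this join and that $\mu^{\convpowS}_X(\{\Theta\})=\bigcup_{\Omega\in\{\Theta\}}\alpha_{\muS_X}(\Omega)=\alpha_{\muS_X}(\Theta)=\{\muS_X(\Xi)\mid\Xi\in\choice\Theta\}$ by the definitions in~\eqref{composite unit mult} and~\eqref{eq:alphaa}, we obtain
\[
\bigcup_{\Omega\in\mathscr A}\{\muS_X(\Psi)\mid\Psi\in\choice\Omega\}=\mu^{\convpowS}_X(\mathscr A)=\Sup_{\Theta\in\mathscr B}\alpha_{\muS_X}(\Theta)=\convclos{\textstyle\bigcup_{\Theta\in\mathscr B}\{\muS_X(\Xi)\mid\Xi\in\choice\Theta\}}{\muS_X},
\]
which is exactly the claimed equality.

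The only points requiring care are bookkeeping ones: that singletons in $\S(\convpowS X)$ are convex and that the join in $\convpowS\convpowS X$ is the convex closure of the union — both immediate from the explicit complete-semilattice structure on free $\convpowS$-algebras — together with the justification that $\mu^{\convpowS}_X$ is a $(\Sigma,E)$-homomorphism, which is where Theorem~\ref{thm:pres} is invoked. If one prefers a proof self-contained within Appendix~\ref{app6}, not routing through the algebraic presentation, the statement can be established directly: for the inclusion $\supseteq$, the left-hand side is convex in $(\S X,\muS_X)$ by Propositions~\ref{prop:union of convex family of convex subsets is convex} and~\ref{prop:image along a of convex closures} and contains each $\alpha_{\muS_X}(\Theta)$ with $\Theta\in\mathscr B$; for $\subseteq$, write an arbitrary $\Omega\in\mathscr A=\convclos{\mathscr B}{}$ as $\muS_{\convpowS X}(\Lambda)$ with $\sum\Lambda=1$ and $\supp\Lambda\subseteq\mathscr B$, use the $\S$-algebra identity $\alpha_{\muS_X}\circ\muS_{\convpowS X}=\alpha_{\muS_X}\circ\S(\alpha_{\muS_X})$ to rewrite $\alpha_{\muS_X}(\Omega)=\alpha_{\muS_X}(\S(\alpha_{\muS_X})(\Lambda))$, and then unfold $\choice{\cdot}$ to see that every element of $\alpha_{\muS_X}(\Omega)$ is a convex combination, performed in $(\S X,\muS_X)$, of elements of $\bigcup_{\Theta\in\mathscr B}\alpha_{\muS_X}(\Theta)$. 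I expect this elementary version to be the more laborious route and the main obstacle; the join-preservation argument is the clean one.
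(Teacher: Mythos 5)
Your main argument is correct, and it takes a genuinely different route from the paper's. The paper proves this lemma by direct computation: it writes an arbitrary $\Omega\in\mathscr A$ as $\muS_{\convpowS X}(\Theta_1\mapsto\sigma_1,\dots,\Theta_t\mapsto\sigma_t)$ with $\Theta_l\in\mathscr B$ and $\sum_l\sigma_l=1$, decomposes each $\Psi\in\choice\Omega$ into functions $\Xi_l\in\choice{\Theta_l}$, and verifies by hand that $\muS_X(\Psi)$ equals $\muS_X$ of the convex combination $\bigl(\muS_X(\Xi_1)\mapsto\sigma_1,\dots,\muS_X(\Xi_t)\mapsto\sigma_t\bigr)$, the converse inclusion being immediate from convexity of the left-hand side. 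You instead recognise both sides as images under $\mu^{\convpowS}_X$: the left-hand side is $\mu^{\convpowS}_X(\mathscr A)$ by \eqref{composite unit mult} and \eqref{eq:alphaa}, the hypothesis $\mathscr A=\convclos{\mathscr B}{}$ says $\mathscr A=\Sup_{\Theta\in\mathscr B}\{\Theta\}$ in the free algebra $\convpowS\convpowS X$ (singletons being convex), and $\mu^{\convpowS}_X$ preserves this join because monad associativity makes it a $\convpowS$-algebra morphism, hence a $(\Sigma,E)$-homomorphism through Theorem~\ref{thm:pres}, with joins on free algebras given by convex closure of unions as recorded in Section~\ref{subsec:proof of Kleisli category is CPPO enriched}. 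All of these ingredients are established before Proposition~\ref{prop:convpowfS is a monad} is invoked, so there is no circularity. What your route buys is transparency and brevity --- the lemma becomes literally the statement that a complete-semilattice homomorphism sends the join of $\mathscr B$ to the join of the images --- at the price of leaning on the presentation machinery of Section~\ref{sec:the composite monad}; the paper's computation is longer but self-contained at the level of finitely supported functions, which is the style of the rest of Appendix~\ref{app6}. Your sketched ``elementary'' fallback is essentially the paper's own proof.
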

		\begin{proof}
			Let $\Omega \in \mathscr A = \convclos{\mathscr B}{}$. We have that
			\[
			\Omega = \muS_{\convpowS X} \left(
			\begin{tikzcd}[row sep=0em]
				\Theta_1 \ar[r,|->] & \sigma_1 \\
				\vdots & \vdots \\
				\Theta_t \ar[r,|->] & \sigma_t
			\end{tikzcd}
			\right)
			\]
			where $\Theta_i \in \mathscr B$ and $\sum_{l=1}^t \sigma_l = 1$. Notice that $\supp \Omega = \bigcup_{l=1}^t \supp \Theta_l$. Now, if $\supp \Omega = \{ \mathcal A_1, \dots, \mathcal A_n \} $ say, we have that any $\Psi \in \choice \Omega$ is of the form
			\[
			\Psi =  \left(
			\begin{tikzcd}[row sep=0em]
				\phi_1 \ar[r,|->] & \Omega(\mathcal A_1)=\sum_{l=1}^t \sigma_l \cdot \Theta_l (\mathcal A_1) \\
				\vdots & \vdots  \\
				\phi_n \ar[r,|->] & \Omega(\mathcal A_n) = \sum_{l=1}^t \sigma_l \cdot \Theta_l (\mathcal A_n)
			\end{tikzcd}
			\right)
			\]
			where each $\phi_i \in \mathcal A_i \subseteq \S X$. This leads us to define, for each $l \in \natset t$, a function $\Xi_l \in \S^2 X$ as:
			\[
			\Xi_l = \left(
			\begin{tikzcd}[row sep=0em]
				\phi_1 \ar[r,|->] & \Theta_l (\mathcal A_1) \\
				\vdots & \vdots \\
				\phi_n \ar[r,|->] & \Theta_l (\mathcal A_n)
			\end{tikzcd}
			\right)
			\]
			Notice that, in fact, $\Xi_l \in \choice{\Theta_l}$. Indeed, fixed $l$, we have that $\supp \Theta_l \subseteq \supp \Omega$, so it can be the case that $\Theta_l(\mathcal A_i)=0$ for some $i \in \natset n$, in which case we have $\Xi_l(\phi_i) = 0$. Nonetheless, for each $\mathcal A_i$ in $\supp {\Theta_l}$, the function $\Xi_l$ chooses one of its elements and associates to it $\Theta_l (\mathscr A_i)$.
			
			Define then
			\[
			\Psi'=\left(
			\begin{tikzcd}[row sep=0em]
				\muS_X(\Xi_1) \ar[r,|->] & \sigma_1 \\
				\vdots & \vdots \\
				\muS_X(\Xi_t) \ar[r,|->] & \sigma_t
			\end{tikzcd}
			\right)
			\]
			Then clearly $\sum_{\chi \in \S X} \Psi'(\chi) = 1$, and 
			\[
			\supp \chi \subseteq \bigcup_{l=1}^t \{ \muS_X(\Xi) \mid \Xi \in \choice {\Theta_l} \} \subseteq \bigcup_{\Theta \in \mathscr B} \{ \muS_X (\Xi) \mid \Xi \in \choice \Theta \}.
			\]
			It is a matter of direct calculations to show that $\muS_X (\Psi) = \muS_X (\Psi')$. This proves the left-to-right inclusion. The vice versa is immediate to see, recalling that we know that the left-hand side is convex.
		\end{proof}
		
		Applying the following lemma for the $\S$-algebra $(\S X, \muS_X)$, we obtain Step 2.
		\begin{lem}
			Let $(X,a)$ in $\EM \S$. Then for all $\Phi \in \S \P_{cf}^a X$ we have that
			\[
			\{ a(\phi) \mid \phi \in \choice\Phi \} = \convclos { \{ a(\psi) \mid \psi \in \choice{\Phi'} \} } a
			\]
			where, if $\supp \Phi = \{ A_1,\dots,A_n \}$ say, and for all $i$ $A_i = \convclos {B_i} a$ with $B_i \subseteq A_i$ finite, then
			\[
			\Phi' = (B_1 \mapsto \Phi(A_1), \dots, B_n \mapsto \Phi(A_n)).
			\]
		\end{lem}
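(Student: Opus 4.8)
The plan is to prove both inclusions by a direct computation inside the $\S$-algebra $(X,a)$, using the generalised distributivity Lemma~\ref{lemma:generalised distributivity} to control the combinatorics, in close analogy with the proof of Theorem~\ref{thm:delta for positive refinable semifields} and the working Example~\ref{ex:working example delta=delta' part 2}. First I would dispose of the degenerate cases: if $\emptyset\in\supp\Phi$ then some $B_i$ is $\emptyset$, so no choice function exists for either $\Phi$ or $\Phi'$ and both sides are empty; if $\supp\Phi=\emptyset$ then $\Phi'$ is the null function and both sides reduce to $\{0^a\}$. So from now on assume $\supp\Phi=\{A_1,\dots,A_n\}$ with $n\geq 1$ and every $A_i\neq\emptyset$. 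A useful preliminary remark is that the $B_i$ are pairwise distinct: if $B_i=B_k$ then $A_i=\convclos{B_i}{a}=\convclos{B_k}{a}=A_k$, contradicting the distinctness of the elements of $\supp\Phi$; hence $\Phi'$ is genuinely the element of $\S\P X$ with $\supp\Phi'=\{B_1,\dots,B_n\}$ and $\Phi'(B_i)=\Phi(A_i)$. I would also record, via~\eqref{eq:S-algebra associated to semimodule} and the semimodule axiom $(D_s2)$, the identity $a(\S(u)(\Phi))=\sum_{i=1}^n\Phi(A_i)\cdot^a u(A_i)$, valid for any choice function $u$ with $u(A_i)\in A_i$ (sums and scalars being those of the semimodule structure~\eqref{eqn:semimodule associated to S-algebra} on $X$), and the analogous identity for $\Phi'$.

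For the inclusion $\subseteq$, take $\phi=\S(u)(\Phi)\in\choice\Phi$. Since $u(A_i)\in A_i=\convclos{B_i}{a}$, Definition~\ref{def:convex closure S-ALGEBRA} lets me write $u(A_i)=\sum_{j=1}^{s_i}\lambda^i_j\cdot^a x^i_j$ with $x^i_1,\dots,x^i_{s_i}\in B_i$ and $\sum_{j}\lambda^i_j=1$. For each $w\in\prod_{i=1}^n\natset{s_i}$ I take the element $\psi_w\in\choice{\Phi'}$ induced by the choice $B_i\mapsto x^i_{w_i}$, so that $a(\psi_w)=\sum_i\Phi(A_i)\cdot^a x^i_{w_i}$, and set $\Lambda_w:=\prod_{i=1}^n\lambda^i_{w_i}$. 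By Lemma~\ref{lemma:generalised distributivity}, $\sum_w\Lambda_w=\prod_i\sum_j\lambda^i_j=1$, so $(\Lambda_w)_w$ is a convex combination; expanding $\sum_w\Lambda_w\cdot^a a(\psi_w)$ with the semimodule axioms and then collapsing the partial products by Lemma~\ref{lemma:generalised distributivity} applied to the index set $L=\natset n\setminus\{i\}$ yields $\sum_w\Lambda_w\cdot^a a(\psi_w)=\sum_i\Phi(A_i)\cdot^a u(A_i)=a(\phi)$. Hence $a(\phi)\in\convclos{\{a(\psi)\mid\psi\in\choice{\Phi'}\}}{a}$.

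For the inclusion $\supseteq$, write an element of the right-hand side as $z=\sum_{l=1}^m\lambda_l\cdot^a a(\psi_l)$ with $\psi_l\in\choice{\Phi'}$ and $\sum_l\lambda_l=1$. Expressing each $\psi_l$ through a choice function $v_l$ with $v_l(B_i)\in B_i$ and reordering the double sum, one gets $z=\sum_{i=1}^n\Phi(A_i)\cdot^a y_i$, where $y_i:=\sum_{l=1}^m\lambda_l\cdot^a v_l(B_i)$ is a convex combination of points of $B_i$ and therefore lies in $\convclos{B_i}{a}=A_i$. Then $u(A_i):=y_i$ is a legitimate choice function for $\Phi$ (the $A_i$ being distinct), and $a(\S(u)(\Phi))=\sum_i\Phi(A_i)\cdot^a y_i=z$, so $z\in\{a(\phi)\mid\phi\in\choice\Phi\}$.

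The $\supseteq$ direction is straightforward; the only genuinely delicate point is in the $\subseteq$ direction, namely the bookkeeping needed to justify the two uses of Lemma~\ref{lemma:generalised distributivity} when expanding $\sum_w\Lambda_w\cdot^a a(\psi_w)$ — keeping the indices $i$, $j=w_i$ and the tuple $w$ straight and checking that for fixed $i$ the sum $\sum_{w:\, w_i=j}\Lambda_w$ telescopes to $\lambda^i_j$. This is precisely the manoeuvre already carried out in the proof of Theorem~\ref{thm:delta for positive refinable semifields} (equations $(\ast_1)$, $(\ast_2)$ and the subsequent chain of identities), so no essentially new ingredient is required.
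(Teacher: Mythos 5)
Your proof is correct, and for the substantive inclusion $\subseteq$ it follows essentially the same route as the paper's: write each $u(A_i)$ as a convex combination of points of $B_i$, index the induced elements of $\choice{\Phi'}$ by tuples $w\in\prod_i\natset{s_i}$ weighted by $\Lambda_w=\prod_i\lambda^i_{w_i}$, and apply Lemma~\ref{lemma:generalised distributivity} twice (once to see that the weights sum to $1$, once to collapse the partial products over $k\neq i$), the only cosmetic difference being that you compute directly in the semimodule $X$ whereas the paper builds $\Psi'\in\S^2 X$, proves $\muS_X(\Psi')=\muS_X(\Psi)$, and only then applies $a$. Your explicit handling of the degenerate cases and of the reverse inclusion $\supseteq$ (which the paper leaves implicit, presumably because $\choice{\Phi'}\subseteq\choice{\Phi}$ and the left-hand side is known to be convex) goes beyond what the paper writes down, and is correct.
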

		\begin{proof}
			Let $\phi \in \choice \Phi$. Then we can write $\phi$ as
			\[
			\phi = (u_1 \mapsto \Phi(A_1), \dots, u_n \mapsto \Phi(A_n))
			\]
			where $u_i \in A_i$ for all $i$. Notice that it is possible for $u_i = u_j$ for $i \ne j$: in that case, the notation above implicitly says that $u_i \mapsto \Phi(A_i) + \Phi(A_j)$. Now, since $A_i = \convclos {B_i} a$, we have that $u_i = a (\chi_i)$ for some $\chi_i \in \S X$ such that $\sum \chi_i(x) = 1$ and $\supp \chi_i \subseteq B_i$. So:
			\begin{align*}
				\phi &= (a(\chi_1) \mapsto \Phi(A_1), \dots, a(\chi_n) \mapsto \Phi(A_n)) \\
				&= \S (a) \bigl( \chi_1 \mapsto \Phi(A_1), \dots, \chi_n \mapsto \Phi(A_n) \bigr)
			\end{align*}
			Call $\Psi = \bigl( \chi_1 \mapsto \Phi(A_1), \dots, \chi_n \mapsto \Phi(A_n) \bigr)$. Then we just said that $\phi = \S (a) (\Psi)$. We can write $\Psi$ more explicitly, by listing down the action of each $\chi_i$:
			\[
			\Psi = \left(
			\begin{tikzcd}[row sep=0em,column sep=1em]
				\chi_1=\left(
				\parbox{1.7cm}{$x^1_1 \mapsto \lambda^1_1$ \\ \vdots \\ $x^1_{s_1} \mapsto \lambda^1_{s_1}$}
				\right)
				\ar[r,|->] & \Phi(A_1), & \dots &
				\chi_n=\left(
				\parbox{1.8cm}{$x^n_1 \mapsto \lambda^n_1$ \\ \vdots \\ $x^n_{s_n} \mapsto \lambda^n_{s_n}$}
				\right)
				\ar[r,|->] & \Phi(A_n)
			\end{tikzcd}
			\right)
			\]
			where, for each $k=1,\dots,n$, $\sum_{j=1}^{s_k} \lambda^k_j = 1$ and for all $j=1,\dots,s_k$ we have $x^k_j \in B_k$.
			
			Now, define for all $w \in \prod_{k=1}^n \natset{s_k}$, a $n$-tuple whose $j$-th entry is a number between $1$ and $s_j$, the function
			\[
			\psi_w = \bigl( x^1_{w_1} \mapsto \Phi(A_1), \dots, x^n_{w_n} \mapsto \Phi(A_n) \bigr), \text{ \ie } \psi_w(x) = \sum_{\substack{i \in \natset n \\ x=x^i_{w_i}}} \Phi(A_i).
			\]
			Define also
			\[
			\Psi' = ( \psi_w \mapsto \prod_{k=1}^n \lambda^k_{w_k} )_{w \in \prod \natset {s_k}} \text{ \ie } \Psi'(\psi) = \sum_{\substack{w \in \prod \natset{s_k} \\ \psi=\psi_w }} \prod_{k=1}^n \lambda^k_{w_k}.
			\]
			Notice that, by Lemma~\ref{lemma:generalised distributivity}, we have that
			\[
			\sum_{w \in \prod \natset{s_k}} \prod_{k=1}^n \lambda^k_{w_k} = \prod_{k=1}^n \sum_{j=1}^{s_k} \lambda^k_j = \prod_{k=1}^n 1 = 1.
			\]
			This immediately implies that $\sum_{\psi \in \S X} \Psi'(\psi)=1$. Next, we show that $\muS_X (\Psi') = \muS_X(\Psi)$.
			\begin{align*}
				\muS_X(\Psi')(x) &= \sum_{\psi \in \S X} \Psi'(\psi) \cdot \psi(x) \\
				&= \sum_w \Bigl[ \prod_{k=1}^n \lambda^k_{w_k} \cdot \sum_{\substack{i \in \natset n \\ x=x^i_{w_i} }} \Phi(A_i)  \Bigr] \displaybreak[0] \\
				&= \sum_w \sum_{\substack{i \in \natset n \\ x=x^i_{w_i}}} \Bigl[ (\prod_{k=1}^n \lambda^k_{w_k} ) \cdot \Phi(A_i)  \Bigr] \displaybreak[0]\\
				&= \sum_{i \in \natset n} \sum_{\substack{ w \\ x=x^i_{w_i}}} \Bigl[ \Phi(A_i) \cdot \underbrace{ \lambda^i_{w_i}}_{=\chi_i(x^i_{w_i}) = \chi_i(x)} \cdot \prod_{\substack{k \in \natset n \\ k \ne i}} \lambda^k_{w_k} \Bigr]\displaybreak[0] \\
				&= \sum_{i=1}^n \Phi(A_i) \cdot \chi_i(x) \cdot \underbrace{\sum_{\substack{w \\ x=x^i_{w_i}}} \prod_{k \ne i} \lambda^k_{w_k}}_{=\prod_{k \ne i} \sum_{j=1}^{s_k} \lambda^k_j = 1} \displaybreak[0] \\
				&= \muS_X(\Psi)(x).
			\end{align*}
			Hence:
			\[
			a(\phi) = a \bigl( \S(a)(\Psi)  \bigr) = a(\muS_X(\Psi)) = a(\muS_X (\Psi')) = a (\S(a)(\Psi'))
			\]
			where $a(\S(a)(\Psi'))$ indeed belongs to $\convclos { \{ a(\psi) \mid \psi \in \choice{\Phi'} \} } a$ because 
			\[
			\supp{(\S(a)(\Psi'))} = \{ a(\psi_w) \mid w \in \prod_{k=1^n} \natset{s_k} \} \subseteq \{ a(\psi) \mid \psi \in \choice{\Phi'} \}
			\]
			and the sum of all its images is $\sum_w \prod_{k=1}^n \lambda^k_{w_k} = 1$. 
		\end{proof}
		
		Notice that if $\Phi' \in \S\P X$ is such that $\supp{\Phi'} \subseteq \P_f (X)$, then $\choice{\Phi'}$ is finite. Finally, the next lemma, when $A$ and $B$ are finite, proves Step 3.
		
		\begin{lem}
			Let $(X,a)$ be in $\EM \S$, $A,B \subseteq X$. Then $\convclos{\convclos A {} \cup \convclos B {} } {} = \convclos{A \cup B} {}$.
		\end{lem}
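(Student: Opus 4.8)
The plan is to observe that the operation $C \mapsto \convclos C a$ on $\pow X$ is a \emph{closure operator}---monotone, extensive ($C \subseteq \convclos C a$) and idempotent---and then to derive the stated identity by the purely formal manipulation that works for any closure operator; nothing specific to semimodules is needed beyond what has already been proved.

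First I would record the three properties. Monotonicity is immediate from Definition~\ref{def:convex closure S-ALGEBRA}: if $C \subseteq D$, then any $\phi \in \S X$ with $\supp \phi \subseteq C$ also has $\supp \phi \subseteq D$, so $\convclos C a \subseteq \convclos D a$. Extensivity follows by taking, for $x \in C$, the Dirac function $\Delta_x$, whose support is $\{x\} \subseteq C$, whose total mass is $1$, and with $a(\Delta_x) = x$. For idempotence, $\convclos{\convclos C a}{a} = \convclos C a$, I would invoke the computation carried out just after Lemma~\ref{lemma:delta of Dirac} in the proof of Theorem~\ref{thm:weaklift}, namely $e_{(X,a)}(C) = \convclos C a$ for the map $e_{(X,a)}$ of~\eqref{eqn:idempotent e}, together with the fact that $e_{(X,a)}$ is idempotent (it arises by splitting, via Theorem~\ref{thm:correspondence weak distributive law with weak extensions-liftings}). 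Alternatively, idempotence can be checked directly, in the style of the left-to-right inclusion of Proposition~\ref{prop:image along a of convex closures}: a convex combination, in $(X,a)$, of elements each of which is itself a convex combination of elements of $C$ is again a convex combination of elements of $C$, the key identity being the $\S$-algebra law $a \circ \muS_X = a \circ \S(a)$.

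With these three facts in hand the two inclusions are routine. For $\subseteq$: from $A \subseteq A \cup B$ and $B \subseteq A \cup B$ together with monotonicity we get $\convclos A a \subseteq \convclos{A \cup B}{a}$ and $\convclos B a \subseteq \convclos{A \cup B}{a}$, hence $\convclos A a \cup \convclos B a \subseteq \convclos{A \cup B}{a}$; applying monotonicity once more and then idempotence gives $\convclos{\convclos A a \cup \convclos B a}{a} \subseteq \convclos{\convclos{A \cup B}{a}}{a} = \convclos{A \cup B}{a}$. For $\supseteq$: by extensivity $A \subseteq \convclos A a$ and $B \subseteq \convclos B a$, so $A \cup B \subseteq \convclos A a \cup \convclos B a$, and monotonicity yields $\convclos{A \cup B}{a} \subseteq \convclos{\convclos A a \cup \convclos B a}{a}$.

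There is essentially no obstacle here; the only step that is not a one-liner is the idempotence of the convex closure operator, and that has already been established, in two different ways, in the development leading up to Theorem~\ref{thm:weaklift}.
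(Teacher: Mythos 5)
Your proof is correct. The paper proves the lemma by a single direct computation: it takes $x \in \convclos{\convclos A {} \cup \convclos B {}}{}$, writes it as $a(\phi)$ with each element of $\supp\phi$ of the form $a(\psi_i)$ where $\supp\psi_i \subseteq A$ or $\supp\psi_i \subseteq B$, and flattens via $a \circ \S(a) = a \circ \muS_X$ to exhibit $x$ as $a(\chi)$ with $\supp\chi \subseteq A \cup B$; the reverse inclusion is dismissed as obvious. You instead factor the statement through the observation that $\convclos{(-)}{a}$ is a closure operator and deduce both inclusions formally from monotonicity, extensivity and idempotence. The mathematical content is the same -- your idempotence step, whether cited from the splitting of the idempotent $e_{(X,a)}$ in the proof of Theorem~\ref{thm:weaklift} or checked directly, is exactly the paper's flattening argument via $a \circ \muS_X = a \circ \S(a)$ -- but your packaging is more modular: it isolates the one nontrivial fact (idempotence, which is already available from the earlier development) and makes the lemma a purely formal consequence valid for any closure operator, which would also dispose of arbitrary (not just binary) unions at no extra cost. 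The only minor caution is that the citation of $e_{(X,a)}(C)=\convclos C a$ lives in the positive-semifield setting of Theorem~\ref{thm:weaklift}; your alternative direct verification of idempotence avoids even that dependency, so there is no gap.
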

		\begin{proof}
			We have:
			\begin{align*}
				\convclos { \convclos A {} \cup \convclos B {} } {} &= \{ a(\phi) \mid \phi \in \S X, \, \sum_{x} \phi(x)=1, \supp \phi \subseteq \convclos A {} \cup \convclos B {} \} \\
				\convclos A {} \cup \convclos B {} &= \{ a(\psi) \mid \psi \in \S X,\, \sum_x \psi(x)=1,\, \supp \psi \subseteq A \text{ or } \supp \psi \subseteq B \} \\
				\convclos{A \cup B} {} &= \{ a(\chi) \mid \chi \in \S X,\, \sum_x \chi(x) =1,\, \supp \chi \subseteq A \cup B \}.
			\end{align*}
			Let then $x \in \convclos{\convclos A {} \cup \convclos B {} } {}$. Then
			\[
			x=a(\phi) = a \left(
			\begin{tikzcd}[row sep=0em]
				a(\psi_1) \ar[r,|->] & \phi(a(\psi_1)) \\
				\vdots \\
				a(\psi_n) \ar[r,|->] & \phi(a(\psi_n))
			\end{tikzcd}
			\right)
			= a(\S (a) (\Phi)) = a (\muS_X (\Phi))
			\]
			where $\Phi = (\psi_1 \mapsto \phi(a(\psi_1)),\dots, \psi_n \mapsto \phi(a(\psi_n))$. Calling $\chi=\muS_X(\Phi)$, one can easily check that $\sum_x \chi(x)=1$ and that $\supp \chi \subseteq A \cup B$, hence $\convclos{\convclos A {} \cup \convclos B {} } {} \subseteq \convclos{A \cup B} {}$. The other inclusion is obvious, given that $A \cup B \subseteq \convclos A {} \cup \convclos B {}$. 
		\end{proof}
		
		This concludes the proof of Proposition~\ref{prop:convpowfS is a monad}.  
	\end{document}